\newtheorem{theorem}{Theorem}[section]
\newtheorem{corollary}[theorem]{Corollary}
\newtheorem{lemma}[theorem]{Lemma}
\newtheorem{definition}[theorem]{Definition}
\newtheorem{remark}[theorem]{Remark}
\newtheorem{observation}[theorem]{Observation}
\newtheorem{claim}[theorem]{Claim}
\newcommand{\R}{\mathbb{R}}
\newcommand{\FF}{\mathbb{F}}
\DeclareMathOperator{\Ima}{im}
\newcommand{\floor}[1]{\lfloor {#1} \rfloor}
\newcommand{\ceil}[1]{\lceil {#1} \rceil}
\newcommand\norm[1]{\left\lVert#1\right\rVert}
\newcommand\hnorm[1]{\left|#1\right|}
\newcommand{\eps}{\varepsilon}
\newcommand{\cG}{\mathcal{G}}
\newcommand{\cC}{\mathcal{C}}
\newcommand{\nonl}{\renewcommand{\nl}{\let\nl\oldnl}}
\author{Max Hopkins\thanks{Department of Computer Science and Engineering, UCSD, CA 92092. Email: \texttt{nmhopkin@eng.ucsd.edu}. Supported by NSF Award DGE-1650112.} \and Ting-Chun Lin\thanks{Department of Physics, UCSD, CA 92092, and Hong Hai Research Institute, Taipei. Email: \texttt{til022@ucsd.edu}.}}
\title{Explicit Lower Bounds Against $\Omega(n)$-Rounds of Sum-of-Squares}
\begin{document}
\maketitle

\begin{abstract}
We construct an explicit family of 3-XOR instances hard for $\Omega(n)$-levels of the Sum-of-Squares (SoS) semi-definite programming hierarchy. Not only is this the first explicit construction to beat brute force search (beyond low-order improvements (Tulsiani 2021, Pratt 2021)), combined with standard gap amplification techniques it also matches the (optimal) hardness of random instances up to imperfect completeness (Grigoriev TCS 2001, Schoenebeck FOCS 2008). 

Our result is based on a new form of \emph{small-set} high dimensional expansion (SS-HDX) inspired by recent breakthroughs in locally testable and quantum LDPC codes. Adapting the recent framework of Dinur, Filmus, Harsha, and Tulsiani (ITCS 2021) for SoS lower bounds from the Ramanujan complex to this setting, we show any (bounded-degree) SS-HDX can be transformed into a highly unsatisfiable 3-XOR instance that cannot be refuted by $\Omega(n)$-levels of SoS. We then show Leverrier and Z\'emor's (Arxiv 2022) recent qLDPC construction gives the desired explicit family of bounded-degree SS-HDX. Incidentally, this gives the strongest known form of bi-directional high dimensional expansion to date.

\end{abstract}
\newpage 
\section{Introduction}\label{sec:intro}
The Sum-of-Squares (SoS) semi-definite programming (SDP) hierarchy is one of the most powerful and widely studied algorithmic frameworks for approximating constraint satisfaction problems (CSPs) in theoretical computer science, yet very little is known about the structure of instances that are \textit{hard} for the paradigm. Indeed, while it has long been known that \textit{random} instances of CSPs are hard for Sum-of-Squares \cite{grigoriev2001linear,schoenebeck2008linear,tulsiani2009csp,barak2015sum,chan2016approximation,kothari2017sum}, there are essentially no \textit{explicit} constructions of hard instances better than brute force search \cite{dinur2020explicit,Tulsianibrute,PrattCommunication}. Leveraging recent breakthroughs in locally testable \cite{dinur2021,lin2022c} and quantum low-density parity-check (qLDPC) codes \cite{panteleev2021asymptotically, leverrier2022quantum}, we resolve this problem, giving the first explicit family of highly unsatisfiable CSPs that cannot be refuted by $\Omega(n)$-rounds of Sum-of-Squares.
\begin{theorem}[Main Result: Explicit 3-XOR Instances Hard for SoS]\label{intro:thm-main}
There exist constants $\mu_1,\mu_2 \in (0,1)$ and an infinite family of 3-XOR instances constructable in deterministic polynomial time such that:
\begin{enumerate}
    \item No assignment satisfies more than a $1-\mu_1$ fraction of constraints
    \item No instance can be refuted by $\mu_2n$ levels of the corresponding Sum-of-Squares SDP Relaxation.
\end{enumerate}
\end{theorem}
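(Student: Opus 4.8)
The plan is to follow, and substantially strengthen, the high-dimensional-expander route to SoS lower bounds of Dinur, Filmus, Harsha, and Tulsiani (DFHT), whose use of Ramanujan complexes yields only $n^{\delta}$ rounds because the relevant expansion is known there only up to sets of sublinear size. The fix is a new \emph{small-set} notion of bi-directional high-dimensional expansion (SS-HDX) that supports this expansion up to sets of \emph{linear} size. Concretely, I would define an SS-HDX to be a bounded-degree simplicial complex $X$ (equivalently, a $3$-uniform constraint hypergraph with its face-inclusion structure) that is simultaneously (i) a \emph{local spectral expander} --- every link is a $\lambda$-spectral expander for a small constant $\lambda$ --- and (ii) a \emph{small-set bi-directional (co)boundary expander} --- for any set of faces of weight at most $\beta n$, both its boundary and its coboundary have relative weight bounded below by a constant, and moreover $X$ carries a non-trivial $\mathbb{F}_2$-cocycle of linear cosystolic weight. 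Property (ii) is the new ingredient, and it is exactly the structure underlying linear-distance, linearly-sound qLDPC codes; the point is that the scale $\beta n$ is linear in $n$.

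Second, from any bounded-degree SS-HDX $X$ I would build the $3$-XOR instance $\Psi_X$: variables indexed by the vertices of $X$, one constraint $x_u \oplus x_v \oplus x_w = \omega(\{u,v,w\})$ per triangle, where $\omega$ is the explicit non-trivial $\mathbb{F}_2$-cocycle guaranteed by (ii) --- locally a coboundary (so every bounded set of constraints is satisfiable) but cohomologically non-trivial. Item~1 is then immediate: the number of constraints violated by an assignment $x$ equals the weight of the cochain $\delta x + \omega$, which is again a cocycle; since $\omega$ is cohomologically non-trivial so is $\delta x + \omega$, hence its weight is at least the cosystolic weight, a $\mu_1$-fraction of all constraints, by (ii).

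Third, for Item~2 I would construct a degree-$\mu_2 n$ pseudo-expectation $\pe{\cdot}$ for $\Psi_X$ by adapting the DFHT template: $\pe{\cdot}$ is assembled from the uniform ``locally satisfying'' distributions attached to the faces of $X$, glued using the local spectral expansion of (i), and the PSD-ness of the degree-$d$ moment matrix is certified by a spectral/inductive argument whose only failure mode is when $d$ exceeds the size threshold below which every relevant collection of constraints still admits a consistent joint assignment. Small-set expansion forces that threshold to be $\Omega(n)$ --- rather than sublinear, as for Ramanujan complexes --- so the construction survives to $d = \mu_2 n$. I expect the bookkeeping here (tracking how $\lambda$ and $\beta$ jointly lower-bound the least eigenvalue of the moment matrix, and checking that $\pe{\cdot}$ satisfies every $3$-XOR constraint exactly as an SoS axiom) to be delicate but to require careful adaptation rather than new ideas.

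Finally --- and this is where I expect the real obstacle --- one must exhibit an \emph{explicit} infinite family of bounded-degree SS-HDX. I would take the recent Leverrier--Z\'emor qLDPC code (a balanced/lifted product of Tanner codes on expander Cayley graphs), regard its underlying $\mathbb{F}_2$-chain complex and incidence structure as $X$, and verify the axioms: bounded degree is immediate from the LDPC property; local spectral expansion follows from expansion of the base Cayley graphs together with a trickling-down / local-to-global argument; and small-set bi-directional (co)boundary expansion, together with the non-trivial cocycle of linear weight, is the combinatorial reinterpretation of Leverrier--Z\'emor's linear-distance and ``soundness''/confinement guarantees, which control both the $X$- and $Z$-type checks. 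Pinning down exactly the flavor of bi-directional small-set expansion the framework needs, and verifying that all parameters compose to yield a genuine $\Omega(n)$ rounds and a constant $\mu_1$, is where most of the technical work lies; as a by-product this establishes the strongest known form of bi-directional high-dimensional expansion.
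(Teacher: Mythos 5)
Your high-level architecture matches the paper's: define a bi-directional small-set (co)boundary expansion notion, get soundness from cosystolic distance of a cohomologically non-trivial $\beta$, and instantiate with Leverrier--Z\'emor. The soundness argument you give (violations of $x$ equal $|\delta x + \omega|$, which is a non-trivial cocycle, hence of linear weight) is exactly the paper's. But two of your structural choices would break down.

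First, you insist that $X$ be a \emph{simplicial} complex with variables on vertices and one $3$-XOR constraint per triangle, and that it additionally be a local spectral expander. The paper deliberately avoids simplicial complexes: their inherent asymmetry makes good boundary expansion in \emph{both} directions unavailable, which is precisely why DFHT were stuck at sublinear levels. The Leverrier--Z\'emor object is a general $3$-term chain complex $\mathbb{F}_2^{m}\rightleftarrows\mathbb{F}_2^{n}\rightleftarrows\mathbb{F}_2^{m}$ whose boundary maps are Tanner parity-check matrices built from local tensor codes on the squares of a left--right Cayley complex; it has no triangles, no links in the simplicial sense, and no local spectral expansion hypothesis is used anywhere. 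The paper's CSP therefore has variables indexed by $X(0)$ (parity checks) and one constraint per element of $X(1)$ (squares), each involving up to $O(\Delta^2)$ variables --- i.e.\ it is a MAX-$k$-XOR instance, and a separate gadget reduction (splitting each constraint with dummy variables and building a new pseudo-expectation by substitution) is needed to land on $3$-XOR while preserving both soundness and the SoS lower bound. Your plan skips this reduction entirely, and your verification step (``local spectral expansion follows from expansion of the base Cayley graphs via trickling down'') has no referent for this object.

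Second, your completeness argument is not the one that works here, and it also misattributes DFHT's method. Neither DFHT nor this paper glue local distributions with a spectral PSD certification. The actual mechanism is: (i) reformulate small-set boundary expansion as an isoperimetric inequality for small \emph{minimal} chains ($|\partial_1 h|\ge\rho_2|h|$ whenever $|h|\le\rho_1|X(1)|$ and $h$ is closest to $0$ in its coset of $B_1$); (ii) run the Ben-Sasson--Wigderson potential argument over the DAG of any $\oplus$-resolution refutation, assigning each node $v$ the chain $h_v$ recording which constraints were summed and the potential $\kappa(v)=d(h_v,B_1)$, showing leaves have potential $\le 1$, the root (which lies in $Z_1\setminus B_1$) has potential $>\rho_1|X(1)|$ by systolic distance, and $\kappa$ is sub-additive, so some node has medium potential and hence, by the isoperimetric inequality, a linear-width equation; (iii) invoke Schoenebeck's lemma converting a width-$2t$ refutation lower bound into value $1$ for the round-$t$ SoS relaxation. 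Without step (i) --- the minimal-chain isoperimetry, which is exactly what small-set \emph{boundary} (not coboundary) expansion buys you --- your ``size threshold below which constraints admit a consistent assignment'' has no quantitative handle, and the $\Omega(n)$ bound does not follow.
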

Though \Cref{intro:thm-main} only exhibits an `integrality gap' of $1$ v.s $1-\mu_1$ (meaning the instance are $(1-\mu_1)$-satisfiable but \textit{look} fully satisfiable to SoS), combined with standard PCP-like reductions in the SoS hierarchy this gap can be amplified to $1-\varepsilon$ v.s $\frac{1}{2}+\varepsilon$ for any $\varepsilon>0$ \cite{tulsiani2009csp,dinur2020explicit}, which matches the hardness of random $3$-XOR instances up to imperfect completeness \cite{grigoriev2001linear,schoenebeck2008linear}.\footnote{Indeed one can see such a gap is essentially optimal, as a random assignment to any $3$-XOR instance will satisfy $1/2$ the constraints in expectation.} In fact, it is worth noting that \Cref{intro:thm-main} is the first explicit family of CSPs to even beat more than $O(\log(n))$ levels of the SoS hierarchy, which can be done either by unique neighbor expanders \cite{PrattCommunication, alon2002explicit} or (up to lower order factors) simply by brute force search \cite{Tulsianibrute}. While explicit constructions against $\Omega(n)$-rounds of SoS were known in proof complexity (e.g.\ Tseitin formulas \cite{grigoriev1998tseitin}, knapsack \cite{grigoriev2001complexity}), these examples do not lead to inapproximability since their satisfiability is not bounded away from $1$.

Thus, at a high level, \Cref{intro:thm-main} provides the first example of an approximation problem with short witnesses of unsatisfiability that cannot be captured by the Sum-of-Squares proof system, settling (in the negative) the completeness of SoS in this setting. Furthermore, it is worth noting that $3$-XOR is not somehow `special' in this sense. As observed in \cite{dinur2020explicit} (who showed an analogous result for $O(\sqrt{\log(n)})$-levels of SoS), \Cref{intro:thm-main} also gives explicit hard instances across many types of CSPs by standard reduction techniques \cite{tulsiani2009csp}, including instances with optimal integrality gaps for CSPs with approximation resistant predicates based on pairwise independent subgroups \cite{chan2016approximation,dinur2020explicit}.

% Thus at a high level, \Cref{intro:thm-main} truly gives new insight into the structure of hard problems within the powerful Sum-of-Squares framework. Like the recent work of \cite{dinur2020explicit} (who built explicit instances of $3$-XOR against $O(\sqrt{\log(n)})$-levels of SoS), our instances are highly structured, ‘rare’ objects that look very different from prior random constructions. Furthermore, since $k$-XOR is a classical starting point for hardness reductions \cite{schoenebeck2008linear,tulsiani2009csp,dinur2020explicit}, this understanding likely extends well beyond our particular problem setting. Combined with reduction techniques in the SoS literature, for instance, \Cref{intro:thm-main} gives explicit hard instances across a much broader class of CSPs as well, including instances with optimal gaps for CSPs with approximation resistant predicates based on pairwise independent subgroups \cite{chan2016approximation,dinur2020explicit}. 

% Finally, we note that as observed in \cite{dinur2020explicit} (who built explicit CSPs against $O(\sqrt{\log(n)})$-levels of SoS), \Cref{intro:thm-main} can also be combined with standard reduction techniques to give explicit lower bounds on other problems, including optimal gaps for CSPs with approximation resistant predicates based on pairwise independent subgroups \cite{chan2016approximation,dinur2020explicit}.
\subsection{High Dimensional Small-Set Expanders}
\Cref{intro:thm-main} is based on a new form of \textit{high dimensional expansion} (HDX), a nascent area of computer science and math that has already seen an impressive array of breakthrough results across areas such as coding theory \cite{jeronimo2021near, dinur2021,panteleev2021asymptotically,lin2022c}, approximate sampling \cite{kaufman2020high,anari2019log,alev2020improved,anari2020spectral}, approximation algorithms \cite{alev2019approximating,bafna2022high}, analysis of boolean functions \cite{dikstein2018boolean,bafna2021hypercontractivity,gur2021hypercontractivity}, agreement testing \cite{dinur2017high,dikstein2019agreement}, and, recently, Sum-of-Squares lower bounds \cite{dinur2020explicit}. While most of these works consider notions of expansion on \textit{hypergraphs} (often called \textit{simplicial complexes} in this setting), we take inspiration from recent breakthroughs on LTCs \cite{dinur2021,lin2022c} and quantum codes \cite{panteleev2021asymptotically,leverrier2022quantum} and consider expansion on the more general class of \textit{chain complexes}:
\[
X: \mathbb{F}_2^{X(0)} \overset{\delta_0}{\underset{\partial_1}{\rightleftarrows}} \mathbb{F}_2^{X(1)} \overset{\delta_1}{\underset{\partial_2}{\rightleftarrows}} \mathbb{F}_2^{X(2)}.
\]
Here $X(0)$, $X(1)$, and $X(2)$ are sets, $\delta_0$ and $\delta_1$ are linear maps (called the \textit{co-boundary operators}), $\partial_2$ and $\partial_1$ are their transposes (called the \textit{boundary operators}), and both satisfy $\partial_1\partial_2=0$, $\delta_1\delta_0=0$.
% \begin{equation}\label{eq:intro-chain}
% \partial_1\partial_2=0,~\delta_1\delta_0=0.
% \end{equation}
% For intuition, it is useful to think of the special setting of a $3$-uniform hypergraph, where $X(2)$ is a set of 'triangles', $X(1)$ is every edge contained in a triangle, and $X(0)$ is the set of vertices. Here the co-boundary maps are given by summing mod 2. Since every vertex is incident to exactly two edges in a triangle, it is not hard to see that composing these maps results in a chain complex.

Chain complexes admit a natural analog of boundary (edge) expansion in graphs called high-dimensional (co)-boundary expansion \cite{linial2006homological}. To see this, we first note an important inherent structural property of chain complexes: any function $f \in \Ima(\delta_0)$ (called a \textit{co-boundary}) satisfies $|\delta_1 f|=0$. A complex is called a $\rho$-\textit{co-boundary expander} essentially when this is the only obstruction to $|\delta_1f|$ being large:
\[
\forall f \in \mathbb{F}_2^{X(1)}: |\delta_1 f| \geq \rho \cdot d(f, \Ima(\delta_0)).
\]
For intuition, it is worth briefly discussing why this generalizes boundary expansion on graphs. Any graph $G=(V,E)$ (or indeed hypergraph, see \Cref{sec:chain}) can be written as a chain complex:
\[
X: \mathbb{F}_2^{\emptyset} \overset{\delta_0}{\underset{\partial_1}{\rightleftarrows}} \mathbb{F}_2^{V} \overset{\delta_1}{\underset{\partial_2}{\rightleftarrows}} \mathbb{F}_2^{E},
\]
where $\delta_0f(v) = f(\emptyset)$, $\delta_1f((u,v)) = f(u) \oplus f(v)$, and it is easily checked that $\delta_1\delta_0=0$. Notice that in this setting the only co-boundaries are $\Ima(\delta_0)=\{\emptyset, V\}$, and furthermore that for any $S \subset V$ and $e \in E$, the value of $\delta_11_S$ on $e$ is $1$ iff $e$ crosses the cut defined by $S$. This implies the ratio $\frac{\hnorm{\delta_1 1_S}}{d(1_S,\Ima(\delta_0))} = \frac{E(S,V \setminus S)}{\min\{|S|,|V \setminus S|\}}$, which is just the standard boundary expansion of $G$!
% In the graph setting, the only co-boundaries are the constant functions and $\hnorm{\delta_11_S}$ counts the size of the cut defined by $S$. Thus the co-boundary expansion exactly measures the relative size of $E(S,V \setminus S)$ and $\min\{ |S|, |V \setminus S| \}$, which is just Cheeger's constant (see \Cref{sec:prelims} for further details).
% We discuss these objects in greater detail in \Cref{sec:prelims}, but for now it is enough to note that the final condition induces the following structure on $\mathbb{F}_2^{X(1)}$: any function in the image of $\delta_0$ (called a \textit{co-boundary}), must also be in the kernel of $\delta_1$ (called a \textit{co-cycle}). This leads to the following classical generalization of boundary expansion on graphs to chain complexes: a complex $X$ is called a \textit{(co)-boundary expander} if for every $f \in \mathbb{F}_2^{X(1)}$, the size of $f$'s image is proportional to its distance from the (co)-boundary (see \Cref{def:boundary}). The analogous notion of $\partial$ is called \textit{boundary expansion}.

Unfortunately, while standard boundary expansion on (random) graphs has been quite useful for proving SoS lower bounds in the past \cite{ben1999short,grigoriev2001linear, schoenebeck2008linear}, high dimensional co-boundary expansion seems to be too strong a notion for this setting: good (co)-boundary expanders are not known to exist (even probabilistically), and their structure is prohibitively restrictive in other senses as well.\footnote{We'll discuss this issue in \Cref{sec:pf-overview}, but in brief co-boundary expansion implies $\ker(\delta_1) = \Ima(\delta_0)$. Like \cite{dinur2020explicit}, our instances will rely on a function in $\ker(\delta_1) \setminus \Ima(\delta_0)$ to enforce global structure on the CSP that cannot be detected through local algorithms like Sum-of-Squares.} We avoid these issues by introducing a simple relaxation of boundary expansion to \textit{small-sets}:
\begin{definition}[Small-set (Co)-Boundary Expansion]\label{intro:def-SS-boundary}
We call $X$ a $(\rho_1,\rho_2)$-small-set boundary expander if the weight of any `small' function $f \in \mathbb{F}_2^{X(1)}$ satisfying $\hnorm{f} \leq \rho_1|X(1)|$ expands:
\[ 
\hnorm{\partial_1 f} \geq \rho_2\cdot d(f,\Ima(\partial_2)).
\]
Similarly, $X$ is a $(\rho_1,\rho_2)$-small-set co-boundary expander if all $f \in \mathbb{F}_2^{X(1)}$ s.t.\ $\hnorm{f} \leq \rho_1|X(1)|$ satisfy:
\[
\hnorm{\delta_1 f}\geq \rho_2\cdot d(f,\Ima(\delta_0)).
\]
We call $X$ a $(\rho_1,\rho_2)$-small-set HDX (SS-HDX) if it satisfies both the above conditions.
\end{definition}
Small-set (co)-boundary expansion is a direct generalization of small-set expansion on graphs, a notion that lies at the heart of many problems in hardness of approximation (especially with respect to Khot's unique games conjecture \cite{khot2002power,raghavendra2012reductions,subhash2018pseudorandom}). In the next section, we will show how SS-HDX naturally lead to hard instances of XOR for Sum-of-Squares (largely following a similar result of Dinur, Filmus, Harsha, and Tulsiani \cite{dinur2020explicit} for the LSV complex \cite{lubotzky2005explicit}), giving the first connection between hardness of approximation and \textit{high dimensional} small-set expanders.

With this in mind, \Cref{intro:thm-main} boils down to constructing an infinite family of SS-HDX on a growing number of vertices, each of which can be constructed in deterministic polynomial time. While this may seem hopelessly strong, a weaker variant of these requirements was very recently achieved in breakthrough constructions of qLPDC codes by \cite{panteleev2021asymptotically,leverrier2022quantum}. Indeed, it turns out these known constructions are already enough: we show Leverrier and Z\'emor's \cite{leverrier2022quantum} recent qLDPC codes are in fact small-set HDX as well.
% actually satisfy our stronger notion of small-set (co)-boundary expansion as well.
\begin{theorem}[Small-Set HDX Exist (informal \Cref{thm:HDX})]\label{intro:thm-expanders}
There exist constants $\rho_1,\rho_2 \in (0,1)$ and an explicit (polynomial time constructable) infinite family of bounded-degree\footnote{A complex is bounded degree roughly if each element in $X(i)$ only has constantly many neighbors with respect to the boundary and co-boundary operators. See \Cref{sec:prelims} for an exact definition.} ($3$-term) chain complexes $\{X_i\}$ satisfying:
\begin{enumerate}
    \item $X_i$ has non-trivial `co-homology,' i.e.\ $\Ima(\delta_0) \neq \ker(\delta_1)$
    \item $X_i$ is a $(\rho_1,\rho_2)$-SS HDX.
\end{enumerate}
\end{theorem}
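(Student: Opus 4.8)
The plan is to realize the family $\{X_i\}$ as the $3$-term chain complexes that naturally underlie Leverrier and Z\'emor's quantum Tanner codes. One starts from a left--right Cayley complex over a group $G_i$ of growing order built from a bounded-degree two-sided spectral expander Cayley graph, together with a pair of good classical ``local'' codes $C_A,C_B$; the construction produces CSS parity-check matrices $H_X,H_Z$ with $H_X H_Z^{\mathsf T}=0$, which we read as a chain complex by letting $X(0)$ index the $Z$-checks, $X(1)$ the qubits, $X(2)$ the $X$-checks, and setting $\partial_1=H_Z$, $\partial_2=H_X^{\mathsf T}$, $\delta_0=H_Z^{\mathsf T}$, $\delta_1=H_X$. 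In this dictionary $\Ima(\delta_0)$ is the row space of $H_Z$, $\ker(\delta_1)=\ker(H_X)$, the $Z$-distance of the code is $\min\{|g|:g\in\ker(\delta_1)\setminus\Ima(\delta_0)\}$, and the $X$-distance is the analogous quantity for the reversed chain complex ($\partial$ in place of $\delta$). Given this, part (1) is immediate: the complex is bounded degree because $A,B,C_A,C_B$ all have constant size, and it has non-trivial cohomology because Leverrier--Z\'emor prove the quantum code has dimension $k=\Theta(n)\ge 1$, which is by definition the statement $\Ima(\delta_0)\subsetneq\ker(\delta_1)$.

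The core of the theorem is the two small-set expansion bounds of \Cref{intro:def-SS-boundary}. For small-set co-boundary expansion, fix $f$ with $|f|\le\rho_1|X(1)|$ and $f\notin\Ima(\delta_0)$ (the inequality is vacuous otherwise), and suppose for contradiction $|\delta_1 f|<\rho_2\cdot d(f,\Ima(\delta_0))$. The plan is a ``decode, then trivialize'' argument. First, run (the analysis of) the Leverrier--Z\'emor quantum Tanner decoder — or a direct local weight-reduction argument exploiting robustness of the tensor-code local views $C_A\otimes C_B$ together with spectral expansion of the Cayley graph — to produce $g\in\ker(\delta_1)$ with $|f+g|\le c\,|\delta_1 f|$ for an absolute constant $c$, hence $|g|\le|f|+|f+g|=O(\rho_1|X(1)|)$. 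Second, choose $\rho_1$ small enough that $O(\rho_1|X(1)|)$ lies below the linear $Z$-distance $\Theta(n)$; then $g$ is a sub-minimal-weight codeword, so $g\in\Ima(\delta_0)$, giving $d(f,\Ima(\delta_0))\le|f+g|\le c\,|\delta_1 f|$ and thus $|\delta_1 f|\ge\tfrac1c d(f,\Ima(\delta_0))$, a contradiction once $\rho_2\le 1/c$. Small-set boundary expansion then comes for free by symmetry: the reversed chain complex (equivalently, swap $A\leftrightarrow B$, $C_A\leftrightarrow C_B$, and $X$-checks $\leftrightarrow Z$-checks) is again a Leverrier--Z\'emor quantum Tanner code with the same bounded degree, linear $X$-distance, and decoding guarantee, so the identical argument yields $|\partial_1 f|\ge\rho_2\cdot d(f,\Ima(\partial_2))$ for all small $f$.

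The main obstacle is the first step above: extracting a clean quantitative statement of the form ``few violated $X$-checks imply $f$ is $O(|\delta_1 f|)$-close to $\ker(\delta_1)$'' from the quantum Tanner machinery. The published Leverrier--Z\'emor analysis is tuned to linear distance and to decoding from a constant fraction of $d$ adversarial errors, not to this soundness-type inequality, and one cannot route through genuine quantum local testability (which is not known, and not believed, to hold with constant soundness). The resolution is that the small-set restriction $|f|\le\rho_1|X(1)|$ is precisely what lets us bypass true qLTC: in this regime the relevant local views sit below the distance of $C_A\otimes C_B$, so tensor-code robustness plus expander mixing can be iterated — exactly as in the $c^3$-LTC analyses of the \emph{classical} Tanner codes on the same square complex — to push $f$ monotonically toward a codeword while charging each step against $|\delta_1 f|$. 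Making this iteration and its constants precise, and verifying it interfaces correctly with the CSS structure (so that the limiting codeword lands in $\Ima(\delta_0)$, not merely in $\ker(\delta_1)$), is where the real work of \Cref{thm:HDX} lies.
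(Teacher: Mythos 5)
Your construction is the paper's: the $3$-term chain complex read off from the Leverrier--Z\'emor quantum Tanner code on the double cover of a left--right Cayley complex, and your treatment of bounded degree and of non-trivial cohomology (via $k=\dim H^1>0$) matches the paper's dimension count. Where you diverge is in how the small-set expansion inequality is packaged. You aim for a decoder-style soundness statement --- for small $f$, produce $g\in\ker(\delta_1)$ with $|f+g|\le c\,|\delta_1 f|$, then use the linear $Z$-distance to upgrade $g$ to a coboundary --- and you correctly flag that extracting such a statement from the LZ machinery, and controlling the iteration and its constants, is the hard part. The paper avoids this entirely by first proving (\Cref{lemma:SS-LTC}) that small-set co-boundary expansion is \emph{equivalent} to an isoperimetric inequality for small \emph{minimal} chains, i.e.\ chains whose weight cannot be decreased by adding any coboundary. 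With that reformulation one needs only a \emph{single} weight-reduction step: if $x$ is small, minimal, and $|\delta_1 x|<\rho_2|x|$, exhibit one $y\in B^1$ with $|x+y|<|x|$ and derive a contradiction. This resolves both of your stated worries at once: there is no iteration to charge against the syndrome, and because the weight-reducing vector is itself a coboundary (a codeword of $C_A\otimes C_B=C_0$ planted on the local view of a single $V_0$-vertex), one never leaves the coset $f+B^1$, so there is no separate question of whether the limit lands in $\Ima(\delta_0)$ rather than merely in $\ker(\delta_1)$. (Your iteration would in fact also terminate --- each step strictly decreases $|x|$ while preserving $\delta_1 x$ and the smallness hypothesis, and it halts once $|x_T|\le|\delta_1 f|/\rho_2$ --- but the minimality framing makes this bookkeeping unnecessary.)

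The genuine gap is the single weight-reduction step itself, which is the bulk of the paper (\Cref{thm:SS-boundary-technical}). You name the right ingredients --- robustness (against puncture) of the dual tensor codes and expander mixing on the square graphs --- but the execution is the real content: one must partition the vertices touched by $x$ into violated, normal, and exceptional local views (the violated class, absent from LZ's original distance proof since they only treat cocycles, is precisely what the hypothesis $|\delta_1 x|<\rho_2|x|$ controls), show $S_v$ and $S_e$ are small relative to $S_n$, locate a $V_0$-vertex with $\Omega(\Delta)$ heavy edges into $S_n$ and only $O(\Delta^{1/2+\eps})$ bad neighbors, and apply robust testability of the punctured tensor code to produce the weight-reducing codeword there. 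One small correction along the way: the normal/exceptional threshold is governed by the robustness parameter $w=\Delta^{3/2-\eps}$ of the \emph{dual} tensor code $C_A\otimes\FF_2^B+\FF_2^A\otimes C_B$, not by the distance of $C_A\otimes C_B$.
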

The guarantees of \Cref{intro:thm-expanders} are stronger than those originally proved by Leverrier and Z\'emor \cite{leverrier2022quantum} (see \Cref{sec:related-work} for discussion), and give the strongest known form of bi-directional high dimensional expansion to date.\footnote{In fact it's worth mentioning we actually prove a stronger guarantee regarding \textit{local} functions. See \Cref{remark:local} and discussion in \Cref{sec:related-work}.} Indeed the expansion is so strong that if one could remove the small-set requirement\footnote{Though it is worth noting one must be careful that the dimension of the cohomology stays large, which requires weakening the expansion guarantee to a related notion called (co)-systolic expansion (the correct notion for qLTC regardless) \cite{eldar2017local}.} or prove similar bounds for a $5$-term chain complex, it would resolve the qLTC conjecture \cite{kaufman2014ramanujan,eldar2017local,lin2022c}, a major open problem in quantum computation.

\section{Proof Overview}\label{sec:pf-overview}
We now overview the constructions and proof techniques underlying our main result (\Cref{intro:thm-main}). Broadly speaking, this breaks into two main steps:
\begin{enumerate}
    \item Show any SS-HDX implies a hard instance of $3$-XOR
    \item Construct an explicit infinite family of SS-HDX.
\end{enumerate}
To start, it will be useful to cover some basic background on CSPs, Sum-of-Squares, and chain complexes in a bit more detail. A more formal treatment is given in \Cref{sec:prelims} and \Cref{sec:prelims2}.

\subsection{Background}
In this work, we study the limitations of the Sum-of-Squares proof system for refuting MAX-$k$-XOR, a widely studied class of constraint satisfaction problems (CSPs). An instance of MAX-$k$-XOR $\mathcal{I}$ consists of a set of variables $\{x_i\}_{i \in [n]}$ and constraints $\{C_i\}_{i \in [m]}$, where each $C_i$ is a boolean function of the form:
\[
C_i(x) = \mathbf{1}\left\{ x_{i_1} \oplus \ldots \oplus x_{i_j} = b_i \right \},
\]
where $j=j(i) \leq k$ and $\{i_1,\ldots,i_j\} \subset [n]$. If all constraints have exactly $k$ variables, we say $\mathcal{I}$ is an instance of $k$-XOR. We will usually omit the indicator $\mathbf{1}$ from notation when clear from context. The \textit{value} of $\mathcal{I}$ is the maximum fraction of constraints that can be satisfied by any assignment, and we say $\mathcal{I}$ is \textit{$(1-\mu)$-satisfiable} if there exists an assignment satisfying at least a $(1-\mu)$ fraction of constraints. We call an infinite family of instances $\{\mathcal{I}_i\}$ \textit{explicit} if each instance can be constructed in deterministic polynomial time in the number of variables.

The Sum-of-Squares semi-definite programming hierarchy is a powerful algorithmic framework for approximating the value of any CSP (or more generally for solving constrained polynomial optimization problems). The hierarchy consists of \textit{rounds} or \textit{levels} of progressively stronger SDP relaxations (see \Cref{alg:SoS}). For the moment, it is enough to know that the round-$t$ SoS relaxation is local\footnote{We note the relaxation does have (low-degree) global consistency checks, so it is not fully a local algorithm in this sense.} in the sense that it ranges over subsets of variables of size at most $t$. We will cover more details on the SoS framework as they arise.

Finally, it will be useful to have some basic terminology corresponding to chain complexes. Recall that a chain complex is a sequence
$X: \mathbb{F}_2^{X(0)} \overset{\delta_0}{\underset{\partial_1}{\rightleftarrows}} \mathbb{F}_2^{X(1)} \overset{\delta_1}{\underset{\partial_2}{\rightleftarrows}} \mathbb{F}_2^{X(2)}$
such that $\partial_1\partial_2=0,~\delta_1\delta_0=0$. Functions in the image of $\partial_2$ and $\delta_0$ are called \textit{boundaries} and \textit{co-boundaries} respectively, and are denoted:
\[
\Ima(\partial_2) = B_1, \ \ \Ima(\delta_0) = B^1.
\]
Functions in the kernel of $\partial_1$ and $\delta_1$ are called \textit{cycles} and \textit{co-cycles} respectively, and are denoted:
\[
\ker(\partial_1) = Z_1, \ \ \ker(\delta_1) = Z^1.
\]
The structure of a chain complex promises that $B_1 \subset Z_1 \subset  \mathbb{F}_2^{X(1)}$ and $B^1 \subset Z^1 \subset  \mathbb{F}_2^{X(1)}$. This leads to notions of \textit{homology} and \textit{co-homology} given by (co)-cycles mod (co)-boundary and respectively denoted:
\[
H_1 = Z_1/B_1, \ \ H^1 = Z^1/B^1,
\]
where $G/H$ denotes the quotient group. A complex has non-trivial co-homology if $B^1 \neq Z^1$.
\subsection{From SS-HDX to Hardness}
With notation out of the way, we can now discuss how to transform an expanding chain complex into a hard instance of $3$-XOR. Before we give an informal theorem statement to this effect, it is instructive to overview how one even relates a CSP to a chain complex at all. To this end, let's first recall the classical construction of CSPs (also frequently seen in coding theory) based upon a bipartite graph $B=(L,R,E)$. In this setting, elements in $L$ correspond to variables $\{x_v\}_{v \in L}$, and elements in $R$ correspond to the set of constraints $\{C_r\}_{r \in R}$. Fixing some assignment $\beta \in \{0,1\}^R$ to constraints, the XOR instance classically associated with the graph $B$ is characterized by ensuring the (mod $2$) sum across neighbors of each $r \in R$ is given by $\beta(r)$:
\begin{equation}\label{eq:intro-XOR}
C_r \coloneqq \left\{\sum\limits_{v \in N(r)} x_v = \beta(r) \quad (\text{mod $2$})\right\}.
\end{equation}
In prior hardness constructions, $B$ is typically picked at random in order to satisfy strong expansion properties, while $\beta$ is typically chosen at random to ensure un-satisfiability (see e.g. \cite{grigoriev2001linear,schoenebeck2008linear,kothari2017sum}). While it is sometimes possible to de-randomize the choice of $B$ and retain good inapproximability guarantees, 
no de-randomization of $\beta$ better than brute force search over $\log(n)$-size instances was known up until this point.

% , $\beta$, or both are typically chosen at random so as to have strong expansion properties. While such expansion assumptions alone typically suffice to prove a good lower bound, explicit constructions of graph families satisfying these assumptions have eluded researchers for many years, and remain a major open question in both combinatorics and theoretical computer science.

The basic form of our XOR instances from chain complexes is actually very similar to \Cref{eq:intro-XOR} (indeed they can be viewed as a special instantiation of this framework). Recall that a chain complex is a sequence:
\[
X: \mathbb{F}_2^{X(0)} \overset{\delta_0}{\underset{\partial_1}{\rightleftarrows}} \mathbb{F}_2^{X(1)} \overset{\delta_1}{\underset{\partial_2}{\rightleftarrows}} \mathbb{F}_2^{X(2)},
\]
and in particular that the co-boundary operator $\delta_0: \mathbb{F}_2^{X(0)} \to \mathbb{F}_2^{X(1)}$ is a linear map. To define an instance of XOR on $X$, we simply move to the \textit{graph representation} of $\delta_0$. Namely, recall that any linear operator mapping from $\mathbb{F}_2^{X(0)}$ to $\mathbb{F}_2^{X(1)}$ can be written as an $(|X(1)| \times |X(0)|)$-dimensional matrix over $\mathbb{F}_2$. We can think of this matrix as the bipartite adjacency matrix of a graph on left vertex set $L=X(0)$ and right vertex set $R=X(1)$. Thus given a function $\beta \in \mathbb{F}_2^{X(1)}$, we construct the associated XOR instance, denoted $\mathcal{I}_{X,\beta}$ as in \Cref{eq:intro-XOR} by adding the constraint for each $r \in X(1)$:
\begin{equation}\label{eq:chain-csp}
C_r \coloneqq \left\{\sum\limits_{\underset{e_r^T\delta_0 e_v = 1}{v \in X(0):} } x_v = \beta(r) \quad (\text{mod $2$})\right\},
\end{equation}
where $e_v \in \mathbb{F}_2^{X(0)}$ and $e_r \in \mathbb{F}_2^{X(1)}$ are the standard basis vectors associated to $v \in X(0)$ and $r \in X(1)$. Note that $e_r^T\delta_0 \in \mathbb{F}_2^{X(0)}$ is just the list of neighbors of $r$, so this is indeed an instantiation of the standard bipartite framework. We note that this construction also generalizes the recent approach of \cite{dinur2020explicit} who built XOR instances via a $3$-dimensional simplicial complex ($4$-uniform hypergraph) by letting \textit{triangles} correspond to constraints, and \textit{edges} correspond to variables. This is exactly the result of the above construction when applied to the natural chain complex associated with a $3$-dimensional simplicial complex (see \Cref{sec:chain} for further details).

So far, we have not used the fact that $\delta_0$ is part of a chain complex, or even the fact that the higher dimensional component $X(2)$ exists at all. This structure comes into play in the choice of $\beta$. Notice that by construction, the instance corresponding to $X$ and a choice of $\beta$ is satisfiable exactly when $\beta$ is a co-boundary. Following the framework laid out in \cite{dinur2020explicit}, the idea is to choose $\beta \in Z^1\setminus B^1$, a function which is a co-cycle, but not a co-boundary. On a sufficiently expanding complex, this choice induces \textit{global structure} on the XOR instance that cannot be captured by \textit{local views} of the complex, where both the homology and co-homology look trivial. Since Sum-of-Squares only looks over local views in this sense, this leads to the following direct translation between SS-HDX and hard instances of XOR.
\begin{theorem}[SS-HDX $\implies$ Hard XOR Instance (Informal \Cref{thm:HDX-to-kXOR})]\label{pf-overview:HDX-to-kXOR}
Let $X: \mathbb{F}_2^{X(0)} \overset{\delta_0}{\underset{\partial_1}{\rightleftarrows}} \mathbb{F}_2^{X(1)} \overset{\delta_1}{\underset{\partial_2}{\rightleftarrows}} \mathbb{F}_2^{X(2)}$ be an SS-HDX with non-trivial co-homology. Then there exist $\mu_1,\mu_2 \in (0,1)$ such that for any $\beta \in Z^1\setminus B^1$, the associated XOR instance $\mathcal{I}_{X,\beta}$ satisfies:
\begin{enumerate}
    \item Soundness: $\mathcal{I}_{X,\beta}$ is at most $(1-\mu_1)$-satisfiable,
    \item Completeness: $\mathcal{I}_{X,\beta}$ cannot be refuted by $\mu_2|X(0)|$ levels of the SoS hierarchy.
\end{enumerate}
\end{theorem}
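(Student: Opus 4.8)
The plan is to adapt the linear-algebraic pseudo-expectation of Grigoriev and Schoenebeck, in the chain-complex formulation of Dinur--Filmus--Harsha--Tulsiani, with one key observation: expansion is only ever invoked on functions that already lie in $Z^1$ or in $Z_1$, so the \emph{small-set} guarantees of \Cref{intro:def-SS-boundary} are exactly what is needed. Throughout, write the constraint matrix of $\mathcal{I}_{X,\beta}$ as $\delta_0\colon\mathbb{F}_2^{X(0)}\to\mathbb{F}_2^{X(1)}$; an assignment $x\in\mathbb{F}_2^{X(0)}$ violates precisely the constraints indexed by $\mathrm{supp}(\delta_0 x+\beta)$.

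For \textbf{soundness}, suppose some $x$ violates fewer than $\rho_1|X(1)|$ constraints, i.e.\ $f:=\delta_0 x+\beta$ has $\hnorm{f}\le\rho_1|X(1)|$. Since $\delta_1\delta_0=0$ and $\beta\in Z^1=\ker\delta_1$ we get $\delta_1 f=0$, so small-set co-boundary expansion forces $d(f,B^1)=0$, i.e.\ $f\in B^1=\Ima(\delta_0)$; but $\delta_0 x\in B^1$ as well, so $\beta=f-\delta_0 x\in B^1$, contradicting $\beta\notin B^1$. Hence every assignment violates at least $\rho_1|X(1)|$ constraints, and $\mu_1=\rho_1$ works.

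For \textbf{completeness} I would exhibit a degree-$\Omega(|X(0)|)$ pseudo-expectation satisfying every constraint. Work with $\pm1$ variables $y_v$, $v\in X(0)$, and call $S\subseteq X(0)$ \emph{determined} if $1_S=\partial_1 1_T$ for some $T\subseteq X(1)$ with $\hnorm{1_T}$ below a small-constant fraction of $|X(1)|$. Set
\[
\pe{y^S}=\begin{cases}(-1)^{\uinner{\beta}{1_T}} & \text{if }S\text{ is determined, with witness }T,\\ 0 & \text{otherwise,}\end{cases}
\]
extended linearly. This is well defined because two witnesses $T,T'$ of the same $S$ give a cycle $1_{T\triangle T'}\in Z_1$ with $\hnorm{1_{T\triangle T'}}\le\rho_1|X(1)|$, so small-set boundary expansion puts $1_{T\triangle T'}=\partial_2 g\in B_1$, whence $\uinner{\beta}{1_{T\triangle T'}}=\uinner{\delta_1\beta}{g}=0$. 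Each constraint $C_r$ corresponds to $1_{N(r)}=\partial_1 1_{\{r\}}$, so $N(r)$ is determined; if $S$ is determined by $T$ then $N(r)\triangle S$ is determined by $\{r\}\triangle T$ with $\pe{y^{N(r)\triangle S}}=(-1)^{\beta(r)}\pe{y^S}$ (and both sides vanish when $S$ is undetermined), which is exactly the SoS constraint $\pe{(C_r-(-1)^{\beta(r)})\,y^S}=0$; booleanity $y_v^2=1$ is immediate. For positive-semidefiniteness, the relation $S\approx T\iff S\triangle T$ is determined is an equivalence relation on the relevant low-degree sets, the moment matrix $M[S,T]=\pe{y^{S\triangle T}}$ vanishes off the blocks it induces, and within a block $M[S,T]=\eps_S\eps_T$ for a $\pm1$ vector $\eps$ (by the same symmetric-difference-of-witnesses computation), so $M$ is block-diagonal with rank-one PSD blocks; $\pe{1}=1$ since $\emptyset$ is determined by $\emptyset$. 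Bounded degree of $X$ enters to keep every set in play of constant size and to guarantee $|X(1)|=\Theta(|X(0)|)$, so the degree remains $\Omega(|X(0)|)$, i.e.\ some $\mu_2>0$ works.

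The step I expect to be the \textbf{main obstacle} is the constant bookkeeping hidden in positive-semidefiniteness: one must fix the ``determined'' threshold and $\mu_2$ small enough, relative to $\rho_1,\rho_2$, the degree bound of $X$, and $|X(1)|/|X(0)|$, that all the (constantly many) witness sets produced by symmetric-differencing witnesses in the transitivity and constraint-propagation arguments still have weight $\le\rho_1|X(1)|$ --- so that small-set boundary expansion keeps applying and $\approx$ is genuinely transitive, not merely reflexive and symmetric. Pinning this down is what determines $\mu_2$; everything else follows formally from $\partial_1\partial_2=0$, $\delta_1\delta_0=0$, and $\beta\in Z^1$.
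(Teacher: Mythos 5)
Your soundness argument is correct and is essentially the paper's: both reduce to the fact that small-set co-boundary expansion forces every element of $Z^1\setminus B^1$ (in particular $\beta+\delta_0 x$) to have weight at least $\rho_1|X(1)|$ (\Cref{lem:ss-boundary-to-systolic-distance}). For completeness you take a genuinely different route --- building the Grigoriev/Schoenebeck pseudo-expectation directly --- whereas the paper proves a width lower bound for $\oplus$-resolution via a potential-function argument and invokes Schoenebeck's reduction (\Cref{thm:resolution}) as a black box. Your key lemma (small cycles lie in $B_1$, hence pair trivially with $\beta\in Z^1$) is exactly the right ingredient and is correctly proved.

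However, there is a genuine gap in the completeness construction as written, and it is not merely ``constant bookkeeping.'' With a hard witness-weight threshold $c$, the family of determined sets is not closed under the operations your argument needs: if $S$ is determined only by witnesses of weight exactly $c$, then $N(r)\triangle S$ may have minimal witness weight $c+1$ and hence be \emph{undetermined}, so the constraint identity $\pe{y^{N(r)\triangle S}}=(-1)^{\beta(r)}\pe{y^S}$ reads $0=\pm1$ and fails; likewise $S\approx S_0$ and $T\approx S_0$ give a witness for $S\triangle T$ of weight up to $2c$, so the block structure and transitivity you assert can break. No choice of threshold fixes this by itself, because witness weights genuinely accumulate under symmetric difference.

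The missing idea is a \emph{weight gap}: for any $S$ with $|S|=\hnorm{\partial_1 1_T}$ small (i.e.\ low width, $|S|\le 2t$ with $t\le \frac{\rho_1\rho_2}{4}|X(1)|$, say), the isoperimetric inequality for small minimal chains (\Cref{lemma:SS-LTC}) forces every \emph{minimal} witness $T$ to have weight either less than $\rho_1|X(1)|/2$ or greater than $\rho_1|X(1)|$ --- the medium range is forbidden. Defining ``determined'' via minimal witnesses of weight below $\rho_1|X(1)|/2$, all the composed witnesses your argument produces land below $\rho_1|X(1)|$ and are then pushed back under $\rho_1|X(1)|/2$ by the gap, which restores closure, transitivity, and the rank-one blocks. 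This gap argument is precisely the content of the paper's potential function $\kappa(v)=d(h_v,B_1)$ and its sub-additivity: one cannot pass from leaf potential $1$ to root potential $>\rho_1|X(1)|$ without hitting the forbidden middle range, where the width must be large. So your proof can be completed, but only by importing the one idea your sketch is missing, which is the heart of the paper's completeness proof of \Cref{thm:HDX-to-kXOR}.
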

Before moving on to the construction of SS-HDX, let's discuss how small-set expansion implies soundness and completeness for these instances. Soundness, the simpler of the two, intuitively comes from the fact that small-set co-boundary expansion promises that any element in $Z^1 \setminus B^1$ must be \textit{far from the co-boundary}.\footnote{It is worth noting that this property, called \textit{co-systolic distance}, is quite well studied. Indeed as we will soon discuss it is exactly the property needed (in both directions) to build good qLDPC codes \cite{panteleev2021asymptotically}, and was also used directly by \cite{dinur2020explicit} to prove soundness of their 3-XOR instances by the same argument stated here.} Recall that by construction, the instance $\mathcal{I}_{X,\beta}$ is satisfiable exactly when $\beta \in \mathbb{F}_2^{X(1)}$ is a co-boundary. Intuitively one might then expect that functions which are far from the co-boundary would therefore be far from satisfiable. Indeed this intuition holds true---it is easy to show this robust version of the statement holds for small-set co-boundary expanders, and therefore that our instances are far from satisfiable as well.
% Small-set co-boundary expansion additionally promises that this statement holds in a robust sense: any function which is far from the co-boundary corresponds to an instance which is far from satisfiable. Since $\beta \in Z^1 \setminus B^1$ is far from the boundary, we are therefore assured that our corresponding instance will be far from satisfiable as well.

Completeness is somewhat trickier and, unlike soundness, does actually require the full power of small-set boundary expansion. We stated earlier that the completeness of our instances, much like those of \cite{dinur2020explicit}, comes from the fact that the global structure of (co)-homology cannot be detected through local views of the complex. This is formalized by observing that small-set boundary expansion can be equivalently re-stated as the following \textit{isoperimetric inequality} (see \Cref{lemma:SS-LTC}): ``small, minimal\footnote{A function $f \in \mathbb{F}_2^{X(1)}$ is said to be minimal if adding any boundary can only increase its size (Hamming weight).} functions have large boundaries.'' Largely following \cite{dinur2020explicit} (who use a much weaker isoperimetric inequality for the LSV complex due to Gromov \cite{gromov1983filling}), the idea is then to combine this fact with the classical arguments of Ben-Sasson and Wigderson \cite{ben1999short} to show that the width\footnote{The width of a refutation is the largest number of variables appearing in any equation.} of any refutation of $\mathcal{I}_{X,\beta}$ in the $\oplus$-resolution proof system\footnote{In this proof system, one is allowed to combine linear equations (equivalently XOR constraints) $\ell_1=b_1$ and $\ell_2=b_2$ to derive the equation $\ell_1 \oplus \ell_2 = b_1 \oplus b_2$. A refutation is a proof based on this rule deriving a contradiction ($0=1$), which is equivalent in our setting to showing the XOR instance is unsatisfiable.} is large. Since Schoenebeck \cite{schoenebeck2008linear} showed any such bound transfers to a completeness lower bound for Sum-of-Squares, this completes the proof.

In slightly more detail, a refutation in the $\oplus$-resolution system can be viewed as an (in-degree two) DAG where leaves correspond to the original XOR constraints, internal nodes correspond to the XOR of their two parents (as in the $\oplus$-resolution derivation rule), and the root derives the contradiction $0=1$. Recall that each element $s \in X(1)$ corresponds to a constraint in our XOR instance. Following \cite{dinur2020explicit}, the idea is to assign a function in $h_v \in \mathbb{F}_2^{X(1)}$ for each node $v$ in the DAG that tracks which XOR constraints are being used at that node. The \textit{boundary} of this function, $\partial_1h_v \in \mathbb{F}_2^{X(0)}$, is exactly the set of variables appearing in the equation corresponding to node $v$. Thus lower bounding the width of the refutation boils down to finding a node with large boundary. 

This is where small-set boundary expansion (namely the isoperimetric formulation) finally comes into play. In particular, the corresponding inequality states that it is enough to find a node $v$ of `medium' weight:\footnote{We note that weight here is not just the standard Hamming weight, but must take into account distance from the boundary as well. See \Cref{sec:HDX-to-XOR}.} small enough that one can apply the inequality, but large enough to result in a large boundary. This can be done by fairly standard potential arguments (see e.g. \cite{ben1999short,dinur2020explicit}) where one sets of up a potential function tracking this weight throughout the DAG, and argues that the leaves have small potential, the root has large potential, and that potential is sub-additive. This implies the existence of an interior node with medium potential and completes the proof. The details are given in \Cref{sec:HDX-to-XOR}.

Finally, before moving on to overviewing our construction of SS-HDX, we note that except in very special cases (e.g.\ the simplicial complexes considered in \cite{dinur2020explicit}), the CSPs given by \Cref{eq:chain-csp} (and therefore also \Cref{pf-overview:HDX-to-kXOR}) are actually instance of MAX-$k$-XOR, not $3$-XOR, where $k$ is given by the maximum degree of the complex. As it turns out, this is not a significant issue because the SS-HDX we construct in the next section are \textit{bounded degree}, meaning not only that every constraint in the XOR has a constant number of variables, but also that every variable only appears in a constant number of constraints. This observation allows us to move to hard instances of $3$-XOR by standard NP-reduction type arguments within the SoS hierarchy \cite{schoenebeck2008linear,tulsiani2009csp} while only losing constant factors in the soundness and levels of hardness for SoS.

\subsection{Constructing SS-HDX}
Now that we know how to transform an expanding chain complex into a hard instance of $3$-XOR, we turn our attention to the construction of such complexes. Our method relies on recent breakthroughs on LTCs \cite{dinur2021, lin2022c} and quantum LDPC codes \cite{panteleev2021asymptotically,leverrier2022quantum}. As such, we'll split this section into three parts: a review of the connection between quantum LDPC codes and expanding chain complexes, the recent qLDPC construction of Leverrier and Z\'emor \cite{leverrier2022quantum}, and our proof of small-set (co)-boundary expansion.

% We first review the problem for constructing good quantum LDPC codes. Then we talk about the construction in \cite{leverrier2022quantum}.
\subsubsection{Quantum LDPC Codes and Chain Complexes}\label{sec:qldpc-chain}
A classical error correcting code is a method of encoding $k$ classical bits into $n>k$ classical bits such that it is possible to recover the original bit string even if the encoded string becomes corrupted. We will consider \textit{linear codes}, which are defined by a linear operator $M: \mathbb{F}_2^n \to \mathbb{F}_2^{n-k}$ called the parity check matrix,\footnote{We note the parity check matrix is traditionally denoted by `$H$,' but this conflicts with the notation for homology.} where the corresponding code $\cC \coloneqq \ker M$.
% Recall a classical code encodes classical bits into a larger number of classical bits so that one can recover the original bit string under error.

Similar to the classical setting, a \textit{quantum} code encodes quantum bits into a larger number of quantum bits, but is resistent to two types of corruption: the $X$-type errors (bit flips) and the $Z$-type errors (phase flips). In this work, we will focus on a popular notion of quantum codes called \textit{CSS-codes} \cite{calderbank1996good, steane1996error}, which come with the benefit of having an entirely classical interpretation. In particular, a length $n$ CSS-code is made up of two classical codes $\cC_0 \coloneqq \ker M_0 \subset \FF_2^n$ and $\cC_1 \coloneqq \ker M_1 \subset \FF_2^n$ such that $\cC_0 ^\perp \subset \cC_1$, or equivalently $M_1 \cdot M_0^T = 0$.\footnote{Here $\cC_0 ^\perp$ denotes the dual code, consisting of all elements orthogonal to $\cC_0$. This code is generated by the transpose of the parity check matrix $M_0^T$.}
The dimension of the code is defined as $k = \dim \cC_0 - \dim \cC_1^\perp$, and its \textit{distance} (which measures how much corruption it can handle) is defined as $d = \min(d_x, d_z)$ where
\begin{equation*}
    d_x = \min_{v \in \cC_0 \setminus \cC_1^\perp} |v|, d_z = \min_{v \in \cC_1 \setminus \cC_0^\perp} |v|
\end{equation*}
and $d_x$ ($d_z$) is called the $X$-distance ($Z$-distance). 
%which is the distance of the $X$ ($Z$) type error.
The quantum low-density parity-check (LDPC) conjecture, recently resolved by \cite{panteleev2021asymptotically}, states that there exists a family of quantum CSS codes with linear dimension and distance, $k = \Theta(n)$ and $d = \Theta(n)$, where $M_0$ and $M_1$ have at most some constant number of ones in any row or column (and thus are `low-density' parity check matrices).

Since we are promised by definition that $M_1 \cdot M_0^T = 0$, it is easy to see that any CSS-code induces the following chain complex:
\[
X: \mathbb{F}_2^{m_0} \overset{M_0^T}{\underset{M_0}{\rightleftarrows}} \mathbb{F}_2^{n} \overset{M_1}{\underset{M_1^T}{\rightleftarrows}} \mathbb{F}_2^{m_1},
\]
where $m_i = \dim \left(\Ima (M_i)\right)$. Indeed the same holds in reverse as well, given a chain complex 
\[
X: \mathbb{F}_2^{X(0)} \overset{\delta_0}{\underset{\partial_1}{\rightleftarrows}} \mathbb{F}_2^{X(1)} \overset{\delta_1}{\underset{\partial_2}{\rightleftarrows}} \mathbb{F}_2^{X(2)},
\] one obtains a quantum CSS code by letting $M_0 \coloneqq \partial_1$, and $M_1 \coloneqq \delta_1$. 

In fact, it turns out this equivalence between quantum CSS codes and chain complexes runs deeper: all of the discussed properties (e.g.\ distance, LDPC) have analogs in the homological language we developed in the previous section. The classical codes $\cC_0$ and $\cC_1$, for instance, correspond to the cycles and co-cycles of the chain complex ($\cC_0 = Z_1, \cC_1 = Z^1$), while the dual codes $\cC_0^\perp$ and $\cC_1^\perp$ correspond to the co-boundaries and boundaries ($\cC_0^\perp = B^1, \cC_1^\perp = B_1$). The dimension of the code $k$ corresponds to the dimension of the co-homology ($k = \dim H^1$), and the maximum degree of the complex corresponds to the maximum density of the parity check codes (so the bounded-degree and LDPC conditions are equivalent). Finally, the $X$-distance and $Z$-distance of the code correspond to what is known as the \textit{(co)-systolic distance} of the chain complex, the minimum weight of any (co)-cycle that is not a (co)-boundary:
\begin{align*}
    d_x &= \min_{v \in \cC_0 \setminus \cC_1^\perp} |v| = \min_{v \in Z_1 \setminus B_1} |v|,\\
    d_z &= \min_{v \in \cC_1 \setminus \cC_0^\perp} |v| = \min_{v \in Z^1 \setminus B^1} |v|.
\end{align*}
% This is exactly the condition we needed to prove soundness.
% Therefore, the construction of a good quantum LDPC code gives a 3-term chain complex with non-trivial co-homology (in fact, with dimension linear to $|X(1)|$), and has linear systolic and co-systolic distance.

In \cite{panteleev2021asymptotically} and \cite{leverrier2022quantum}, the authors construct two different explicit families of good quantum LDPC codes. This partially solves our problem since the codes correspond to a family of bounded-degree chain complexes with non-trivial co-homology and linear co-systolic distance (which is enough to imply soundness of our XOR construction).
We will show these complexes in fact satisfy the stronger small-set (co)-boundary expansion condition, which as discussed in the previous section further implies completeness and (up to reduction to $3$-XOR) finishes the proof of \Cref{intro:thm-main}.
\subsubsection{Leverrier and Z\'emor's qLDPC Codes}
Before discussing the proof, we need to overview the original construction of \cite{leverrier2022quantum}. A significantly more detailed description of the construction and its associated components is given in \Cref{sec:prelims2} and \Cref{sec:construction}.

Leverrier and Z\'emor's qLDPC codes are based on a classical object called a \textit{Tanner code} \cite{tanner1981recursive}.  Given an $n_0$-regular graph $\cG = (V, E)$ and a linear code $C$ of length $n_0$, the Tanner code $T(\cG, C) \subset \FF_2^E$ is
\begin{equation*}
    \{c \in \FF_2^E : \forall v \in V, c|_{E(v)} \in C\},
\end{equation*}
where $c|_{E(v)} \in \FF_2^{n_0}$ is the vector formed by the values on the edges incident to $v$. Tanner codes have long been used in coding theory. The main insight of \cite{leverrier2022quantum} was to observe that one can construct a quantum CSS code via two Tanner codes coming from a higher-dimensional object called the \textit{left-right Cayley complex}, recently developed in \cite{dinur2021} to construct c3-LTCs.

% The classical codes $\cC_0$ and $\cC_1$ underlying Leverrier and Z\'emor's construction are Tanner codes whose underlying graph of $\cC_0$ and $\cC_1$ are based on the Tanner code construction, where the graphs are derived from the left-right Cayley complex \cite{dinur2021} and the codes are tensor codes with robustness. The reason for using the left-right Cayley complex and the tensor codes is to satisfy $\cC_0 ^\perp \subset \cC_1$.

% We now describe the left-right Cayley complex.
The left-right Cayley complex corresponding to a group $G$ and two sets of generators $A = A^{-1}$ and $B = B^{-1}$ consists of a vertex set $V=G$, edges given by (left) Cayley graph $C(G,A)$ and (right) Cayley graph $C(G,B)$, and higher-dimensional `squares' of the form $\{g,ag,gb,agb\}$ for $g \in G, a \in A, b \in B$. More formally, \cite{leverrier2022quantum} consider the \textit{double cover} of this complex where:
\begin{itemize}
    \item The vertices are $V = V_0 \cup V_1$ where $V_0 = G \times \{0\}$ and $V_1 = G \times \{1\}$.
    \item The `$A$-edges' and `$B$-edges' are respectively:
        \begin{equation*}
            E_A = \{\{(g, 0), (ag, 1)\} : g \in G, a \in A\}, E_B = \{\{(g, 0), (gb, 1)\} : g \in G, b \in B\}.
        \end{equation*}
    \item The squares are 
        \begin{equation*}
            F = \{\{(g, 0), (ag, 1), (gb, 1), (agb, 0)\} : g \in G, a \in A, b \in B\}.
        \end{equation*}
\end{itemize}
Notice each square contains exactly two vertices in $V_0$ and two vertices in $V_1$. This allows us to think of each square as an \textit{edge} between two vertices in $V_0$ (or $V_1$) and to define corresponding graphs $\cG_0^\square = (V_0, F)$ and $\cG_1^\square = (V_1, F)$. The local view around each vertex in $(g,i) \in \cG_i^\square$ then corresponds to the squares $\{(g, i), (ag, 1-i), (gb, 1-i), (agb, i)\}$ for $a \in A, b \in B$. Assuming $|A|=|B|=\Delta$ for some constant $\Delta$, we will always think about these local views as square matrices with rows indexed by $A$ and columns indexed by $B$.

Leverrier and Z\'emor \cite{leverrier2022quantum} observed that the Tanner codes associated to these graphs, $\cC_0 = T(\cG^\square_0, C_0^\perp)$ and $\cC_1 = T(\cG^\square_1, C_1^\perp)$, give a quantum CSS code (i.e.\ satisfy $\cC_0 ^\perp \subset \cC_1$) whenever the associated local codes $C_0 = C_A \otimes C_B$ and $C_1 = C_A^\perp \otimes C_B^\perp$ are tensors\footnote{The tensor code $C_A \otimes C_B$ is the set of matrices whose rows are given by elements of $C_B$ and columns are given by elements of $C_A$.} of linear codes $C_A \subseteq \mathbb{F}_2^A$ and $C_B \subseteq \mathbb{F}_2^B$. Furthermore, they showed that whenever $C_A, C_B, C_A^\perp, C_B^\perp$ have linear distance and the codes $C_1^\perp = C_A \otimes \FF_2^B + \FF_2^A \otimes C_B$, and $C_0^\perp = C_A^\perp \otimes \FF_2^B + \FF_2^A \otimes C_B^\perp$ satisfy certain robustness properties (see \Cref{sec:robust-tensor-code}), then the associated quantum code has linear distance. \cite{leverrier2022quantum} complete their construction by showing random base codes $C_A, C_B$ satisfy these properties with high probability. Note that because these base codes are constant size, this final step can be brute-forced to maintain explicitness of the construction.

\subsubsection{Proving Small-Set (Co)-Boundary Expansion}
With \cite{leverrier2022quantum}'s construction in hand, we can now sketch the proof of small-set (co)-boundary expansion. As mentioned previously, all other major requirements (e.g.\ non-trivial homology, bounded-degree) already follow from the fact that the complex corresponds to a good qLDPC code. We will focus here on proving small-set \textit{co}-boundary expansion in particular, but we note that small-set boundary expansion follows the same argument by symmetry of \cite{leverrier2022quantum}'s construction. 

With this in mind, recall that small-set co-boundary expansion can equivalently be phrased as an isoperimetric inequality for small, minimal functions (see \Cref{lemma:SS-LTC}). In particular, to show small-set co-boundary expansion for the chain complex
\[
X: \FF_2^{m_0} \xrightarrow{\delta_0 \coloneqq \cC_0^T} \FF_2^n \xrightarrow{\delta_1 \coloneqq \cC_1} \FF_2^{m_1},
\]
it is enough to show there exist constants $\rho_1, \rho_2 \in (0,1)$ such that any minimal $x \in \FF_2^n$ with weight $|x| \le \rho_1 n$ has large boundary: $|\delta_1 x| \ge \rho_2 |x|$. We proceed by contradiction. Assuming $|\delta_1 x| < \rho_2 |x|$, we will show $x$ is not minimal by finding $y \in B^1$ such that $|x+y| < |x|$.

The proof of this fact largely follows the technique of \cite{leverrier2022quantum} for proving the weaker co-systolic distance property. The main difference is that while \cite{leverrier2022quantum} only consider functions $x \in \FF_2^n$ that are co-cycles, we consider arbitrary functions. In particular, recall that the co-cycles in our construction correspond to codewords in the Tanner code $T(\cG_1^\square,C_1^\perp) $, or equivalently to functions $x \in \FF_2^n$ whose `local view' around each vertex $(g,1) \in V_1$ is given by a codeword of $C_1^\perp$. Since our functions do not a priori have this structure, we will need to track the set of `violations' coming from local views that are not codewords (this essentially corresponds to where $\delta_1 x$ is non-zero). 

To this end, recall $x$ is a bit string indexed by the squares of the double-covered Cayley complex, and let $S \subset V_1$ denote the set of vertices incident to any square in $x$. We partition $S$ into three parts: the violated vertices $S_v$, the normal vertices $S_n$, and the exceptional vertices $S_e$. A vertex is \textit{violated} if the local view of $x$ around the vertex does not form a codeword in $C_1^\perp$. When the local view does form a codeword, if the codeword has weight less than $w \coloneqq \Delta^{3/2-\eps}$ we call it \textit{normal}, and otherwise call it \textit{exceptional}.
This weight-based distinction comes from the robustness condition of the local tensor code. We cover this in detail in \Cref{sec:robust-tensor-code}, but for the moment it is sufficient to think of robustness as a structural condition forcing codewords with weight less than $w$ to be zero outside of a small number of rows and columns.
In particular, this promises that each column (respectively row) in the local view of a normal vertex is at most $O(\Delta^{1/2 - \eps})$ away from a codeword in $C_A$ (respectively $C_B$).

% One can show $S_e$ and $S_v$ is small compare to $S$.
% $S_e$ is small because of the expansion of graph
% and $S_v$ is small because $|S_v| \le |\delta_1 x| < \rho_2 |x|$ ($\delta_1$ is the parity-check matrix of $\cC_1$ and count the number of violated constraints).

Following \cite{leverrier2022quantum}, our goal is now to find a vertex $v \in V_0$ that shares $\Omega(\Delta)$ columns or rows with $S_n$. As long as $S_e$ and $S_v$ are not too large compared to $S_n$, robustness of the code then implies the local view of $v$ is within $O(\Delta^{3/2+\eps})$ of a codeword $c \in C_A \otimes C_B$, but also has total weight $\Omega(\Delta^2)$.\footnote{We note $C_A$ and $C_B$ can be chosen to have linear distance to ensure this.} This means we can construct a vector $y \in B^1$ by defining $y$ to be $c$ on the local view of $v$ and $0$ everywhere else. Since $x+y$ and $x$ match outside the local view (where $x$ has weight $\Omega(\Delta^2)$ and $x+y$ has weight $O(\Delta^{3/2+\eps})$), this implies $|x+y|<|x|$ as desired.

It therefore remains to find such a vertex $v \in V_0$, which is the main technical component of the proof. Let $T \subset V_0$ be the vertices that share at least one `heavy' column or row with a normal vertex (that is one with many 1s). One can equivalently think of this as an edge between $V_0$ and $V_1$ that is `heavy' in the sense that it is contained in many squares in $x$. The idea is then to show that there are many such heavy edges passing between $S$ and $T$. Using expansion of the underlying graph and our assumption $\hnorm{\delta_1 x} < \rho_2 \hnorm{x}$, one can prove that $T$, $S_e$, and $S_v$ are small compared to $S_n$. This implies that a typical vertex in $T$ has not just one, but $\Omega(\Delta)$ heavy edges to $S_n$, which in turn corresponds to sharing $\Omega(\Delta)$ rows and columns with normal vertices and completes the proof.

% Because of the shared column or row has weight $\Theta(\Delta)$ (being close to a codeword), this implies the vertices in $T$ is adjacent to $\Theta(\Delta)$ vertices in $S$.
% By the expansion of the graph, the size of $T$ is small, this then implies a fraction of the vertices in $T$ is adjacent to $\Theta(\Delta)$ vertices in $S_n$ through a heavy edge.
% This means, the local view of these vertices has weight $\Theta(\Delta^2)$.
% Pick one of these vertices as the desired vertex $v$.

% On the other hand, since the column and row shared with a normal vertex is close to a codeword in $C_A$ or $C_B$, the robustness of the tensor code $C_0^\perp$, implies the local view of such vertex in $T$ is at most $\Theta(\Delta^{3/2-\eps})$ away from a vector $c \in C_A \otimes C_B$. This constructs the vector $y$ by assigning $c$ at the local view of $v$ and $0$ elsewhere. Now, $|x+y| < |x|$ because $x$ has weight $\Theta(\Delta^2)$ on the local view of $v$ while $x+y$ has weight $\Theta(\Delta^{3/2-\eps})$ and $x, x+y$ has the same value outside of the local view.

% (We cheated a little bit in the above sketch. In the actual proof, one need to further consider the effect by $S_e$ and $S_v$. One can show $S_e$ and $S_v$ is small compare to $S_n$ and the same proof follows more or less by considering the puncture code.)

\section{Discussion}\label{sec:discussion}
\subsection{Related Work}\label{sec:related-work}
\paragraph{Sum-of-Squares Lower Bounds:}
At a conceptual level, our work fits into a long line of research on the limitations of Sum-of-Squares and related proof systems (e.g.\ Nullstellensatz \cite{beame1996lower,grigoriev1998tseitin}, Polynomial Calculus \cite{clegg1996using,buss2001linear}), and LP/SDP hierarchies (e.g.\ Sherali-Adams \cite{charikar2009integrality, mathieu2009sherali,benabbas2012sdp}, Lovász-Schrijver \cite{alekhnovich2005towards, schoenebeck2007tight,georgiou2010integrality}). 
% Recently there has also been a major push towards lower bounds in \textit{average case settings} as well (see e.g.\ \cite{kothari2017sum,barak2019nearly}). 
Most relevant to our setting is the line of work on Sum-of-Squares lower bounds initiated by Grigoriev \cite{grigoriev2001linear} (and later independently Schoenebeck \cite{schoenebeck2008linear}), who used boundary expansion to prove random 3-XOR instances cannot be refuted by $\Omega(n)$ levels of SoS. This lead to a number of works improving integrality gaps for more general classes of random $k$-CSPs \cite{tulsiani2009csp, barak2015sum,chan2016approximation,kothari2017sum} along with a number of other combinatorial optimization problems by reduction \cite{schoenebeck2008linear, tulsiani2009csp}.

In a sense, these prior works on SoS lower bounds for random instances can be viewed as increasingly strong and general formulations of the statement: `Sum-of-Squares fails to capture the probabilistic method.' In contrast, Dinur, Filmus, Harsha, and Tulsiani \cite{dinur2020explicit} recently exhibited the first \textit{explicit} families of CSPs hard for Sum-of-Squares based on an algebraic, highly structured family of objects called Ramanujan (or LSV) complexes \cite{lubotzky2005explicit}, suggesting a new paradigm of hardness for structured instances. Due to the poor systolic expansion of the Ramanujan complex, \cite{dinur2020explicit}'s bounds only hold up to $O(\sqrt{\log(n)})$ rounds of SoS as compared to $\Omega(n)$ levels for random instances. Nevertheless, the authors conjectured it might be possible to use such anti-random objects to fool $\Omega(n)$ levels as well. Our work can be viewed as a confirmation of this general hypothesis: anti-random structure (in particular certain \textit{algebraic} structure) is indeed as hard as random for Sum-of-Squares.

\paragraph{High Dimensional Expansion:}
High dimensional expansion in the form we consider (i.e.\ topological expansion) was originally introduced by Linial and Meshulam \cite{linial2006homological} to study the vanishing of cohomology on random simplicial complexes, and independently by Gromov \cite{gromov2010singularities} to study the topological overlapping principle. While our particular notion of small-set (co)-boundary expansion has not been studied in the literature, a stronger isoperimetric inequality for small, \textit{locally} minimal\footnote{A function is locally minimal if its weight cannot be decreased by adding the image of any standard basis vector $\partial_2(e_v)$. Any minimal function is also locally minimal (and the converse does not in general hold), so this is a strictly stronger notion of expansion than we study.} functions was used by Kaufman, Kazhdan, and Lubotzky \cite{kaufman2014ramanujan} to show the existence of bounded degree co-systolic expanders (another weakening of co-boundary expansion that replaces distance from $B^1$ with distance from $Z^1$), and later in \cite{evra2016bounded,kaufman2018cosystolic,kaufman2021unique} as well. A similar strategy was recently employed by Lin and Hsieh to construct c3-LTCs \cite{lin2022c} and later (conditional) qLDPC codes \cite{lin2022good}. It is worth noting that this stronger condition actually holds for our construction as well (see \Cref{remark:local}).

\paragraph{Quantum Codes and LTCs:}
Quantum LDPC and locally testable codes have long been known to share a close connection with topological notions of high dimensional expansion (see e.g.\ \cite{eldar2017local}). Indeed it was qLDPC constructions based on the Ramanujan complex \cite{evra2020decodable,KaufmanT21quantum} that first broke the $\sqrt{n}$ distance barrier and started the race to good qLDPCs \cite{evra2020decodable,breuckmann2021quantum,panteleev2021quantum,KaufmanT21quantum,hastings2021fiber,panteleev2021asymptotically,jeronimo2021explicit,leverrier2022quantum}. As discussed in \Cref{sec:qldpc-chain}, qLDPC codes satisfy a weaker variant of expansion called (co)-systolic distance, but must do so \textit{in both directions}. This is in strong contrast to typical constructions in the HDX literature which, due to the inherent asymmetry of simplicial complexes, typically have very poor boundary expansion (indeed this is also why we avoid simplicial complexes in this work). Such a guarantee was only recently achieved by Panteleev and Kalachev \cite{panteleev2021asymptotically} using refined products of chain complexes, and very recently simplified through a more geometric lens by Leverrier and Z\'emor \cite{leverrier2022quantum}. Since small-set (co)-boundary expansion is a stronger notion than (co)-systolic distance (see \Cref{sec:SS-boundary}), our analysis provides the strongest form of two-sided topological expansion to date. Further, this stronger form of two-sided expansion also gives some hope for a positive resolution of the famous qLTC conjecture. If, for instance, one can construct a $5$-term chain complex satisfying similar bi-directional small set expansion guarantees, qLTC would follow by the arguments of \cite{kaufman2014ramanujan,eldar2017local,lin2022c}.

\subsection{Further Directions}\label{sec:further}
\paragraph{Improved Integrality Gaps:} We prove the existence of an explicit family of $3$-XOR instances with a constant integrality gap of $1$ v.s $1-\mu$ for $3$-XOR, which falls short of reaching the $1$ v.s $\frac{1}{2}+\varepsilon$ gap exhibited by random instances \cite{grigoriev2001linear,schoenebeck2008linear}. While standard reductions in the SoS hierarchy can improve our gap to arbitrarily close ($1-\varepsilon$ v.s $\frac{1}{2}+\varepsilon$), perfect completeness is lost in the process. The same issue was observed in \cite{dinur2020explicit}'s original explicit construction from the Ramanujan complex. They asked whether it is possible to bypass imperfect completeness by giving a direct construction with co-systolic distance at least $\frac{1}{2}-\varepsilon$. This remains a natural open question in our setting as well---can one directly construct a small-set boundary expander with co-systolic distance $\frac{1}{2}-\varepsilon$? This would lead to a $1$ v.s $\frac{1}{2}+\varepsilon$ gap for MAX-$k$-XOR. Another natural question is whether such a bound can be transferred to $3$-XOR without losing factors in the soundness. Our current reduction loses a factor in $k$, but we have made no attempt to optimize this step (since any constant gap is sufficient to amplify with PCP techniques if one is okay with imperfect completeness).

\paragraph{Hardness Beyond XOR:} 
Many of the best integrality gaps known for combinatorial optimization problems (e.g. maximum independent set, chromatic number) are proved by reduction from $k$-CSPs \cite{tulsiani2009csp}. Unfortunately, such reductions are often \textit{randomized}, so they do not imply explicit hard instances even when combined with our XOR construction. This raises a natural question: can we build \textit{explicit} reductions from $k$-CSPs to classical combinatorial problems such as maximum independent set? Combined with our construction, this could lead to new families of hard instances for many well-studied combinatorial optimization problems. On a related note, it is worth observing that these reductions usually rely on CSPs with better integrality gaps than $k$-XOR. For instance, it is not hard to see that while  random instances of $k$-XOR only exhibit a $1$ v.s $1/2+\varepsilon$ integrality gap, more constrained $k$-CSPs (e.g. constraints of the form $Ax=b$ for some matrix $A \in \mathbb{F}_2^{d \times k}$) can lead to much larger integrality gaps up to $1$ v.s $\frac{2k}{2^k}+\varepsilon$ \cite{tulsiani2009csp}. Can we use high dimensional expanders to recover explicit $k$-CSPs matching these bounds? 

\paragraph{Small-Set HDX and Hardness of Approximation:} Small set expansion plays a fundamental role in hardness of approximation, ranging from use as a computational hardness assumption itself \cite{raghavendra2010graph}, to its pivotal use in the proof of the 2-2 games conjecture \cite{khot2017independent,dinur2018towards,dinur2018non,barak2018small,khot2018small,subhash2018pseudorandom} and recent converse use for algorithms for unique games \cite{bafna2020playing,bafna2022high}. This work gives the first application of \textit{high dimensional} small-set expansion to hardness of approximation, raising the natural question: does this high dimensional variant have a broader role to play in the field as well?

\section{Preliminaries I: SS-HDX to Hardness}\label{sec:prelims}
We now cover the preliminary definitions required to understand our general translation of expanding chain complexes into hard instance of 3-XOR, including basics on Sum-of-Squares, chain complexes, and traditional notions of high dimensional expansion. Background required for the HDX construction itself (e.g.\ on left-right Cayley complexes, robust tensor codes, etc.) is postponed to \Cref{sec:prelims2}.
\subsection{Sum of Squares and Refutations}
The Sum-of-Squares Semidefinite Programming Hierarchy is a powerful method for approximately solving constrained polynomial optimization problems, and is in particular the strongest known algorithmic framework for approximating CSPs. In brief, the SoS heirarchy presents a series of successively stronger SDP relaxations of a problem, where the `round-$t$' relaxation optimizes over $t$-local views and runs in time $n^{O(t)}$. We refer the reader to \cite{BS14,fleming2019semialgebraic} for general information on the SoS hierarchy.

In this work, we focus in particular on the SoS relaxations of MAX-$k$-XOR, the family of CSPs on $n$ variables $\{x_1,\ldots, x_n\}$ and $m$ constraints $\{C_i\}_{i \in [m] }$ of the form:
\[
x_{i_1} \oplus \ldots \oplus x_{i_j}= z_i,
\]
where $z_i \in \{0,1\}$, $\{i_1,\ldots,i_j\} \subset [n]$, and $j=j(i) \leq k$. Let $T_i \subset [n]$ denote the set of variables appearing in the $i$th constraint. Then the round-$t$ SoS SDP relaxation for MAX-$k$-XOR can be written as:

\begin{algorithm}[H]
\SetAlgoLined
\nonl \textbf{Input:} variables $\{v_S\}_{S \in {[n] \choose \leq t}}$\\
\nonl \textbf{Maximize:} $\frac{1}{2} + \frac{1}{2m}\sum\limits_{i=1}^m (-1)^{z_i}\langle v_{T_i}, v_{\emptyset} \rangle$\\
\nonl \textbf{Constraint to:} 
\begin{enumerate}[leftmargin=*]
    \item $\forall S_1 \oplus S_2= S_3 \oplus S_4, |S_i| \leq t: \langle v_{S_1},v_{S_2}\rangle = \langle v_{S_3},v_{S_4} \rangle$
    \item $\forall S, |S|\leq t: \norm{v_S}_2=1$
\end{enumerate}
 \caption{Round-$t$ SoS Relaxation for MAX-$k$-XOR}
 \label{alg:SoS}
\end{algorithm}
We refer to the maximum obtained by this SDP as the \textit{value} of the round-$t$ relaxation, and say an infinite family of instances of MAX-$k$-XOR is hard for (or cannot be refuted by) $t$ rounds of Sum of Squares if there exists a constant $\mu$ such that every instance is at most $(1-\mu)$-satisfiable, but the round-$t$ SDP relaxation has value $1$. In other words, $t$-rounds of the SoS hierarchy cannot distinguish between completely satisfiable and $(1-\mu)$-satisfiable instances---this is often said to induce an \textit{integrality gap} for the problem of size $\frac{1}{1-\mu}$.

Rather than working directly with the Sum-of-Squares SDP relaxations, we prove our hardness results through a fruitful connection with refutation complexity due to Schoenebeck \cite{schoenebeck2008linear} and Tulsiani \cite{tulsiani2009csp}. More formally, following \cite{dinur2020explicit} we will use a proof system called $\oplus$-resolution where, given a system of linear equations $\Lambda$ over $\mathbb{F}_2$, we may derive new equations by mod 2 summation: 
\[
\{\ell_1 = b_1\}, \{\ell_2 =b_2\} \implies \ell_1 \oplus \ell_2 = b_1 \oplus b_2.
\]
A \textit{refutation} in this system is a derivation that $0=1$, and in our setting corresponds to a proof that the XOR instance given by $\Lambda$ is unsatisfiable. Schoenebeck \cite{schoenebeck2008linear} and Tulsiani \cite{tulsiani2009csp} showed that any system without a short refutation has a matching SoS lower bound. 
%where the \textit{width} of a refutation denotes the maximum number of variables appearing in any equation in the derivation.
\begin{theorem}[{\cite[Lemma 13]{schoenebeck2008linear}} (as stated in \cite{dinur2020explicit})]\label{thm:resolution}
Let $\Lambda$ be a system of linear equations in $n$ variables over $\mathbb{F}_2$. If all refutations of $\Lambda$ have an equation using at least $2t$ variables, then the round-$t$ SoS Relaxation of $\Lambda$ has value $1$.
\end{theorem}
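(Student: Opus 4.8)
The plan is to directly construct a feasible solution $\{v_S\}_{S\in\binom{[n]}{\le t}}$ to the round-$t$ SDP of \Cref{alg:SoS} that attains value $1$; since the objective is $\tfrac12+\tfrac1{2m}\sum_i(-1)^{z_i}\langle v_{T_i},v_\emptyset\rangle$, attaining $1$ is the same as forcing $\langle v_{T_i},v_\emptyset\rangle=(-1)^{z_i}$ for every constraint. Write $\ell_S\coloneqq\bigoplus_{i\in S}x_i$ for the $\mathbb F_2$-linear form supported on $S\subseteq[n]$. The heart of the construction is a \emph{consistent valuation}: call a form $\ell$ \emph{derivable} if $\oplus$-resolution from $\Lambda$ produces an equation ``$\ell=\sigma(\ell)$'' for some bit $\sigma(\ell)$ via a derivation whose lines are narrow (fewer than $2t$ variables, with the final line allowed up to $2t$). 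The hypothesis enters here: if both $\ell=0$ and $\ell=1$ were derivable, gluing the two derivations and applying the resolution rule once more yields a derivation of $0=1$ whose only new line is $\ell_\emptyset=1$ (zero variables), so every line of the combined derivation is narrow --- contradicting the assumption that every refutation of $\Lambda$ uses at least $2t$ variables in some line. Thus $\sigma$ is a well-defined partial function, and the same gluing shows $\sigma$ is ``locally linear'': when $\ell$ and $\ell'$ are derivable and $\ell\oplus\ell'$ is short enough to arise as a symmetric difference of $\binom{[n]}{\le t}$-sets, then $\ell\oplus\ell'$ is derivable with value $\sigma(\ell)\oplus\sigma(\ell')$.

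Given such a valuation, I would build the Gram matrix. Let $U$ be the (affine) set of assignments $\alpha\in\{0,1\}^n$ consistent with $\sigma$ on all derivable forms (nonempty precisely because $\sigma$ is consistent), and set $M[S,S']\coloneqq\mathbb E_{\alpha\sim U}\big[(-1)^{\langle\alpha,\,\mathbbm 1_{S\oplus S'}\rangle}\big]$, i.e.\ an average of the rank-one character matrices $\chi_\alpha\chi_\alpha^{\mathsf T}$; then $\{v_S\}$ is any vector realization of $M$. Positive semidefiniteness is immediate since $M$ is an average of PSD rank-one matrices. Moreover $M[S,S']$ depends only on $S\oplus S'$ (it equals $(-1)^{\sigma(\ell_{S\oplus S'})}$ on derivable symmetric differences and $0$ otherwise), so SDP constraint (1) holds; $\ell_\emptyset$ is trivially derivable with $\sigma=0$, so every diagonal entry is $1$ and constraint (2) holds; each axiom $\ell_{T_i}=z_i$ is a one-line derivation, so $M[T_i,\emptyset]=(-1)^{z_i}$ and the objective equals $1$. (Equivalently, one can check $\sigma$ makes ``$\ell_{S\oplus S'}$ derivable'' an equivalence relation on $\binom{[n]}{\le t}$ with a well-defined sign on each class, so $M$ is block-diagonal with rank-one PSD blocks --- this is the more hands-on version of the same statement.) All of this is routine once the valuation is in hand.

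The one genuinely delicate step --- and the only place the no-narrow-refutation hypothesis does real work --- is making the valuation go through at exactly the width $2t$ matching SDP degree $t$. Symmetric differences of two sets of size $\le t$ can have up to $2t$ variables, so one must set up the notion of ``narrow derivation'' (which line widths are allowed, which last lines are exempt, whether one works with derivable forms or their span and which local-consistency statement is actually needed) so that gluing derivations never manufactures an over-wide line other than a harmless final one, and so that well-definedness follows cleanly. This bookkeeping is precisely the content of \cite[Lemma 13]{schoenebeck2008linear} (restated in \cite{dinur2020explicit}); modulo tracking those constants, the construction and verification are exactly as above.
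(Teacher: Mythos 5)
A preliminary remark: the paper does not prove \Cref{thm:resolution} at all --- it is imported verbatim from Schoenebeck (as restated by Dinur--Filmus--Harsha--Tulsiani) and used as a black box --- so your reconstruction can only be measured against the standard argument. Your overall architecture \emph{is} that standard argument: a partial valuation $\sigma$ on narrowly derivable forms, well-defined because two conflicting narrow derivations would glue into a narrow refutation, and a Gram matrix whose entries depend only on symmetric differences. However, the construction you put forward as primary is broken. You define $U$ as the set of genuine assignments $\alpha\in\{0,1\}^n$ consistent with $\sigma$ on \emph{all} derivable forms and claim $U\neq\emptyset$ ``precisely because $\sigma$ is consistent.'' This is false: the axioms are themselves derivable (one-line derivations), so every $\alpha\in U$ satisfies $\Lambda$, and hence $U=\emptyset$ whenever $\Lambda$ is unsatisfiable --- which is the only case in which the theorem has any content, and the only case used in this paper (the instances $\mathcal{I}_X$ are unsatisfiable as linear systems). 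Consistency of $\sigma$ as a partial function does not imply consistency of the full linear system over all derivable forms; the latter fails exactly when some subset of axioms sums to $0=1$ via a possibly \emph{wide} cancellation, which the hypothesis does not exclude. Consequently your parenthetical block-diagonal construction is not ``the more hands-on version of the same statement'': it is the only version that works, and it works precisely because positive semidefiniteness of each rank-one block needs only the \emph{local} multiplicativity $\sigma(\ell_{S\oplus S''})=\sigma(\ell_{S\oplus S'})\oplus\sigma(\ell_{S'\oplus S''})$ on width-$\le 2t$ forms, never global satisfiability.

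The width bookkeeping you flag at the end is also genuinely unresolved as you set it up: with ``intermediate lines $<2t$, final line $\le 2t$,'' gluing derivations of $\ell=0$ and $\ell=1$ for a form with exactly $2t$ variables yields a refutation that \emph{does} contain a width-$2t$ line, so it does not contradict the stated hypothesis; and transitivity of the equivalence relation requires derivability to be closed under one more resolution step whose conclusion can again have width exactly $2t$. Deferring this boundary accounting (or an off-by-one in the constant, harmless for the $\Omega(n)$ application) to the cited lemma is fair, since the paper defers the entire proof. But the $U=\emptyset$ issue is not a constant-chasing matter: drop the averaging construction and run the argument entirely through the equivalence classes and signed rank-one blocks.
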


\subsection{Chain Complexes}\label{sec:chain}
While previous works constructing hard instances of CSPs rely on structure coming from graphs (e.g.\ \cite{grigoriev2001linear,schoenebeck2008linear}) or hypergraphs \cite{dinur2020explicit}, we take inspiration from recent work on c3-LTCs \cite{dinur2021,lin2022c} and qLDPC codes \cite{panteleev2021asymptotically,leverrier2022quantum} and instead study a more general set of objects called \textit{chain complexes}.
\begin{definition}[Chain Complex]
Let $X(0)$, $X(1)$, and $X(2)$ be sets, and $\partial_2: \mathbb{F}_2^{X(2)} \to \mathbb{F}_2^{X(1)}$, $\partial_1: \mathbb{F}_2^{X(1)} \to \mathbb{F}_2^{X(0)}$ linear maps. The sequence 
\[
X: \mathbb{F}_2^{X(0)} \overset{\partial_1}{\leftarrow} \mathbb{F}_2^{X(1)} \overset{\partial_2}{\leftarrow} \mathbb{F}_2^{X(2)}
\]
is called a (3-term) chain complex if $\partial_1\partial_2=0$.
\end{definition}
For the sake of intuition, let's take a moment to see why chain complexes are indeed a generalization of hypergraphs. Given an $r$-uniform hypergraph $H \subseteq {[n] \choose r}$, let $X(i) \subset {[n] \choose i}$ denote any $i$-set contained in some $r$-set in $H$. $H$ then induces an $(r+1)$-term chain complex:\footnote{Note $X(0)$ is defined to be the empty set, and that our indexing is off by $1$ from the usual notation in topology.}
\[
X: \mathbb{F}_2^{X(0)} \overset{\partial_1}{\leftarrow} \mathbb{F}_2^{X(1)} \overset{\partial_2}{\leftarrow} \ldots \overset{\partial_r}{\leftarrow} \mathbb{F}_2^{X(r)},
\]
where $\partial_if(x)$ is given by summing $f$ (mod 2) over $x$'s `boundary:'
\begin{equation}\label{eq:boundary-simp}
\forall f \in \mathbb{F}_2^{X(i)}: \partial_if(x) = \sum\limits_{y \in X(i): y \supset x}f(y).
\end{equation}
For instance, when $x$ is a vertex, $\partial_2f(x)$ averages over all edges containing $x$. As such, $\partial$ is usually called the \textit{boundary operator}, and it can be checked without too much difficulty that $\partial_{i-1}\partial_{i}=0$ (e.g.\ for $r=3$, this follows by noting a vertex is incident to either $0$ or $2$ edges of any given triangle).

In fact, the boundary operators can actually always be seen to have a similar form to \Cref{eq:boundary-simp}, even on a generic chain complex. This follows from passing to the matrix representation as discussed in \Cref{sec:intro}. Namely, we may view our 3-term chain complex as a pair of bipartite graphs $B_0=(X(0),X(1),E_1)$ and $B_1=(X(1),X(2),E_2)$, whose bipartite adjacency matrices are given by the matrix representations of $\partial_1$ and $\partial_2$ respectively (in the standard basis). In this setting, it is easy to see that $\partial_1$ and $\partial_2$ are also given by mod $2$ summation over neighbors on these underlying bipartite graphs:
\begin{align*}
\forall f \in \mathbb{F}_2^{X(1)}: \partial_1f(x) &= \sum\limits_{y \in X(1): (x,y) \in E_1} f(y) \quad (\text{mod} ~ 2)\\
\forall f \in \mathbb{F}_2^{X(2)}: \partial_2f(y) &= \sum\limits_{z \in X(2): (y,z) \in E_2} f(z) \quad (\text{mod} ~ 2),
\end{align*}
where we have assumed for simplicity that $\partial_1$ and $\partial_2$ are \textit{non-degenerate} in the sense that every row and column have at least one $1$.\footnote{In a graph, for instance, non-degeneracy corresponds to have no free-floating (degree $0$) vertices.} All complexes we study are non-degenerate, so we make this assumption throughout.

In matrix form, it is also easy to see that the transpose operators of $\partial$, called the \textit{co-boundary operators} and denoted $\delta_0 \coloneqq \partial_1^T$ and $\delta_1 \coloneqq \partial_2^T$, also form a chain complex in the opposite direction. As a result, we will usually write our chain complexes in the following form:
\[
X: \mathbb{F}_2^{X(0)} \overset{\delta_0}{\underset{\partial_1}{\rightleftarrows}} \mathbb{F}_2^{X(1)} \overset{\delta_1}{\underset{\partial_2}{\rightleftarrows}} \mathbb{F}_2^{X(2)}.
\]
We call elements of $\mathbb{F}_2^{X(i)}$ \textit{$i$-chains}, and note $\mathbb{F}_2^{X(i)}$ is often written as ``$C_i$'' in the literature. We avoid this notation since it conflicts with classical notation for codes used later in the paper.

Finally, before moving on to expansion on chain complexes, we cover two further concepts that will control important parameters of our corresponding XOR instaces: \textit{maximum degree} and \textit{explicitness}.
\begin{definition}[Maximum Degree]
The maximum degree of a chain complex $X: \mathbb{F}_2^{X(0)} \overset{\delta_0}{\underset{\partial_1}{\rightleftarrows}} \mathbb{F}_2^{X(1)} \overset{\delta_1}{\underset{\partial_2}{\rightleftarrows}} \mathbb{F}_2^{X(2)}$ is the maximum Hamming weight\footnote{The Hamming weight of binary vector $v$, denoted $\hnorm{v}$, counts the number of entries with a $1$.} across rows and columns of $\partial_1$ and $\partial_2$.
\end{definition}
In the bipartite graph view, this is simply the maximum vertex degree across both graphs. We call an infinite family of chain complexes $\textit{bounded degree}$ if there exists some constant $d \in \mathbb{N}$ such that all complexes in the family have maximum degree at most $d$.

Finally, in this work we will be interested in infinite families of chain complexes (and their associated XOR instances), so we need to define a notion of computational complexity over these objects. We will follow the standard notions used for expander families, and call a family of complexes \textit{explicit} if its elements can be constructed in deterministic polynomial time (this is often called \textit{mildly explicit}, but the difference is not particularly important in our setting).
\begin{definition}[Explicit Chain Complexes]
We call an infinite family of chain complexes $\{X_i\}$ explicit if there exists a determinstic algorithm computing each $X_i$ in time polynomial in $|X_i(0) \cup X_i(1) \cup X_i(2)|$.  
\end{definition}
All complexes studied in this work will be bounded-degree, in which case this notion may equivalently be defined looking only at the size of $X_i(0)$. This corresponds correctly to the standard notion of complexity for the associated $k$-CSP family where $|X_i(0)|$ gives the number of variables.
\subsection{Homology and High Dimensional Expansion}\label{sec:hom}
High dimensional expansion is a generalization of expansion in graphs originally introduced by Linial and Meshulam \cite{linial2006homological} (and later independently by Gromov \cite{gromov2010singularities}) to study the vanishing of homology in simplicial complexes. In this section we cover the basics of homology and introduce Linial and Meshulam's original notion of \textit{(co)-boundary expansion}. These notions (or modifications thereof) will play an important role in our CSP construction.

Following standard notation, we call functions in the kernel of $\partial_i$ \textit{cycles}, and functions in the kernel of $\delta_i$ \textit{co-cycles}, denoted:
\[
Z_i = \ker(\partial_i), \quad Z^i = \ker(\delta_i).
\]
Since $\delta^2=\partial^2=0$, notice that $\Ima(\partial_{i+1})$ are always cycles, and $\Ima(\delta_{i-1})$ are always co-cycles. We call functions in these classes \textit{boundaries} and \textit{co-boundaries} respectively, denoted:
\[
B_i = \Ima(\partial_{i+1}), \quad B^i = \Ima(\delta_{i-1}).
\]
The \textit{homology} and \textit{co-homology} of the chain complex correspond to (co)-cycles mod (co)-boundary:
\[
H_i = Z_i/B_i, \ \ H^i = Z^i/B^i,
\]
where $G/H$ denotes the quotient group. The notions of cycles and boundaries can be used to define a natural generalization of expander graphs to chain complexes called \textit{(Co)-boundary expansion}.
\begin{definition}[(Co)-Boundary Expansion]\label{def:boundary}
We call $X: \mathbb{F}_2^{X(0)} \overset{\delta_0}{\underset{\partial_1}{\rightleftarrows}} \mathbb{F}_2^{X(1)} \overset{\delta_1}{\underset{\partial_2}{\rightleftarrows}} \mathbb{F}_2^{X(2)}$ a $\rho$-boundary expander if the weight of any element in $\mathbb{F}_2^{X(1)} \setminus B_1$ is proportional to its distance from the boundary:
\[
\forall f \in \mathbb{F}_2^{X(1)} \setminus B_1: \frac{\hnorm{\partial_1 f}}{d(f,B_1)} \geq \rho,
\]
where $d(f,B_1)=\min_{b \in B_1}\hnorm{f+b}$. Similarly, $X$ is an $\rho$-co-boundary expander if:
\[
\forall f \in \mathbb{F}_2^{X(1)} \setminus B^1: \frac{\hnorm{\delta_1 f}}{d(f,B^1)} \geq \rho.
\]
\end{definition}
Since this definition may seem un-motivated at first glance, let's again take a look at the case of a graph $G=(V,E)$ which induces the (3-term) chain complex:
\[
X: \mathbb{F}_2^{\emptyset} \overset{\delta_0}{\underset{\partial_1}{\rightleftarrows}} \mathbb{F}_2^{V} \overset{\delta_1}{\underset{\partial_2}{\rightleftarrows}} \mathbb{F}_2^{E}.
\]
It is not hard to see that the co-boundary expansion of this chain is exactly Cheeger's constant:
\[
h(G) \coloneqq \min_{S \neq V,\emptyset}\left\{\frac{E(S,V\setminus S)}{\min\{|S|,|V \setminus S|\}}\right\},
\]
where $E(S,V\setminus S)$ is the standard notation for the size of the edge boundary between $S$ and the rest of the graph. This connection follows from noting that the only co-boundaries on this chain are $V$ and $\emptyset$, and that $\hnorm{\delta_1 1_S}$ exactly counts the edge-boundary of $S$, so in particular we have:
\[
\frac{\hnorm{\delta_1 1_S}}{d(1_S,B^1)} = \frac{E(S,V \setminus S)}{\min\{|S|,|V \setminus S|\}}.
\]
\section{Small Set Boundary Expansion}\label{sec:SS-boundary}

(Co)-boundary expansion is a very strong property, and unconditional construction of bounded degree (co)-boundary expanders is still a major open question in topological high dimensional expansion. Furthermore, (co)-boundary expansion actually implies the vanishing of (co)-homology. This is an issue in and of itself in our setting, since as discussed in \Cref{sec:pf-overview}, our CSP construction rests crucially on the associated chain complex having non-trivial co-homology. With this in mind, we introduce a new notion of high dimensional expansion which requires boundary expansion to hold \textit{only over small sets}.
\begin{definition}[Small-Set (Co)-Boundary Expansion]
We call $X: \mathbb{F}_2^{X(0)} \overset{\delta_0}{\underset{\partial_1}{\rightleftarrows}} \mathbb{F}_2^{X(1)} \overset{\delta_1}{\underset{\partial_2}{\rightleftarrows}} \mathbb{F}_2^{X(2)}$ a $(\rho_1,\rho_2)$-small-set boundary expander if the weight of small chains in $\mathbb{F}_2^{X(1)}$ is proportional to their distance from the boundary:
\[
\forall f \in \mathbb{F}_2^{X(1)} \setminus B_1, \hnorm{f} \leq \rho_1|X(i)|: \frac{\hnorm{\partial_1 f}}{d(f,B_1)} \geq \rho_2.
\]
Similarly, $X$ is a $(\rho_1,\rho_2)$-small-set co-boundary expander if:
\[
\forall f \in \mathbb{F}_2^{X(1)} \setminus B^1, \hnorm{f} \leq \rho_1|X(i)|: \frac{\hnorm{\delta_1 f}}{d(f,B^1)} \geq \rho_2
\]
We call $X$ a $(\rho_1,\rho_2)$-small-set HDX if it is both a $(\rho_1,\rho_2)$-small-set boundary and $(\rho_1,\rho_2)$-small-set co-boundary expander.
\end{definition}
Just like standard co-boundary expansion is a higher-order analog of Cheeger's constant (edge-expansion) in graphs, small-set co-boundary expansion is the natural analog of small-set expansion on graphs. Surprisingly, despite the recent prominence of small-set expansion in areas such as hardness of approximation (see e.g.\ \cite{raghavendra2010graph, subhash2018pseudorandom}), this simple generalization to higher dimensions seems to be missing from the literature even for the more standard notion of simplicial complexes (though as discussed in \Cref{sec:related-work} some similar notions have been studied towards building good co-systolic expanders \cite{kaufman2014ramanujan,evra2016bounded,kaufman2018cosystolic,lin2022c,lin2022good}). In this work we show how small-set (co)-boundary expanders can be transformed into explicit hard CSP instances for linear levels of Sum-of-Squares. Given the prominence of small-set expansion throughout hardness of approximation, we expect SS-HDX may have many further applications.

Before moving on, it will be useful to observe two important implications of a complex satisfying small-set (co)-boundary expansion. First, while the notion does not require the vanishing of (co)-homology like standard boundary expansion, it does still imply a strong restriction on the structure of elements in $Z_1 \setminus B_1$: they must be large.
\begin{lemma}[Small-Set (Co)-Boundary Expansion $\to$ (Co)-Systolic Distance]\label{lem:ss-boundary-to-systolic-distance}
If $X$ is a $(\rho_1,\rho_2)$-small-set boundary expander, then all chains $f \in Z_1 \setminus B_1$ are large:
\begin{equation}\label{eq:sys-distance}
\min_{f \in Z_1 \setminus B_1 }\left\{\hnorm{f} \right\} > \rho_1|X(1)|.
\end{equation}
Similarly, if $X$ is a $(\rho_1,\rho_2)$-small-set co-boundary expander, then all chains $f \in Z^1 \setminus B^1$ are large:
\begin{equation}\label{eq:cosys-distance}
\min_{f \in Z^1 \setminus B^1 }\left\{\hnorm{f} \right\} > \rho_1|X(1)|.
\end{equation}
\end{lemma}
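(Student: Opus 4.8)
The plan is a one-line argument by contradiction, applying the defining inequality of small-set boundary expansion to a hypothetical small non-boundary cycle. Suppose toward a contradiction that there exists $f \in Z_1 \setminus B_1$ with $\hnorm{f} \le \rho_1 |X(1)|$. Since $f \notin B_1$ and $\hnorm{f} \le \rho_1 |X(1)|$, the function $f$ falls within the scope of the $(\rho_1,\rho_2)$-small-set boundary expansion guarantee, so
\[
\frac{\hnorm{\partial_1 f}}{d(f, B_1)} \ge \rho_2 .
\]
On the other hand, $f \in Z_1 = \ker(\partial_1)$ means $\partial_1 f = 0$, so the numerator $\hnorm{\partial_1 f}$ is zero. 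Thus the left-hand side equals $0$, forcing $\rho_2 \le 0$, which contradicts $\rho_2 \in (0,1)$ — unless the denominator $d(f,B_1)$ is zero.

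So the only point requiring a sentence of care is ruling out $d(f,B_1) = 0$. Here I would note that $B_1 = \Ima(\partial_2)$ is a linear subspace of the finite-dimensional space $\mathbb{F}_2^{X(1)}$, hence a finite set, so $d(f,B_1) = \min_{b \in B_1} \hnorm{f+b}$ is attained, and it equals $0$ if and only if $f \in B_1$. Since we assumed $f \notin B_1$, we get $d(f,B_1) \ge 1$, and the expansion inequality yields $0 = \hnorm{\partial_1 f} \ge \rho_2 \cdot d(f,B_1) \ge \rho_2 > 0$, the desired contradiction. This establishes \Cref{eq:sys-distance}.

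For \Cref{eq:cosys-distance}, the argument is verbatim identical after the substitutions $\partial_1 \rightsquigarrow \delta_1$, $Z_1 \rightsquigarrow Z^1 = \ker(\delta_1)$, and $B_1 \rightsquigarrow B^1 = \Ima(\delta_0)$, using $(\rho_1,\rho_2)$-small-set co-boundary expansion in place of small-set boundary expansion; no new ideas are needed.

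I do not anticipate any genuine obstacle: the statement is an immediate structural consequence of the definition, since a cycle has vanishing boundary and therefore cannot satisfy the expansion inequality unless it is already a boundary. The only thing to be careful about is phrasing the $d(f,B_1)>0$ step cleanly so that the division in the definition is unambiguous.
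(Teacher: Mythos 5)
Your proposal is correct and is essentially the same argument as the paper's: apply the small-set expansion inequality to a hypothetical small $f \in Z_1 \setminus B_1$, use $\partial_1 f = 0$ to force $d(f,B_1)=0$, and conclude $f \in B_1$, a contradiction. Your extra sentence ruling out $d(f,B_1)=0$ is a harmless (and slightly more careful) elaboration of the same step.
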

\begin{proof}
We prove the first statement only, the second follows similarly. Assume $h_1 \in Z_1 \setminus B_1$ satisfies $\hnorm{h_1} \leq \rho_1|X(1)|$. Since $h_1$ is a cycle, we have $\partial_1 h_1 = 0,$ but then by small-set boundary expansion we have $d(h_1,B_1) = 0$, so $h_1 \in B_1$ giving the desired contradiction.
\end{proof}
We say complexes satisfying \Cref{eq:sys-distance} have \textit{systolic distance} $\rho_1$, and complexes satisfying \Cref{eq:cosys-distance} have \textit{co-systolic distance} $\rho_1$. As discussed in \Cref{sec:pf-overview}, these properties were recently crucial to the construction of good qLDPC codes \cite{panteleev2021asymptotically}, and were also used by \cite{dinur2020explicit} to prove the soundness of their 3-XOR construction. Indeed it is worth noting that bounded co-systolic distance is actually enough for soundness in our construction as well, we only truly need the full power of small-set boundary expansion in one direction.

Second, we will crucially rely on a standard connection between boundary expansion and a concept known as an \textit{isoperimetric inequality}, which relates the size of an object to the size of its boundary.\footnote{For example the isoperimetric inequality on $\mathbb{R}^2$ says the length (boundary) of any closed curve is at least $2\sqrt{\pi}$ times the square root of its area.} In particular, it is well known that boundary expansion is actually equivalent to an isoperimetric inequality for \textit{minimal} chains (see e.g.\ \cite{kaufman2016isoperimetric}).
\begin{definition}[Minimal Chains]\label{def:minimal}
A function $h \in \mathbb{F}_2^{X(1)}$ is called minimal if $\forall b \in B_1$, $\hnorm{h+b} \leq \hnorm{h}$.
\end{definition}
A similar equivalence holds for small-set boundary expansion as well, and will be crucial for the completeness of our CSP instances: $X$ is a small-set boundary expander if and only if small, minimal chains in $X$ satisfy an isoperimetric inequality.
\begin{lemma}[Small-Set (Co)-Boundary $\leftrightarrow$ (Co)-Isoperimetric Inequality]\label{lemma:SS-LTC}
Let $X$ be a $(\rho_1,\rho_2)$-small-set boundary expander. Then for any $h \in \mathbb{F}_2^{X(1)}$ satisfying:
\begin{enumerate}
    \item $h$ is small: $\hnorm{h} \leq \rho_1|X(1)|$
    \item $h$ is minimal: $\forall b \in B_1: \hnorm{h+b} \geq \hnorm{h}$
\end{enumerate}
the boundary $\partial_1 h$ must be large relative to $h$:
\begin{equation}\label{eq:iso}
\hnorm{\partial_1 h} \geq \rho_2 \hnorm{h}.
\end{equation}
Conversely if \Cref{eq:iso} holds for any small minimal chain, then $X$ is a $(\rho_1,\rho_2)$-small-set boundary expander.
\end{lemma}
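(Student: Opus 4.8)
The plan is to prove the two directions separately, each by a short reduction to the definition of small-set boundary expansion.

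\textbf{Isoperimetry from expansion.} Suppose $h \in \mathbb{F}_2^{X(1)}$ is small ($\hnorm{h} \leq \rho_1|X(1)|$) and minimal. If $h \in B_1$, then minimality forces $\hnorm{h} = 0$ (take $b = h$), and \Cref{eq:iso} holds trivially since both sides are $0$. Otherwise $h \in \mathbb{F}_2^{X(1)} \setminus B_1$, so the small-set boundary expansion hypothesis applies and gives $\hnorm{\partial_1 h} \geq \rho_2 \cdot d(h, B_1)$. The only remaining observation is that minimality of $h$ says precisely that $h$ itself achieves the minimum in $d(h,B_1) = \min_{b \in B_1}\hnorm{h+b}$ — indeed, taking $b = \zo$ shows $d(h,B_1) \leq \hnorm{h}$, and minimality gives $\hnorm{h+b} \geq \hnorm{h}$ for all $b \in B_1$, so $d(h,B_1) = \hnorm{h}$. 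Substituting yields $\hnorm{\partial_1 h} \geq \rho_2 \hnorm{h}$, which is \Cref{eq:iso}.

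\textbf{Expansion from isoperimetry.} Conversely, assume \Cref{eq:iso} holds for every small minimal chain, and fix an arbitrary $f \in \mathbb{F}_2^{X(1)} \setminus B_1$ with $\hnorm{f} \leq \rho_1|X(1)|$. Choose $b^\ast \in B_1$ minimizing $\hnorm{f + b^\ast}$, and set $h \coloneqq f + b^\ast$. By construction $h$ is minimal (any $b \in B_1$ gives $\hnorm{h+b} = \hnorm{f + (b^\ast + b)} \geq \hnorm{f + b^\ast} = \hnorm{h}$ since $b^\ast + b \in B_1$), and $h$ is small since $\hnorm{h} \leq \hnorm{f + \zo} = \hnorm{f} \leq \rho_1 |X(1)|$. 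Moreover $h \notin B_1$ because $f \notin B_1$ and $b^\ast \in B_1$. Applying the hypothesized isoperimetric inequality to $h$ gives $\hnorm{\partial_1 h} \geq \rho_2 \hnorm{h} = \rho_2 \cdot d(f,B_1)$, where the last equality is the definition of $h$ and $d(f,B_1)$. Finally $\partial_1 h = \partial_1 f + \partial_1 b^\ast = \partial_1 f$ because $b^\ast \in B_1 = \Ima(\partial_2)$ and $\partial_1 \partial_2 = 0$. Hence $\hnorm{\partial_1 f} \geq \rho_2 \cdot d(f,B_1)$, i.e.\ $\frac{\hnorm{\partial_1 f}}{d(f,B_1)} \geq \rho_2$, establishing small-set boundary expansion. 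The co-boundary statement follows by the identical argument with $\partial_1, \partial_2, B_1$ replaced by $\delta_1, \delta_0, B^1$ and using $\delta_1 \delta_0 = 0$.

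\textbf{Main obstacle.} There is no real obstacle here — this is a routine unwinding of definitions, and the only point requiring a moment's care is matching "minimal chain" with "achieves the distance to $B_1$," i.e.\ that the minimum defining $d(f,B_1)$ is attained at a minimal representative, together with the fact that adding a boundary does not change $\partial_1 f$. I would present the argument essentially as above, being careful to handle the degenerate case $h \in B_1$ (equivalently $\hnorm{h} = 0$) so that the expansion hypothesis, which is only stated for $f \notin B_1$, is never applied vacuously.
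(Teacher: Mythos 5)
Your proof is correct and follows essentially the same route as the paper's: in the forward direction, minimality identifies $\hnorm{h}$ with $d(h,B_1)$ so the expansion hypothesis gives the isoperimetric bound, and in the converse you replace $f$ by its minimal representative $f+b^\ast$ and use $\partial_1 b^\ast = 0$. Your explicit handling of the degenerate case $h \in B_1$ (forcing $\hnorm{h}=0$) is a small point of extra care the paper glosses over, but the argument is otherwise identical.
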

\begin{proof}
We start with the forward direction. Since $\hnorm{h} \leq \rho_1|X(i)|$ and $h$ is minimal, by small-set boundary expansion we have that:
\[
\hnorm{\partial_1 h} \geq \rho_2 d(h,B) = \rho_2\min_{b \in B}\{\hnorm{h+b}\} = \rho_2\hnorm{h}.
\]
The converse implication is similar. Let $h \in \mathbb{F}_2^{X(1)}$ be a small chain satisfying $|h| \leq \rho_1|X(1)|$, and let $b \in B_1$ be the boundary minimizing $\hnorm{h+b}$. Then by isoperimetry of $h+b$, we have:
\[
\hnorm{\partial_1 h} = \hnorm{\partial_1(h+b)} \geq \rho_2\hnorm{h+b} = \rho_2d(h,B_1)
\]
as desired.
\end{proof}
We note the same result holds for co-boundary expansion by the same proof. Isoperimetry (combined with good systolic distance) will be crucial for showing completeness of our XOR instances, replacing the use of Gromov's filling inequality in \cite{dinur2020explicit}. 

\section{From Expansion to Hardness}\label{sec:HDX-to-XOR}
We now show how to translate any family of expanding, bounded-degree 3-term chain complexes with non-trivial cohomology into hard instances of 3-XOR for $\Omega(n)$-levels of Sum-of-Squares. 
\begin{theorem}\label{thm:HDX-to-XOR}
Let $\{X_i\}$ be an explicit family of chain complexes of maximum degree $k\in \mathbb{N}$ and $\mu,\rho_1,\rho_2 \in (0,1)$ constants such that:
\begin{enumerate}
    \item $H^1$ is non-trivial,
    \item $X$ has $\mu$-co-systolic distance,
    \item $X$ is a $(\rho_1,\rho_2)$-small-set boundary expander.
\end{enumerate}
Then there exist constants $\mu_1,\mu_2 \in (0,1)$ depending only on $k$, $\mu$, $\rho_1$, and $\rho_2$ and an explicit family of MAX-$3$-XOR instances $\{\mathcal{I}_i\}$ on $n_i$ variables such that:
\begin{enumerate}
    \item Every instance is at most $(1-\mu_1)$-satisfiable,
    \item No instance can be refuted by $\mu_2n_i$ levels of the SoS hierarchy.
\end{enumerate}
Moreover if the complex has degree lower bounded by $3$, $\{\mathcal{I}_i\}$ are instances of $3$-XOR.
\end{theorem}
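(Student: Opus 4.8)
The plan is to break the argument into two parallel tracks matching the two conclusions, then handle the reduction to $3$-XOR at the end. First I would define the MAX-$k$-XOR instance $\mathcal{I}_{X_i,\beta_i}$ exactly as in \Cref{eq:chain-csp}: variables indexed by $X_i(0)$, one constraint per $r \in X_i(1)$ asserting that the sum of variables in $r$'s $\delta_0$-neighborhood equals $\beta_i(r)$, where $\beta_i \in Z^1 \setminus B^1$ is any fixed co-cycle that is not a co-boundary (such a $\beta_i$ exists by the non-triviality of $H^1$, and can be found in polynomial time by linear algebra over $\mathbb{F}_2$, preserving explicitness). Since each row of $\delta_0 = \partial_1^T$ has weight at most $k$, this is a MAX-$k$-XOR instance; since the family is explicit and bounded-degree, $\{\mathcal{I}_{X_i,\beta_i}\}$ is an explicit family.

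\emph{Soundness.} Any assignment $x \in \mathbb{F}_2^{X_i(0)}$ satisfies the constraint at $r$ iff $(\delta_0 x)(r) = \beta_i(r)$; hence the fraction of \emph{violated} constraints is exactly $\hnorm{\delta_0 x + \beta_i}/|X_i(1)|$. Now $\delta_0 x \in B^1$, so $\delta_0 x + \beta_i \in \beta_i + B^1 \subseteq Z^1$ (using $\beta_i \in Z^1$), and it is not in $B^1$ because $\beta_i \notin B^1$. Thus $\delta_0 x + \beta_i \in Z^1 \setminus B^1$, so by the $\mu$-co-systolic distance hypothesis $\hnorm{\delta_0 x + \beta_i} > \mu|X_i(1)|$. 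Therefore every assignment violates at least a $\mu$-fraction of constraints, i.e.\ the instance is at most $(1-\mu)$-satisfiable; set $\mu_1 := \mu$ (before the $3$-XOR reduction).

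\emph{Completeness via refutation width.} By \Cref{thm:resolution} it suffices to show every $\oplus$-resolution refutation of $\mathcal{I}_{X_i,\beta_i}$ contains an equation on $\Omega(n_i)$ variables. Following Ben-Sasson–Wigderson and \cite{dinur2020explicit}, view a refutation as an in-degree-$2$ DAG: leaves are the original constraints (the constraint at $r$ corresponds to the indicator $e_r \in \mathbb{F}_2^{X_i(1)}$, i.e.\ the singleton $\{r\}$), an internal node is the $\oplus$-sum of its two parents, and the root is $0=1$. To each node $v$ assign $h_v \in \mathbb{F}_2^{X_i(1)}$ recording which constraints are XOR'd together to form the equation at $v$; then the set of variables in the equation at $v$ is exactly the support of $\partial_1 h_v$, so the equation width at $v$ is $\hnorm{\partial_1 h_v}$. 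Define the potential $\phi(v) := d(h_v, B_1)$ (distance to the boundary, not raw weight, as flagged in the overview). This is subadditive along the DAG since $h_v = h_{v_1} + h_{v_2}$ and $d(\cdot, B_1)$ is subadditive; leaves have $\phi \le 1$; and the root has $\phi$ large — here one uses that the equation at the root is a contradiction, which in the chain-complex language forces $h_{\text{root}}$ to pair nontrivially with $\beta_i$, hence $h_{\text{root}} \notin B_1^{\perp}\text{-type set}$, and then co-systolic distance (applied appropriately, or the analogous systolic statement) forces $d(h_{\text{root}}, B_1) \ge \mu|X_i(1)|$ (this is the step where the global structure of the non-trivial (co)homology enters and where I would be most careful). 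A standard potential argument then yields an interior node $v^\star$ with $\phi(v^\star) \in [\tfrac{1}{2}\rho_1|X_i(1)|, \rho_1|X_i(1)|]$ — medium weight: small enough to invoke \Cref{lemma:SS-LTC} on a minimal representative $h' \in h_{v^\star} + B_1$, and large enough that the isoperimetric inequality gives $\hnorm{\partial_1 h_{v^\star}} = \hnorm{\partial_1 h'} \ge \rho_2 \hnorm{h'} = \rho_2 \phi(v^\star) \ge \tfrac{1}{2}\rho_1\rho_2 |X_i(1)|$. Since bounded degree gives $|X_i(1)| = \Theta(|X_i(0)|) = \Theta(n_i)$, this is the desired $\Omega(n_i)$ width lower bound, so \Cref{thm:resolution} gives value $1$ for $\mu_2 n_i$ rounds of SoS with $\mu_2 := \Theta(\rho_1 \rho_2)$.

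\emph{Reduction to $3$-XOR.} The above produces MAX-$k$-XOR; to obtain MAX-$3$-XOR, apply the standard gadget splitting each width-$k$ constraint $x_{i_1} \oplus \cdots \oplus x_{i_j} = b$ into a chain of $3$-XOR constraints using $j-2$ fresh auxiliary variables. Bounded degree of the complex means each original variable appears in $O(1)$ constraints, so only $O(n_i)$ auxiliary variables are added, giving a new instance on $n_i' = \Theta(n_i)$ variables; a satisfying assignment lifts/projects between the two, so soundness degrades only by a constant factor (a violated width-$k$ constraint forces a violated $3$-constraint in its gadget), and SoS pseudo-expectations transfer with only a constant loss in the number of rounds by the standard reduction of \cite{schoenebeck2008linear,tulsiani2009csp}. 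This yields constants $\mu_1, \mu_2$ depending only on $k, \mu, \rho_1, \rho_2$, proving the theorem; and if every constraint already has width $\ge 3$ (i.e.\ degree lower bounded by $3$), the gadget for width-$3$ constraints is trivial and one checks the output is genuine $3$-XOR. The main obstacle is the completeness argument — specifically pinning down precisely why $h_{\text{root}}$ must be far from $B_1$ (the interaction between the refutation, the choice $\beta_i \in Z^1 \setminus B^1$, and which of the two small-set expansion directions is actually needed), and verifying that the potential function based on distance-to-boundary rather than Hamming weight is genuinely subadditive while still letting \Cref{lemma:SS-LTC} apply at the medium node.
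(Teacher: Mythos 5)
Your proposal follows the paper's argument almost exactly: same construction with $\beta \in Z^1\setminus B^1$, same soundness argument via co-systolic distance, same Ben-Sasson--Wigderson potential argument with $\kappa(v)=d(h_v,B_1)$, and the same appeal to \Cref{thm:resolution}. The one step you flag as uncertain is indeed the one place your sketch is off, and the fix is the following. The root satisfies $\partial_1 h_r = 0$ (the equation $0=1$ has no variables), so $h_r \in Z_1$; and $\langle \beta, h_r\rangle = 1$ while every $b=\partial_2 f \in B_1$ has $\langle\beta,\partial_2 f\rangle = \langle\delta_1\beta,f\rangle = 0$ since $\beta\in Z^1$, so $h_r+b \in Z_1\setminus B_1$ for every $b\in B_1$. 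The lower bound on $\kappa(r)=d(h_r,B_1)$ then comes from the \emph{systolic} distance implied by the small-set \emph{boundary} expansion hypothesis (\Cref{lem:ss-boundary-to-systolic-distance}), giving $\kappa(r) > \rho_1|X(1)|$ --- not from the $\mu$-co-systolic distance hypothesis, which concerns $Z^1\setminus B^1$ and is used only for soundness. This matters quantitatively: your medium-node interval is $[\tfrac{1}{2}\rho_1|X(1)|,\rho_1|X(1)|]$, which requires the root potential to exceed $\rho_1|X(1)|$; a bound of $\mu|X(1)|$ would not suffice if $\mu<\rho_1$. With that correction, the potential argument and the application of \Cref{lemma:SS-LTC} at the medium node go through exactly as you describe. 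For the reduction to $3$-XOR you use the standard chain gadget ($j-2$ auxiliaries per width-$j$ constraint) rather than the paper's iterated halving gadget; both are fine, and in either case the completeness transfer should be justified by exhibiting the substituted pseudo-expectation (each auxiliary is a stand-in for a product of original variables), which costs only a constant factor in degree since $k$ is constant.
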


\Cref{thm:HDX-to-XOR} is actually proved mainly by associating an instance of MAX-$k$-XOR to every complex $X_i$ in the family. Moving to $3$-XOR can then be done through standard NP-reduction arguments within the SoS hierarchy.\footnote{Though one must be careful that the number of variables does not blow up in the reduction, as we discuss later in the section.} Thus the main challenge is to build hard instances of MAX-$k$-XOR from our complexes. We'll start by overviewing our construction, which is a generalization of \cite{dinur2020explicit}'s $3$-XOR construction from simplicial complexes to generic chain complexes.

\paragraph{Construction:} It will be convenient to phrase our construction in the bipartite graph formulation discussed in \Cref{sec:prelims}. Recall that any chain complex $X: \mathbb{F}_2^{X(0)} \overset{\delta_0}{\underset{\partial_1}{\rightleftarrows}} \mathbb{F}_2^{X(1)} \overset{\delta_1}{\underset{\partial_2}{\rightleftarrows}} \mathbb{F}_2^{X(2)}$ may be written as a pair of bipartite graphs $B_1=(X(0),X(1),E_1)$ and $B_2=(X(1),X(2),E_2)$ where $E_1$ and $E_2$ are uniquely determined by the matrix representations of the boundary operators. Assuming our complex has non-trivial co-homology, let $\beta \in Z^1 \setminus B^1$.\footnote{Note that $\beta$ can be found in polynomial time by standard linear algebraic techniques.} Our associated CSP $\mathcal{I}_{X,\beta}$ is given by adding for every $y \in X(1)$ the constraint:
\[
C_y \coloneqq \left \{\sum\limits_{x \in X(0): (x,y) \in E_1} x \ \ (\text{mod} ~ 2) = \beta(x)\right\}.
\]
Since the choice of $\beta \in Z^1 \setminus B^1$ will not matter, in what follows we will drop it from the notation and just write $\mathcal{I}_X$. We make two observations about $\mathcal{I}_X$ before moving on. First, let's confirm $\mathcal{I}_X$ is indeed an instance of MAX-$k$-XOR.
\begin{observation}
If $X$ has maximum degree $k$, then $\mathcal{I}_X$ is an instance of MAX-$k$-XOR.
\end{observation}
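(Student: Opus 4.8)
The plan is simply to unwind the definitions. Recall that $\mathcal{I}_X$ contains one constraint $C_y$ for each $y \in X(1)$, namely
\[
C_y \coloneqq \left\{ \sum_{x \in X(0):\, (x,y) \in E_1} x \;=\; \beta(y) \pmod 2 \right\},
\]
so the set of variables appearing in $C_y$ is exactly $\{x \in X(0) : (x,y) \in E_1\}$. By construction $E_1$ is the edge set of the bipartite graph $B_1 = (X(0),X(1),E_1)$ whose biadjacency matrix is the matrix representation of $\partial_1$ in the standard basis. Hence the number of variables in $C_y$ equals the degree of $y$ in $B_1$, which is precisely the Hamming weight of the row (equivalently, the column on the $X(1)$ side) of $\partial_1$ indexed by $y$.

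First I would invoke the definition of maximum degree: the maximum degree of $X$ is the largest Hamming weight appearing among all rows and columns of $\partial_1$ and $\partial_2$. In particular, if this quantity is $k$, then every column of $\partial_1$ on the $X(1)$ side has weight at most $k$, so every constraint $C_y$ involves at most $k$ variables. Comparing with the definition of MAX-$k$-XOR — a family of $\mathbb{F}_2$-linear (XOR) constraints each on $j = j(i) \le k$ variables — we conclude immediately that $\mathcal{I}_X$ is an instance of MAX-$k$-XOR.

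There is no genuine obstacle here; the only points worth stating carefully are (i) that the degree bound is being applied in the direction relevant to $\partial_1$, i.e. to the bipartite graph $B_1$ and not $B_2$, and (ii) that non-degeneracy of the complex (assumed throughout) guarantees each $C_y$ has at least one variable, so the instance is a bona fide XOR instance with no trivial constraints. I would not include the analogue for the ``moreover'' clause of \Cref{thm:HDX-to-XOR} (that a lower bound of $3$ on the degree forces every $C_y$ to have exactly three variables when $k=3$) in this observation, as it is not needed here, though it follows by the same one-line argument applied to lower bounds on the column/row weights of $\partial_1$.
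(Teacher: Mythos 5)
Your proof is correct and takes essentially the same route as the paper: the paper's one-line argument is exactly that each $y \in X(1)$ has at most $k$ neighbors in $X(0)$ because the maximum degree of the complex bounds the row/column weights of $\partial_1$. Your additional remarks on non-degeneracy and on which bipartite graph the bound is applied to are accurate but not needed for the observation as stated.
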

\begin{proof}
This follows immediately from the chain complex having maximum degree $k$, as every $y \in X(1)$ then has at most $k$ neighbors in $X(0)$ (i.e.\ that there are at most $k$ elements $x$ such that $(x,y) \in E_1$).
\end{proof}
Second, we observe that our instances have at most a linear number of constraints.
\begin{observation}
If $X$ has maximum degree $k$, then $\mathcal{I}_X$ has at most $k|X(0)|$ constraints.
\end{observation}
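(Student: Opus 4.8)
The plan is a straightforward double-counting argument on the bipartite graph $B_1 = (X(0), X(1), E_1)$ underlying the boundary operator $\partial_1$. By construction, $\mathcal{I}_X$ contains exactly one constraint $C_y$ for each element $y \in X(1)$, so the number of constraints equals $|X(1)|$, and it suffices to show $|X(1)| \leq k|X(0)|$.

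First I would bound $|E_1|$ by counting edges from the $X(0)$ side: since the chain complex has maximum degree $k$, every column of $\partial_1$ — equivalently, every $x \in X(0)$ — has at most $k$ neighbors in $X(1)$, giving $|E_1| \leq k|X(0)|$. Next I would count edges from the $X(1)$ side using the standing non-degeneracy assumption on $\partial_1$ (every row has at least one $1$): every $y \in X(1)$ has at least one neighbor in $X(0)$, so $|X(1)| \leq |E_1|$. Chaining the two estimates yields $|X(1)| \leq |E_1| \leq k|X(0)|$, which is exactly the claimed bound on the number of constraints. There is no genuine obstacle here; the only point requiring a moment's care is the appeal to non-degeneracy of $\partial_1$ to rule out isolated vertices on the $X(1)$ side of $B_1$, and this is precisely the assumption maintained throughout the paper.
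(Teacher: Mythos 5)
Your proposal is correct and matches the paper's own argument, which likewise observes that $\mathcal{I}_X$ has exactly $|X(1)|$ constraints and that non-degeneracy plus the degree bound give $|X(1)| \leq k|X(0)|$. You have merely spelled out the edge-counting step that the paper leaves implicit.
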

\begin{proof}
Since our complex is non-degenerate and degree at most $k$, we have that $|X(1)| \leq k|X(0)|$. $\mathcal{I}_X$ has $|X(1)|$ constraints by construction.
\end{proof}
As a result, any explicit infinite family of bounded degree chain complexes with non-trivial cohomology induces an explicit infinite family of MAX-$k$-XOR instances with linearly many constraints for some constant $k \in \mathbb{N}$. The main work in proving \Cref{thm:HDX-to-XOR} therefore boils down to proving that the instances $\mathcal{I}_X$ are \textit{sound} (at most $(1-\mu)$-satisfiable), and \textit{complete} (look satisfiable to SoS). 
\begin{theorem}\label{thm:HDX-to-kXOR}
Let $X$ be a chain complex of maximum degree $k$ and $\mu,\rho_1,\rho_2 \in (0,1)$ constants such that:
\begin{enumerate}
    \item $H^1$ is non-trivial,
    \item $X$ has $\mu$-co-systolic distance,
    \item $X$ is a $(\rho_1,\rho_2)$-small-set boundary expander.
\end{enumerate}
Then $\mathcal{I}_X$ is an instance of MAX-$k$-CSP on $|X(0)|$ variables satisfying:
\begin{enumerate}
    \item Soundness: $\mathcal{I}_X$ is at most $(1-\mu)$-satisfiable,
    \item Completeness: $\mathcal{I}_X$ cannot be refuted by $\left(\frac{\rho_1\rho_2}{4k}|X(0)|\right)$-levels of the SoS hierarchy.
\end{enumerate}
\end{theorem}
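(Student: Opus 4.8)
The plan is to establish soundness by a short coding-theoretic computation and completeness through the width-based reduction of \Cref{thm:resolution}, which we feed with the isoperimetric reformulation of small-set boundary expansion from \Cref{lemma:SS-LTC} via a Ben-Sasson--Wigderson style potential argument (following \cite{ben1999short,dinur2020explicit}). That $\mathcal{I}_X$ is an instance of MAX-$k$-XOR on $|X(0)|$ variables is immediate from the maximum degree bound. For soundness, observe that an assignment $a \in \FF_2^{X(0)}$ violates the constraint $C_y$ precisely when $(\delta_0 a)(y) \neq \beta(y)$, so the number of violated constraints under $a$ is $\hnorm{\delta_0 a + \beta}$ and the minimum fraction of violated constraints over all assignments equals $d(\beta, B^1)/|X(1)|$. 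Since $\beta \in Z^1 \setminus B^1$ and $B^1 \subseteq Z^1$, every $\beta + b$ with $b \in B^1$ again lies in $Z^1 \setminus B^1$; hence $d(\beta, B^1) \geq \min_{f \in Z^1 \setminus B^1} \hnorm{f} > \mu|X(1)|$ by the $\mu$-co-systolic distance hypothesis, so $\mathcal{I}_X$ is at most $(1-\mu)$-satisfiable.

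For completeness, by \Cref{thm:resolution} it suffices to show every $\oplus$-resolution refutation of $\mathcal{I}_X$ contains an equation on more than $\tfrac{\rho_1\rho_2}{2k}|X(0)|$ variables. View such a refutation as an in-degree-two DAG whose leaves are labelled by constraints, and to each node $v$ assign the chain $h_v \in \FF_2^{X(1)}$ recording (mod $2$) which leaf constraints are combined at $v$, so that $h_v = e_y$ at a leaf labelled $y$ and $h_v = h_{v_1} + h_{v_2}$ at an internal node with parents $v_1, v_2$. An easy induction shows the equation derived at $v$ is $\uinner{\partial_1 h_v}{x} = \uinner{\beta}{h_v}$, whose number of variables is exactly $\hnorm{\partial_1 h_v}$. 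At the root the derived equation is $0 = 1$, forcing $\partial_1 h_{\mathrm{root}} = 0$ and $\uinner{\beta}{h_{\mathrm{root}}} = 1$: the first puts $h_{\mathrm{root}} \in Z_1$, and were $h_{\mathrm{root}} = \partial_2 g \in B_1$ we would get $\uinner{\beta}{h_{\mathrm{root}}} = \uinner{\partial_2^T \beta}{g} = \uinner{\delta_1 \beta}{g} = 0$ since $\beta \in Z^1$, a contradiction; thus $h_{\mathrm{root}} \in Z_1 \setminus B_1$.

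Now introduce the potential $\Phi(v) = d(h_v, B_1)$. Since $B_1$ is a linear subspace, $\Phi$ is subadditive along the DAG, $\Phi(v) \leq \Phi(v_1) + \Phi(v_2)$; at leaves $\Phi(v) \leq \hnorm{e_y} = 1$; and at the root, using that $h_{\mathrm{root}} + b \in Z_1 \setminus B_1$ for every $b \in B_1$ together with the systolic distance guaranteed by small-set boundary expansion (\Cref{lem:ss-boundary-to-systolic-distance}), $\Phi(\mathrm{root}) > \rho_1|X(1)| \geq 1$ for all but finitely many members of the family. Descending from the root, at each step to the parent of larger potential, the potential drops by at most a factor of $2$ (subadditivity) and must eventually fall to at most $1$ at a leaf, so the first node $v^\star$ along the way at which it becomes $\leq \rho_1|X(1)|$ satisfies $\tfrac{\rho_1}{2}|X(1)| < \Phi(v^\star) \leq \rho_1|X(1)|$. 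Choosing $b^\star \in B_1$ attaining $d(h_{v^\star}, B_1)$ and setting $h^\star = h_{v^\star} + b^\star$, the chain $h^\star$ is small ($\hnorm{h^\star} \leq \rho_1|X(1)|$), minimal, and nonzero, so \Cref{lemma:SS-LTC} gives $\hnorm{\partial_1 h^\star} \geq \rho_2 \hnorm{h^\star} > \tfrac{\rho_1\rho_2}{2}|X(1)|$; and since $b^\star \in B_1 \subseteq \ker \partial_1$ we have $\partial_1 h^\star = \partial_1 h_{v^\star}$, so the equation at $v^\star$ uses more than $\tfrac{\rho_1\rho_2}{2}|X(1)| \geq \tfrac{\rho_1\rho_2}{2k}|X(0)|$ variables, where the last inequality uses that non-degeneracy and maximum degree $k$ force $|X(0)| \leq k|X(1)|$. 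Applying \Cref{thm:resolution} with $2t = \tfrac{\rho_1\rho_2}{2k}|X(0)|$ shows the round-$t$ SoS relaxation for $t = \tfrac{\rho_1\rho_2}{4k}|X(0)|$ has value $1$, which together with soundness proves the completeness claim.

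The main obstacle is the core of the completeness argument: arranging that the descent lands on a node whose potential lies in a window simultaneously small enough to apply small-set boundary expansion (Hamming weight at most $\rho_1|X(1)|$) and large enough to force a wide equation, and then passing cleanly from $h_{v^\star}$ to a minimal representative $h^\star$ with the same $\partial_1$-image before invoking \Cref{lemma:SS-LTC}. Checking that $\Phi$ is genuinely subadditive and that $Z_1 \setminus B_1$ membership propagates to every $b$-shift of $h_{\mathrm{root}}$ (so that the systolic bound applies to $\Phi(\mathrm{root})$ rather than merely $\hnorm{h_{\mathrm{root}}}$) are the places most in need of care; the MAX-$k$-XOR to $3$-XOR step, needed only for \Cref{thm:HDX-to-XOR} and not for this statement, is deferred to standard SoS-preserving NP-reductions.
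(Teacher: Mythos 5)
Your proposal is correct and follows essentially the same route as the paper: soundness via the identification of violated constraints with $\hnorm{\beta+\delta_0 f}$ and the co-systolic distance bound, and completeness via the $\oplus$-resolution width bound of \Cref{thm:resolution}, the potential $d(h_v,B_1)$, the root lying in $Z_1\setminus B_1$, subadditivity, and the isoperimetric inequality of \Cref{lemma:SS-LTC} applied to a medium-potential node. Your explicit descent along the larger-potential parent is just a slightly more detailed rendering of the intermediate-node step the paper asserts, not a different argument.
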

\noindent We'll break the proof of \Cref{thm:HDX-to-kXOR} into two parts, corresponding to soundness and completeness. 

\paragraph{Soundness:} The soundness of our construction can be proved with no further background, and is a direct generalization of arguments in \cite{dinur2020explicit} from simplicial complexes to general chain complexes.
\begin{proof}[Proof of Soundess (\Cref{thm:HDX-to-kXOR})]
Recall that our constraints are defined by some function $\beta \in Z^1 \setminus B^1$. Let $f \in \mathbb{F}_2^{X(0)}$ be a potential assignment to variables in our instance. For any constraint $y \in X(1)$, we can check if $f$ satisfies $y$ by evaluating $(\beta + \delta_0 f)(y)$:
\[
(\beta + \delta_0 f)(y) = \beta(y) + \sum\limits_{(x,y) \in E_1} f(x).
\]
In other words, the Hamming weight $\hnorm{\beta+\delta_0 f}$ exactly corresponds to the number of violated constraints in our instance. The key is now to observe that since $\beta \in Z^1 \setminus B^1$, $\beta + \delta_0 f$ also lies in $Z^1 \setminus B^1$. Since $X$ has $\mu$-co-systolic distance, we have $\hnorm{\beta + \delta_0f} \geq \mu|X(1)|$, so any assignment to variables must violate at least a $\mu$ fraction of constraints as desired.
\end{proof}

\paragraph{Completeness:} Proving the completeness of \Cref{thm:HDX-to-kXOR} requires a bit more setup. As discussed in \Cref{sec:prelims}, we appeal to the general paradigm of Grigoriev \cite{grigoriev2001linear}, Schoenebeck \cite{schoenebeck2008linear}, and Tulsiani \cite{tulsiani2009csp} relating refutation width with Sum-of-Squares completeness. Our lower bound on the refutation width of $\mathcal{I}_X$ can be viewed in some sense as a mix of the classical strategy of Ben-Sasson and Wigderson \cite{ben1999short} (who used traditional boundary expansion on graphs to show lower bounds against refuting Tseiten formulas) and the recent argument of \cite{dinur2020explicit} using Gromov's filling inequality on the Ramanujan complex. We mostly follow the exposition given in the latter.

We will consider refutations in the $\oplus$-resolution proof system, in which two linear equations $\ell_1=b_1$ and $\ell_2=b_2$ can be added to derive $\ell_1 \oplus \ell_2 = b_1 \oplus b_2$. By \Cref{thm:resolution}, it is enough to prove that any refutation of the linear equations corresponding to $\mathcal{I}_X$ has width at least $\frac{\rho_1\rho_2}{2k}|X(0)|$, where width measures the largest number of variables appearing in any equation in the refutation. A refutation in the $\oplus$-resolution proof system can be modeled as a DAG where leaves correspond to linear equations (our XOR constraints), internal nodes have two incoming edges and correspond to the XOR of their parents, and the root derives the contradiction $0=1$. 

To track the number of variables at each step, we follow the strategy of \cite{dinur2020explicit} and associate to each node $v$ of the DAG a function $h_v \in \mathbb{F}_2^{X(1)}$ and value $b_v \in \{0,1\}$ as follows. Since each leaf in the refutation corresponds to one of our XOR constraints, assign the leaf corresponding to $s \in X(1)$ the indicator $1_s \in \mathbb{F}_2^{X(1)}$ and value $\beta(s) \in \mathbb{F}_2$ (where we recall $\beta \in Z^1\setminus B^1$ was the chain used to define our constraint values). The function and value assigned to each internal node $v$ with parents $v_1,v_2$ is then defined recursively to be the (mod 2) sum of its parents:
\[
h_v = h_{v_1} \oplus h_{v_2},~\text{and }~\beta_v = \beta_{v_1} \oplus \beta_{v_2}.
\]
Notice that by construction, $\partial_1 h_v$ exactly corresponds to the variables appearing in the linear equation at node $v$. This means we can bound the width of the refutation by identifying some node $v$ in the refutation whose associated function $h_v$ has large boundary.

To this end, following \cite{dinur2020explicit}'s high dimensional variant of \cite{ben1999short}'s original technique we define the following potential function across nodes in our refutation:
\[
\kappa(v) \coloneqq \min_{b \in B_1}\hnorm{h_v + b}.
\]
Our goal will be to find a node in the refutation whose potential is large, but still small enough that we can apply small-set boundary expansion. Namely, if we can find $v$ such that $\frac{\rho_1}{2}|X(1)| \leq \kappa(v) \leq \rho_1|X(1)|$, then by our isoperimteric inequality for small sets (\Cref{lemma:SS-LTC}) we have:
\[
\hnorm{\partial h_v} = \hnorm{\partial(h_v + b)} \geq \rho_2\hnorm{h_v + b} \geq \frac{\rho_1\rho_2}{2}|X(1)| \geq \frac{\rho_1\rho_2}{2k}|X(0)|
\]
which would give the desired bound on refutation width. With this in mind, we can finally prove completeness.
\begin{proof}[Proof of completeness (\Cref{thm:HDX-to-kXOR})]
As discussed above, it is sufficient to prove that any refutation has width at least $\frac{\rho_1\rho_2}{2k}|X(0)|$, and that this can be done by finding a node $v$ with potential $\frac{\rho_1}{2}|X(1)| \leq \kappa(v) \leq \rho_1|X(1)|$. The proof follows the classical strategy of \cite{ben1999short}. Namely it is enough to show the following three properties:
\begin{enumerate}
    \item The root node has large potential: $\kappa(r) > \rho_1|X(1)|$
    \item The leaves have small potential: $\kappa(s) \leq 1$
    \item The potential function is sub-additive: $\kappa(v) \leq \kappa(v_1)+\kappa(v_2)$.
\end{enumerate}
As long as these hold, getting from the leaf potential of (at most) $1$ to the root potential of $\kappa(r)  >\rho_1|X(1)|$ requires passing through some internal node $v$ with $\frac{\rho_1}{2}|X(1)| \leq \kappa(v) \leq \rho_1|X(1)|$ as desired.

It is left to prove the three properties, which follow from similar analysis as in \cite{dinur2020explicit} for the Ramanujan complex. The second and third properties are essentially immediate. Leaves are given by the indicator function of elements $s \in X(1)$, which are at most distance one from $\vec{0} \in B_1$ (the all $0$s function). Sub-additivity follows from the triangle inequality. For a node $v$ with parents $v_1$ and $v_2$, let $b_1$ and $b_2$ be boundaries minimizing $d(h_{v_1},B_1)$ and $d(h_{v_2},B_1)$, then we have:
\[
\kappa(v_1)+\kappa(v_2) = \hnorm{h_{v_1} + b_1} + \hnorm{h_{v_2}+b_2} \geq \hnorm{h_{v_1} + b_1+ h_{v_2}+b_2} = \hnorm{h_{v} + b_1+b_2} \geq \kappa(v).
\]
For the first property, we argue the root node $r$ must satisfy $h_r \in Z_1 \setminus B_1$. If this is the case we are done by the fact that our complex has good co-systolic distance by \Cref{lem:ss-boundary-to-systolic-distance}:
\[
\kappa(h_r) = \min_{b \in B_1}\hnorm{h_r + b} > \rho_1|X(1)|,
\]
since any $h_r+b \in Z_1 \setminus B_1$ as well. To see that $h_r \in Z_1 \setminus B_1$, first note that since the root node in our refutation corresponds to the equation $0=1$, we must have $\partial_1 h_r = 0$ and therefore $h_r \in Z_1$. To complete the proof we therefore only need to show $h_r \notin B_1$, which follows from the fact that $b_r=1$ for the root node. Namely, notice that for any node $v$ we have
$b_v = \langle \beta, h_v \rangle$
by construction (since we are just summing mod $2$ over the constraints), and in particular that $\langle \beta, h_r \rangle = 1$. On the other hand, if $h_r \in B_1$, then by definition there exists $f \in \mathbb{F}_2^{X(2)}$ such that $h_r=\partial_2 f$ and since $\delta_1 = \partial_2^T$ we have
\[
\langle \beta, h_r \rangle = \langle \beta, \partial_2 f \rangle = \langle \delta_1 \beta, f \rangle = 0
\]
since $\beta \in Z^1$. Thus $h_r$ is in $Z_1$ but not $B_1$, which completes the proof.
\end{proof}

We are now one step away from proving \Cref{thm:HDX-to-XOR}; we just need to show how to move from a hard instance of MAX-$k$-XOR to a hard instance of $3$-XOR. Such a reduction is fairly standard within the SoS literature, but we'll include the proof for completeness. To do so, we'll need to introduce a second way to characterize completeness of an instance for $t$ rounds of SoS through an object called a \textit{pseudo-expectation}. Given a set of variables $\{x_i\}_{i \in [n]}$ and $d \in \mathbb{N}$, let  $\text{poly}_{\mathbb{R}}(\{x_i\},d)$ denote the set of degree at most $d$ polynomials in $\R[x_1,\ldots,x_n]$. For our purposes, it is enough to think of a degree $2t$ pseudo-expectation as an operator $\tilde{\mathbb{E}}: \text{poly}_{\mathbb{R}}(\{x_i\},2t) \to \mathbb{R}$ that `pretends' to be an expectation in the following four ways:
\begin{enumerate}
    \item Scaling: $\tilde{\mathbb{E}}[1]=1$
    \item Linearity: 
    \[
    \forall a,b \in \mathbb{R}, p(x),q(x) \in \text{poly}_{\mathbb{R}}(\{x_i\},2t): \ \ \tilde{\mathbb{E}}[ap(x)+bq(x)]=a\tilde{\mathbb{E}}[p(x)]+b\tilde{\mathbb{E}}[q(x)]
    \]
    \item Positivity of Squares: 
    \[
    \forall q(x) \in \text{poly}_{\mathbb{R}}(\{x_i\},t): \ \  \tilde{\mathbb{E}}[q(x)^2] \geq 0
    \]
    \item Booleanity: 
    \[\forall j \in [n], p(x) \in \text{poly}_{\mathbb{R}}(\{x_i\},2t-2): \ \  \tilde{\mathbb{E}}[x_j^2p(x)] =\tilde{\mathbb{E}}[p(x)].
    \]
\end{enumerate}
With this in mind, let $\mathcal{I}$ be an instance of XOR on $n$ variables $\{x_1,\ldots,x_n\}$. It will be convenient to express constraints in $C_i \in \mathcal{I}$ multiplicatively as:
\[
C_i \coloneqq \left\{x_{i_1}\ldots x_{i_j}=b_i \right\}
\]
where $b_i \in \{-1,1\}$ and assignments now range over $\{-1,1\}^n$. Let $C_i(x)$ be shorthand for the lefthand product of variables in the constraint, and $|C_i|$ denote the degree of $C_i(x)$. It turns out (see e.g.\ \cite{fleming2019semialgebraic}) that completeness of $\mathcal{I}$ against $t$ levels of Sum-of-Squares is equivalent to the existence of a degree $2t$ pseudo-expectation which respects every constraint $C_i \in \mathcal{I}$ in the following strong sense:
\begin{equation}\label{eq:SoS-constraints}
\forall p(x) \in \text{poly}_{\mathbb{R}}(\{x\},2t-|C_i|): \ \ 
\tilde{\mathbb{E}}[C_i(x)p(x)] = b_i\tilde{\mathbb{E}}[p(x)].
\end{equation}
With this in mind, we can finally put everything together and prove \Cref{thm:HDX-to-XOR}.
\begin{proof}[Proof of \Cref{thm:HDX-to-XOR}]
We'll start by constructing an explicit family of hard instances of MAX-$k$-XOR, then reduce to $3$-XOR through the above machinery. By \Cref{thm:HDX-to-kXOR}, every complex $X_i$ in our family corresponds to an instance $\mathcal{I}_{X_i}$ of MAX-$k$-XOR  on $n_i=|X_i(0)|$ vertices and $m_i \leq k|X(0)|$ constraints that is at most $(1-\mu)$-satisfiable but cannot be refuted by the $\frac{\rho_1\rho_2}{4k}n_i$-level SoS relaxation. Furthermore each instance $\mathcal{I}_{X_i}$ can be constructed in $\text{poly}(n_i)$ time. This follows immediately from the fact that $\{X_i\}$ itself is explicit (and bounded degree), and that finding some $\beta \in Z_1 \setminus B_1$ can be done in polynomial time by basic linear algebra over dimension $O(n_i)$ vector spaces. 

It is left to argue that we can use $\mathcal{I}_{X_i}$ to construct a corresponding instance of $3$-XOR that remains hard for Sum-of-Squares. We will use the following simple approach: given a clause with more than $3$ variables, split it into two clauses of about half the size whose product is the original clause. More formally, given a constraint $C_i \coloneqq \left\{x_{i_1}\ldots x_{i_j}=b_i\right\}$, we apply the transformation:
\begin{equation}\label{eq:reduction}
C_i \to \left\{C_i^{(0)} \coloneqq \{x_{i_1}\ldots x_{i_{\floor{j/2}}}y_i = b_i\}, \ \ C_i^{(1)} \coloneqq \{x_{i_{\floor{j/2}+1}}\ldots x_{i_{j}}y_i = 1\}\right\}
\end{equation}
where $y_i$ is a newly introduced `dummy' variable. Given a generic instance of MAX-$k$-XOR $\mathcal{I}_k$, let $\Phi(\mathcal{I}_k)$ denote the CSP resulting from applying the above transformation to every constraint with more than $3$ variables. We will argue that $\Phi(\mathcal{I}_k)$ has about half as many variables per clause as the original instance, but maintains soundness and completeness up to constant factors.
\begin{claim}\label{claim:reduction}
Let $\mathcal{I}_k$ be an instance of MAX-$k$-XOR for $k \geq 4$ on $n$ variables and $m$ constraints such that:
\begin{enumerate}
    \item $\mathcal{I}_k$ is at most $(1-\mu)$-satisfiable,
    \item $\mathcal{I}_k$ cannot be refuted by $t$ rounds of Sum-of-Squares.
\end{enumerate}
Then $\Phi( \mathcal{I}_k )$ is an instance of MAX-$j$-XOR for $j = \ceil{k/2}+1$ on at most $n+m$ variables and $2m$ constraints satisfying:
\begin{enumerate}
    \item $\mathcal{I}_k$ is at most $(1-\mu/2)$-satisfiable,
    \item $\mathcal{I}_k$ cannot be refuted by $\frac{2t}{k}$ rounds of Sum-of-Squares.
\end{enumerate}
\end{claim}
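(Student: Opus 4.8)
The plan is to treat the three quantitative claims (arity, number of variables, number of constraints), soundness, and completeness separately. The structural counts are immediate from the definition of $\Phi$ in \Cref{eq:reduction}: a constraint on $j' \le k$ variables is replaced by two constraints on $\lfloor j'/2\rfloor+1$ and $\lceil j'/2\rceil+1 \le \lceil k/2\rceil+1$ variables sharing one fresh dummy variable, while constraints of width at most $3$ are left untouched (and $3 \le \lceil k/2\rceil+1$ for $k\ge 4$). Hence $\Phi(\mathcal{I}_k)$ is a MAX-$(\lceil k/2\rceil+1)$-XOR instance, it introduces at most one new variable per split constraint (so $\le n+m$ variables), and it produces at most two constraints per original constraint (so $\le 2m$ constraints).

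For soundness the key algebraic fact is that $C_i^{(0)}(x,y)\,C_i^{(1)}(x,y) = C_i(x)$ as a $\{-1,1\}$-valued identity, since the dummy $y_i$ appears exactly once in each factor and $y_i^2=1$. Consequently, for any assignment to the variables of $\Phi(\mathcal{I}_k)$ whose restriction to the original variables violates $C_i$ (i.e. $C_i(x)\ne b_i$), it is impossible to simultaneously satisfy $C_i^{(0)}$ (value $b_i$) and $C_i^{(1)}$ (value $1$), no matter how $y_i$ is set; for unsplit constraints this is trivial. Since $\mathcal{I}_k$ is at most $(1-\mu)$-satisfiable, every assignment violates at least $\mu m$ of the original constraints, and the observation above yields at least $\mu m$ distinct violated constraints of $\Phi(\mathcal{I}_k)$ (distinct because each original constraint owns its own dummy and its own pair). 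As $\Phi(\mathcal{I}_k)$ has at most $2m$ constraints, this is at least a $\mu/2$ fraction, giving the claimed $(1-\mu/2)$-satisfiability bound.

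For completeness I will use the pseudo-expectation characterization of \Cref{eq:SoS-constraints}. Starting from a degree-$2t$ pseudo-expectation $\tilde{\mathbb{E}}$ that respects all constraints of $\mathcal{I}_k$, I introduce the substitution ring homomorphism $\sigma:\R[x,y]\to\R[x]$ sending $y_i \mapsto b_i\prod_{\ell\le\lfloor |C_i|/2\rfloor} x_{i_\ell}$ (the ``first-half'' monomial, which $C_i^{(0)}$ forces to equal $y_i$), and I set $\tilde{\mathbb{E}}'[p]:=\tilde{\mathbb{E}}[\sigma(p)]$. Because each $y_i$ is replaced by a monomial of degree at most $\lfloor k/2\rfloor$, $\sigma$ inflates degree by a factor at most $k/2$, so $\tilde{\mathbb{E}}'$ is a well-defined degree-$2t'$ operator with $t'=2t/k$ (then $\sigma$ sends degree-$\le 2t'$ polynomials into degree-$\le t'k\le 2t$ ones). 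The four pseudo-expectation axioms transfer from $\tilde{\mathbb{E}}$: scaling and linearity are immediate; positivity of squares uses $\sigma(q^2)=\sigma(q)^2$ with $\deg\sigma(q)\le t$ for $\deg q\le t'$; booleanity for an $x_j$ is inherited directly, and for a $y_i$ it follows because $\sigma(y_i)^2=\prod_\ell x_{i_\ell}^2$ collapses under $\tilde{\mathbb{E}}$'s booleanity. Finally I verify $\tilde{\mathbb{E}}'$ respects every constraint of $\Phi(\mathcal{I}_k)$: an unsplit $C_i$ is fixed by $\sigma$ and the property is inherited; for $C_i^{(0)}$, $\sigma$ maps its monomial to $b_i\prod_\ell x_{i_\ell}^2$, which collapses to $b_i$ (its prescribed value); for $C_i^{(1)}$, $\sigma$ maps its monomial to $b_i\,C_i(x)$, and since $\tilde{\mathbb{E}}$ respects $C_i$ with value $b_i$ the net factor is $b_i^2=1$ (its prescribed value). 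Hence $\Phi(\mathcal{I}_k)$ admits a degree-$\tfrac{4t}{k}$ pseudo-expectation and cannot be refuted by $\tfrac{2t}{k}$ rounds of Sum-of-Squares.

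I expect the only delicate part to be the degree bookkeeping in the completeness step: for each admissible test polynomial $p$ (e.g. $\deg p\le 2t'-|C|$ in \Cref{eq:SoS-constraints}) one must confirm that $\sigma(p)$, and the various products $C_i(x)\sigma(p)$ and $(\prod_\ell x_{i_\ell}^2)\sigma(p)$ appearing above, all have degree at most $2t$, so that the corresponding properties of $\tilde{\mathbb{E}}$ may legitimately be invoked. This is where the hypothesis $k\ge 4$ is used (it gives room in inequalities such as $|C|\,k/2\ge|C|$ and $k/2-1\ge 1$); these checks are routine but need to be done carefully rather than waved through.
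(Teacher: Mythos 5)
Your proposal is correct and follows essentially the same route as the paper's proof: the same product identity $C_i = C_i^{(0)}\cdot C_i^{(1)}$ for soundness, and the same substitution homomorphism (the paper's map $T$, your $\sigma$, both sending $y_i\mapsto b_i C_i^{(0)}(x,1)$) to pull back a pseudo-expectation for completeness, with matching degree bookkeeping. The only difference is that you flag the degree checks on the test polynomials explicitly, which the paper leaves implicit; your accounting there is consistent with the claimed $\tfrac{2t}{k}$ rounds.
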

Let's first show \Cref{claim:reduction} completes the proof of our main theorem. Starting from our MAX-$k$-XOR instance $\mathcal{I}_{X_i}$, \Cref{claim:reduction} shows that $\Phi^{\ceil{\log(k)}}(\mathcal{I}_{X_i})$ is an instance of MAX-$3$-XOR on $O_k(n_i)$ variables that is at most $(1-\Omega_k(\mu))$-satisfiable but cannot be refuted by $\Omega_k(n_i)$ rounds of Sum-of-Squares. This follows from the fact that the original (and all transformed instances) have $m \leq O_k(n_i)$ constraints.\footnote{We note that it is possible to improve the dependence on $k$ by slightly more involved analysis, but since $k$ is just a constant we choose to work with iterated applications of the above for simplicity of exposition.} Finally, if the original instance had no constraints with fewer than $3$ variables (which occurs if the original complex has degree lower bounded by $3$), $\Phi^{\ceil{\log(k)}}(\mathcal{I}_{X_i})$ is an instance of $3$-XOR. With this in mind, it is left to prove the claim.

\begin{proof}[Proof of \Cref{claim:reduction}]
The fact that $\Phi(\mathcal{I}_k)$ is an instance of MAX-$j$-XOR for $j = \ceil{k/2}+1$ on at most $n+m$ variables and at most $2m$ constraints is immediate from construction. The main interest lies in proving soundness and completeness of the instance.
\\
\\
\noindent\textbf{Soundness:} Soundness of $\Phi(\mathcal{I}_k)$ follows from observing that since $y_i^2=1$, $C_i = C_i^{(0)}\cdot C_i^{(1)}$. Namely by the soundness of the original instance, any assignment of variables to $\Phi(\mathcal{I}_k)$ must fail at least a $\mu$ fraction of original constraints $C_i$ (since these have no dependence on the new dummy variables). If $C_i=C_i^{(0)}\cdot C_i^{(1)}$ is violated it must be the case that either $C_i^{(0)}$ or $C_i^{(1)}$ is violated, so any assignment of variables to our transformed CSP $\Phi(\mathcal{I}_k)$ must still violate at least a $\mu/2$ fraction of its constraints.
\\
\\
\noindent\textbf{Completeness:} Given a degree $2t$ pseudo-expectation $\tilde{\mathbb{E}}$ satisfying the constraints of $\mathcal{I}_k$ (in the sense of \Cref{eq:SoS-constraints}), we must construct a new pseudo-expectation $\tilde{\mathbb{E}}_{\Phi}$ on the variables of $\Phi(\mathcal{I}_k)$ satisfying the transformed constraints. Given a polynomial $p(x,y) \in \R[\{x_i\},\{y_j\}]$, let $p(x,1) \in \R[\{x_i\}]$ denote the result of setting each $y$ variable to $1$. The idea is to observe that each dummy variable $y_i$ in the new instance can really be thought of as a `stand-in' for the product $b_ix_{i_1}\ldots x_{i_{\floor{j/2}}}=b_iC_i^{(0)}(x,1)$ in the sense that replacing each $y_i$ with $b_iC_i^{(0)}(x,1)$ simply returns the original instance. This suggests a natural strategy for defining our new pseudo-expectation $\tilde{\mathbb{E}}_{\Phi}$: just replace $y_i$ with $b_iC_i^{(0)}(x,1)$.\footnote{We thank Sam Hopkins for suggesting this general approach.}

Formally, this takes a bit of work. Let $S \subseteq [m]$ denote the set of indices on which we transformed our original instance, $\{y_j\}_{j \in S}$ denote the newly introduced variables, and $T: \R[\{x_i\}_{i \in [n]}, \{y_j\}_{j \in S}] \to \R[x_1,\ldots,x_n]$ denote the map which independently replaces each occurrence of $y_j$ with $b_jC_j^{(0)}(x,1)$ (and leaves variables in $\{x_i\}$ unchanged). It is an elementary exercise to show that $T$ satisfies the following useful properties:
\begin{enumerate}
    \item $T$ is (additively) linear: \[T(az(x,y))=aT(z(x,y)) \ \text{ and } \ T(z_1(x,y) + z_2(x,y)) = T(z_1(x,y)) + T(z_2(x,y))\]
    \item $T$ is (multiplicatively) linear: \[T(z_1(x,y)z_2(x,y)) = T(z_1(x,y))T(z_2(x,y))\]
    \item $T$ does not substantially blow up degree: \[Deg(T(z(x,y))) \leq \floor{k/2} Deg(z(x,y)).\]
\end{enumerate}
With this in mind, define the value of our new pseudo-expectation on any degree at most $\frac{2t}{\floor{k/2}}$ polynomial $z(x,y) \in \R[\{x_i\}_{i \in [n]}, \{y_j\}_{j \in S}]$ as:
\[
\tilde{\mathbb{E}}_{\Phi}[z(x,y)] \coloneqq \tilde{\mathbb{E}}[T(z(x,y))]
\]
which is well-defined by the third property. It is an easy exercise to check that $\tilde{\mathbb{E}}_{\Phi}$ remains a pseudo-expectation, as the linearity of $T$ ensures scaling, linearity, positivity of squares, and booleanity are all inherited from $\tilde{\mathbb{E}}$. Thus it is left to check that $\tilde{\mathbb{E}}_{\Phi}$ satisfies every constraint $C_i^{(j)} \in \Phi(\mathcal{I}_k)$ in the sense of \Cref{eq:SoS-constraints}. To see this, first observe that
\begin{align*}
    \tilde{\mathbb{E}}_\Phi[C_i^{(j)}(x,y)z(x,y)] &= \tilde{\mathbb{E}}[T(C_i^{(j)}(x,y)z(x,y))]\\
    &= \tilde{\mathbb{E}}[T(C_i^{(j)}(x,y))T(z(x,y))].
\end{align*}
Taking a closer look at $T(C_i^{(j)}(x,y))$, we have by definition that:
\[
T(C_i^{(j)}(x,y)) = \begin{cases}
b_iC_i(x) & \text{if } j=1\\
b_i(C_i^{(0)}(x,1))^2 & \text{if } j=0.
\end{cases}
\]
Breaking into case analysis, we then have for $j=1$:
\begin{align*}
\tilde{\mathbb{E}}[T(C_i^{(1)}(x,y))T(z(x,y))] &= \tilde{\mathbb{E}}[b_iC_i(x)T(z(x,y))]\\
&=\tilde{\mathbb{E}}[T(z(x,y))]\\
&=\tilde{\mathbb{E}}_\Phi[z(x,y)]
\end{align*}
and for $j=0$ that:
\begin{align*}
\tilde{\mathbb{E}}[T(C_i^{(0)}(x,y))T(z(x,y))] &= \tilde{\mathbb{E}}[b_i(C^{(0)}_i(x,1))^2T(z(x,y))]\\
&=b_i\tilde{\mathbb{E}}[T(z(x,y))]\\
&=b_i\tilde{\mathbb{E}}_\Phi[z(x,y)]
\end{align*}
which match the form of the constraints given in \Cref{eq:reduction} as desired.
\end{proof}
\end{proof}

\section{Preliminaries II: Constructing SS-HDX} \label{sec:prelims2}

We now cover the tools necessary for constructing our small-set HDX, including background on basic expander graphs, left-right Cayley complexes, error correcting codes, Tanner codes, and tensor codes. We closely follow the discussion in \cite{leverrier2022quantum} who largely cover the same background material.

\subsection{Expander Graphs}
The main building block of Leverrier and Z\'emor's qLDPC codes are a ubiquitous class of graphs in computer science called \textit{spectral expanders}. Let $\cG = (V, E)$ be an undirected $\Delta$-regular (multi)-graph on $n$ vertices, and define $\lambda(\cG) := \max\{|\lambda_2|, |\lambda_n|\}$ where $\Delta=\lambda_1 \ge \lambda_2 \ge ... \ge \lambda_n$ are the eigenvalues of the adjacency matrix of $G$. We say $G$ is a $\lambda$-spectral expander if $\lambda(\cG) \leq \lambda$, and call it \textit{Ramanujan} if $\lambda(\cG) \le 2 \sqrt{\Delta-1}$, which is the optimal expansion for infinite families of fixed degree \cite{alon1986eigenvalues}.

We will rely on spectral expanders for two main reasons. First, as we will discuss in the following section, infinite families of these objects are well-known not only to exist, but to be explicitly constructable (see e.g.\ \cite{morgenstern1994existence}). Second, spectral expansion provides a useful proxy for edge-expansion in the sense that for any $S,T \subseteq V$, there cannot be too many edges passing between $S$ and $T$. This is classically known as the \textit{expander-mixing lemma}, and likely first appeared in \cite{alon1988explicit}:
% and in particular promises that the number of edges between any two vertex sets $S,T \subseteq V$ is never too large compared
% The way we use the expander graphs is through the expander mixing lemma. It gives an upper bound on the number of edges through the number of vertices.
% Below, $E(S, T)$ denotes the edges with one endpoint in $S$ and the other endpoint in $T$.
\begin{lemma} [Expander mixing lemma]
    Let $\cG$ be a $\Delta$-regular graph. Then for any subset $S, T \subset V(G)$ we have
    \begin{equation*}
        |E(S, T)| \le \frac{\Delta}{|V|} |S||T| + \lambda(\cG) \sqrt{|S||T|}.
    \end{equation*}
\end{lemma}

When $|S|$ and $|T|$ are small compared with $|V|$, we will think of $\lambda(\cG) \sqrt{|S||T|}$ as the main term and $\frac{\Delta}{|V|} |S||T|$ as the error term (we note this is the opposite of how the lemma is often applied).

It will also be important for us that the expander mixing lemma holds for \textit{double covers} of a spectral expanders with a small modification. The double cover $\cG' = (V', E')$ of a graph $\cG=(V, E)$ has vertex set $V' = V_0 \cup V_1$, for $V_0 = V \times \{0\}$ and $V_1 = V \times \{1\}$, and edge set $E' = \{\{(v, 0), (w, 1)\}: v, w \in V, \{v, w\} \in E\}$. The expander mixing lemma applies for double covered graphs when $S \subset V_0, T \subset V_1$.
\begin{lemma} [Expander mixing lemma for double covered graph]
    Let $\cG$ be a $\Delta$-regular graph and $\cG'$ be its double cover. Then for any subset $S \subset V_0(\cG'), T \subset V_1(\cG')$ we have
    \begin{equation*}
        |E(S, T)| \le \frac{\Delta}{|V(\cG)|} |S||T| + \lambda(\cG) \sqrt{|S||T|}.
    \end{equation*}
\end{lemma}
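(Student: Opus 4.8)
The statement to prove is the expander mixing lemma for the double cover of a $\Delta$-regular graph $\cG$: for $S \subset V_0(\cG')$ and $T \subset V_1(\cG')$,
\[
|E(S,T)| \le \frac{\Delta}{|V(\cG)|}|S||T| + \lambda(\cG)\sqrt{|S||T|}.
\]

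My plan is to reduce this to the ordinary expander mixing lemma by a standard spectral computation. First I would recall that the adjacency matrix $A'$ of the double cover $\cG'$, written in the block form induced by the bipartition $V' = V_0 \cup V_1$, is
\[
A' = \begin{pmatrix} 0 & A \\ A & 0 \end{pmatrix},
\]
where $A$ is the adjacency matrix of $\cG$. The eigenvalues of $A'$ are exactly $\pm\mu$ for each eigenvalue $\mu$ of $A$ (with eigenvectors $(v, \pm v)$ for an eigenvector $v$ of $A$), so in particular the all-ones eigenvalue $\Delta$ of $A$ lifts to eigenvalue $\Delta$ of $A'$ with eigenvector $\mathbf{1}_{V_0} + \mathbf{1}_{V_1}$, and $\lambda(\cG') = \lambda(\cG)$ after excluding the trivial eigenvalue appropriately (here one has to be slightly careful since $-\Delta$ is also an eigenvalue of $A'$, coming from the vector $\mathbf{1}_{V_0} - \mathbf{1}_{V_1}$; but that vector is orthogonal to the indicator vectors $\mathbf{1}_S, \mathbf{1}_T$ in the relevant inner product since $S \subset V_0$ and $T \subset V_1$, so it causes no trouble).

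Next I would run the usual proof. Write $x = \mathbf{1}_S \in \mathbb{R}^{V_0}$ and $y = \mathbf{1}_T \in \mathbb{R}^{V_1}$, and observe $|E(S,T)| = x^\top A y$ where $A$ here denotes the $V_0 \times V_1$ block of $A'$ (equivalently the $V \times V$ adjacency matrix of $\cG$ under the natural identifications). Decompose $x = \frac{|S|}{|V|}\mathbf{1} + x^\perp$ and $y = \frac{|T|}{|V|}\mathbf{1} + y^\perp$ where $\mathbf{1}$ is the all-ones vector on $V$ and $x^\perp, y^\perp$ are orthogonal to $\mathbf{1}$. Then
\[
x^\top A y = \frac{|S||T|}{|V|^2}\mathbf{1}^\top A \mathbf{1} + (x^\perp)^\top A y^\perp = \frac{\Delta|S||T|}{|V|} + (x^\perp)^\top A y^\perp,
\]
using that $A\mathbf{1} = \Delta\mathbf{1}$ and that the cross terms vanish by orthogonality. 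For the error term, since $A$ restricted to $\mathbf{1}^\perp$ has operator norm at most $\lambda(\cG)$, Cauchy–Schwarz gives $|(x^\perp)^\top A y^\perp| \le \lambda(\cG)\|x^\perp\|\|y^\perp\| \le \lambda(\cG)\|x\|\|y\| = \lambda(\cG)\sqrt{|S||T|}$. Combining the two pieces yields the claimed bound.

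I do not anticipate a serious obstacle here; the only subtlety — and the one point I would be careful to state explicitly — is the handling of the $-\Delta$ eigenvalue of the double cover, ensuring that the spectral norm bound $\lambda(\cG)$ really does apply to the restriction of $A$ to the orthogonal complement of $\mathbf{1}$ in the single-copy space $\mathbb{R}^V$, rather than to the orthogonal complement of the full top eigenspace of $A'$. Since $x \in \mathbb{R}^{V_0}$ and $y \in \mathbb{R}^{V_1}$ live in different blocks, the relevant bilinear form $x^\top A y$ only ever sees the off-diagonal block $A$ (the adjacency matrix of $\cG$ itself), so the bound reduces cleanly to the single-graph spectral gap and the argument goes through verbatim as in the proof of the ordinary expander mixing lemma.
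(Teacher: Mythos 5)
Your proof is correct and is exactly the argument the paper has in mind: the paper's entire justification is the one-line remark that the claim follows ``by projecting $S$ and $T$ back to the original graph,'' and your observation that $|E(S,T)| = \mathbf{1}_S^\top A \mathbf{1}_T$ with $A$ the adjacency matrix of $\cG$ itself (so that the $-\Delta$ eigenvalue of the double cover never enters) is precisely that projection, followed by the standard spectral decomposition.
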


This can be shown easily by projecting $S$ and $T$ back to the original graph.
% Now, the statement for the double covered graph is exactly the usual expander mixing lemma. 
% For double covered graph we denote $\lambda(\cG') := \lambda(\cG)$.

\subsection{Left-Right Cayley Complexes}\label{sec:cayley}
While expansion is a useful property in its own right, our arguments require higher dimensional structure. The key lies in an object called the \textit{left-right Cayley complex} introduced in \cite{dinur2021} to build c3-LTCs. A left-right Cayley complex is determined by a group $G$ and two sets of generators $A = A^{-1}$ and $B = B^{-1}$. The complex consists of vertices, $A$-edges, $B$-edges, and squares as follows:
\begin{itemize}
    \item The vertices are $V^0 = G$.
    \item The $A$-edges are $E^0_A$ and the $B$-edges are $E^0_B$ where
        \begin{equation*}
            E^0_A = \{\{g, ag\} : g \in G, a \in A\}, E^0_B = \{\{g, gb\} : g \in G, b \in B\}.
        \end{equation*}
    \item The squares are 
        \begin{equation*}
            F^0 = \{\{g, ag, gb, agb\} : g \in G, a \in A, b \in B\}.
        \end{equation*}
\end{itemize}

The main criterion for choosing $G$, $A$, and $B$ is to ensure the Cayley graphs $Cay(G, A)$ and $Cay(G, B)$ are good expanders, and in particular are Ramanujan. 
Besides this, for simplicity we further assume two technical conditions as in \cite{dinur2021}: that $|A| = |B| = \Delta$, and the so-called \textit{total no-conjugacy} condition
\begin{equation*}
    \forall a \in A, b \in B, g \in G, ag \ne gb.
\end{equation*}
The total no-conjugacy condition ensures squares are non-degenerate (contain exactly $4$ distinct vertices), and that each vertex is incident to exactly $k^2$ squares \cite[Claim 3.7]{dinur2021}. Leveraging classical results of Morgenstern \cite{morgenstern1994existence} and Lubotzky, Samuels, and Vishne \cite{lubotzky2005explicit}, \cite{dinur2021} show that explicit families of left-right Cayley complexes exist for infinitely many degrees.
\begin{theorem}[{\cite[Claim 6.7]{dinur2021}}]\label{thm:cayley}
    There exists an infinite sequence of degrees $\Delta = q + 1$ (where $q$ is an odd prime power) such that for each fixed $\Delta$ there exists an explicit infinite family of left-right Cayley complexes with $G_i = \textnormal{PSL}_2(q^i)$ and generator sets $A_i$ and $B_i$ such that $|A_i|=|B_i|=\Delta$, $Cay(G_i, A_i)$ and $Cay(G_i, B_i)$ are Ramanujan, and $A_i$,$B_i$ satisfy the total no-conjugacy condition.
\end{theorem}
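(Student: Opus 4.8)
Theorem~\ref{thm:cayley} is quoted directly from \cite[Claim 6.7]{dinur2021}, so the plan is to recall which classical ingredients it rests on and isolate where the real work sits. The backbone is the explicit-Ramanujan-graph machinery: for every odd prime power $q$ and every $i \geq 1$, the constructions of Lubotzky--Phillips--Sarnak and Margulis, extended to non-prime $q$ by Morgenstern \cite{morgenstern1994existence} and passed down to the simple groups by Lubotzky, Samuels, and Vishne \cite{lubotzky2005explicit}, produce a symmetric generating set $S_i \subseteq \mathrm{PSL}_2(q^i)$ with $|S_i| = q+1$ for which $\mathrm{Cay}(\mathrm{PSL}_2(q^i), S_i)$ is Ramanujan. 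First I would observe that these generators arise by reducing, modulo a fixed degree-$i$ place (equivalently an irreducible polynomial defining $\mathbb{F}_{q^i}$), a constant-size set of solutions to a norm equation attached to a quaternion algebra ramified at one fixed degree-one place; since $\mathrm{PSL}_2(q^i)$ has an explicit matrix model and this solution set has size depending only on $q$, the family $\{(\mathrm{PSL}_2(q^i), S_i)\}_i$ is constructible in deterministic polynomial time. Fixing $\Delta = q+1$ and letting $i$ range over the positive integers then already yields an explicit infinite family of degree-$\Delta$ Ramanujan Cayley graphs on the groups $G_i = \mathrm{PSL}_2(q^i)$.

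The one genuinely new point beyond a single Ramanujan Cayley graph is producing \emph{two} generating sets $A_i, B_i$ on the common group $G_i$ satisfying the total no-conjugacy condition, which unwinds to: no element of $A_i$ is $G_i$-conjugate to any element of $B_i$. The plan is to run the construction above twice, with two distinct admissible choices $p_1 \ne p_2$ of the ramified degree-one place --- both of the same norm so that the two generating sets still have size exactly $q+1$ --- taking $A_i$ from $p_1$ and $B_i$ from $p_2$; each set is closed under inversion and yields a Ramanujan Cayley graph of degree $\Delta$ on $\mathrm{PSL}_2(q^i)$ exactly as before. The leverage for no-conjugacy is that in $\mathrm{PGL}_2$, and hence --- since conjugacy in $\mathrm{PSL}_2$ is only \emph{finer} --- in $\mathrm{PSL}_2(q^i)$ too, two elements are conjugate only if they share the $\mathrm{PGL}_2$-conjugacy invariant $\mathrm{tr}^2/\det$, whereas the generators coming from $p_j$ carry prescribed trace data governed by $p_j$. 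One then picks $p_1, p_2$ so that these two finite sets of invariants remain disjoint in $\mathbb{F}_{q^i}$ for all $i$ simultaneously, which forces $A_i$ and $B_i$ to be conjugacy-disjoint and hence to satisfy total no-conjugacy.

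The remainder is bookkeeping I would dispatch quickly: $|A_i| = |B_i| = \Delta$ and $A_i = A_i^{-1}$, $B_i = B_i^{-1}$ hold by construction; both Cayley graphs are Ramanujan as above; and explicitness is inherited since $G_i$ together with both generator sets are computable in time polynomial in $\log|G_i|$ (a fortiori polynomial in $|G_i|$). The main obstacle is exactly the second paragraph --- arranging that the two generating sets are \emph{simultaneously} Ramanujan and conjugacy-disjoint over the same family of groups, which is what carves out the admissible sequence of degrees $\Delta = q+1$ appearing in the statement; everything else is a direct appeal to the standard LPS/Morgenstern/LSV toolkit, and we refer to \cite{dinur2021, morgenstern1994existence, lubotzky2005explicit} for the precise execution of the trace-disjointness argument.
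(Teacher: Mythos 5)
The paper does not prove this statement at all: it is imported verbatim as Claim~6.7 of \cite{dinur2021}, with only the surrounding remark that it rests on Morgenstern's Ramanujan graphs \cite{morgenstern1994existence} and Lubotzky--Samuels--Vishne \cite{lubotzky2005explicit}, so there is no internal proof to compare against. Your sketch is a reasonable account of how the cited argument goes and correctly isolates total no-conjugacy as the only point beyond producing a single explicit Ramanujan Cayley graph; since you, like the paper, ultimately defer the decisive conjugacy-disjointness step to \cite{dinur2021}, your treatment is essentially the same as the paper's (a citation with exposition). One caveat: the specific mechanism you propose for obtaining two generating sets (two ramified degree-one places ``of the same norm'' plus simultaneous trace-disjointness over all $i$) is your own reconstruction of Morgenstern's parameterization rather than a verified match to what \cite{dinur2021} actually does, so it should not be presented as established without checking their Section~6.
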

    % For a concrete construction that satisfies the above properties, we find an infinite sequence of degrees $k = q + 1$ ($q$ is an odd prime power) such that for each fixed $k$ there exists an infinite family of left-right Cayley complexes with $G = \textnormal{PSL}_2(q^i), A, B$ where $Cay(G, A)$ and $Cay(G, B)$ are both Ramanujan graphs. Such family exists as shown in \cite{dinur2021}.
As in \cite{leverrier2022quantum}, we will use the \textit{double cover} of the left-right Cayley complex, defined as:
% The double cover can be taken when $Cay(G, A)$ and $Cay(G, B)$ are not bipartite, which is the case for the above family. The double covered left-right Cayley graph is defined as follows

\begin{itemize}
    \item The vertices are $V = V_0 \cup V_1$ where $V_0 = G \times \{0\}$ and $V_1 = G \times \{1\}$.
    \item The $A$-edges are $E_A$ and the $B$-edges are $E_B$ where
        \begin{equation*}
            E_A = \{\{(g, 0), (ag, 1)\} : g \in G, a \in A\}, E_B = \{\{(g, 0), (gb, 1)\} : g \in G, b \in B\}.
        \end{equation*}
    \item The squares are 
        \begin{equation*}
            F = \{\{(g, 0), (ag, 1), (gb, 1), (agb, 0)\} : g \in G, a \in A, b \in B\}.
        \end{equation*}
\end{itemize}
Note that every square in the original left-right Cayley complex corresponds to two squares in the double cover, and therefore that the double cover has a total of $\frac{\Delta^2|G|}{2}$ squares. Since we will only use the double cover in our arguments, from now on the term ``square'' will \textit{always refer to these double-covered squares}, not the squares in the original Cayley complex.

Following \cite{leverrier2022quantum}'s notation, we will mainly think of the double-covered complex as represented by the following graphs. First, we'll define a graph that captures the vertices and edge-structure of the Cayley complex: $\cG^\cup = (V, E_A \cup E_B)$. Second, we'll define graphs\footnote{We note these may technically be multi-graphs as in \cite{leverrier2022quantum}, but this has no effect on our arguments.} $\cG_0^\square = (V_0, E_0^\square)$ and $\cG_1^\square = (V_1, E_1^\square)$ capturing squares in the double cover, where 
\[
E_i^\square = \{\{(g, i), (agb, i)\} : g \in G, a \in A, b \in B\}
\]
for $i\in\{0,1\}$. Notice that the edges in these graphs have a one-to-one correspondence with the double-covered squares, namely that $E_i^\square \cong F$ for $i = 0, 1$ through the following identifications:
\[
\{(g, 0), (agb, 0)\} \leftrightarrow \{(g, 0), (ag, 1), (gb, 1), (agb, 0)\}
\]
and 
\[
\{(g, 1), (agb, 1)\} \leftrightarrow \{(g, 1), (ag, 0), (gb, 0), (agb, 1)\}.
\]
These identifications will be particularly important in the proof of small-set (co)-boundary expansion as we move between the squares of our complex and their associated graph representations.

Finally, it will be important to observe that these graphs inherit the spectral properties of $Cay(G, A)$ and $Cay(G, B)$. Namely that when the latter are Ramanujan, $\cG^\cup, \cG_0^\square, \cG_1^\square$ are also very good expanders.
\begin{lemma}[{\cite[Lemma 4]{leverrier2022quantum}}] \label{lem:expander}
    If $Cay(G, A), Cay(G, B)$ are Ramanujan graphs, then
        $\lambda(\cG_0^\square) \le 4 \Delta$, $\lambda(\cG_1^\square) \le 4 \Delta$, and $\cG^\cup$ is the double cover of a $4\sqrt{\Delta}$-spectral expander.
\end{lemma}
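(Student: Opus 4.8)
The plan is to realize each of the three graphs as a simple operator built from the adjacency matrices of $Cay(G,A)$ and $Cay(G,B)$, and then to exploit the fact that these two Cayley graphs share the \emph{same} trivial eigenspace. Write $\mathbf{A},\mathbf{B}\in\R^{G\times G}$ for the adjacency matrices of $Cay(G,A)$ and $Cay(G,B)$, acting on $\R^G$ by $(\mathbf{A}f)(g)=\sum_{a\in A}f(ag)$ and $(\mathbf{B}f)(g)=\sum_{b\in B}f(gb)$. Since left- and right-multiplication commute, $\mathbf{A}\mathbf{B}=\mathbf{B}\mathbf{A}$, and both matrices are real symmetric, so they admit a common orthonormal eigenbasis $v_1,v_2,\dots$ of $\R^G$ with $v_1=\mathbf{1}/\sqrt{|G|}$, $\mathbf{A}v_i=\alpha_i v_i$, $\mathbf{B}v_i=\beta_i v_i$. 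The Ramanujan hypothesis forces $Cay(G,A)$ and $Cay(G,B)$ to be connected (a disconnected $\Delta$-regular graph has $\lambda_2=\Delta>2\sqrt{\Delta-1}$), so in each the degree eigenvalue $\Delta$ is simple with eigenvector $\mathbf{1}$. Hence for every $i\ge 2$ the vector $v_i$ is orthogonal to $\mathbf{1}$ and is therefore a \emph{non-trivial} eigenvector of \emph{both} graphs, giving $|\alpha_i|,|\beta_i|\le 2\sqrt{\Delta-1}$.

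I would then identify the relevant operators. Using $V_0\cong G$ and the stated correspondence between squares and edges of $\cG_0^\square$, a short combinatorial check shows the adjacency matrix of $\cG_0^\square$ has entry $|\{(a,b)\in A\times B:agb=h\}|$ at position $(g,h)$, i.e.\ it equals $\mathbf{A}\mathbf{B}$; the same holds for $\cG_1^\square$. (The point to be careful about is that the parametrization $(g,a,b)\mapsto\{(g,0),(ag,1),(gb,1),(agb,0)\}$ is two-to-one onto $F$, but this is exactly compensated by the edge multiplicities, and total no-conjugacy rules out self-loops, so no spurious factor of two appears.) Now $\mathbf{A}\mathbf{B}$ has top eigenvalue $\Delta^2$, which is simple because $\alpha_i\beta_i=\Delta^2$ together with $|\alpha_i|,|\beta_i|\le\Delta$ forces $\alpha_i=\beta_i=\Delta$ (the possibility $\alpha_i=-\Delta$ is excluded by the Ramanujan bound), hence $v_i=\mathbf{1}$; every other eigenvalue is $\alpha_i\beta_i$ with $v_i\perp\mathbf{1}$, so $|\alpha_i\beta_i|\le 4(\Delta-1)\le 4\Delta$. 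This gives $\lambda(\cG_0^\square),\lambda(\cG_1^\square)\le 4\Delta$.

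For $\cG^\cup=(V,E_A\cup E_B)$ I would observe that it is precisely the double cover of the $2\Delta$-regular graph $H:=Cay(G,A\cup B)$ (the Cayley graph on the multiset $A\cup B$): in both, $(g,0)$ is joined to $(h,1)$ exactly when $h\in Ag\cup gB$, with matching multiplicities. The adjacency matrix of $H$ is $\mathbf{A}+\mathbf{B}$, whose degree eigenvalue $2\Delta$ is simple (since $\alpha_i+\beta_i=2\Delta$ forces $\alpha_i=\beta_i=\Delta$), while every other eigenvalue is $\alpha_i+\beta_i$ with $|\alpha_i+\beta_i|\le 4\sqrt{\Delta-1}\le 4\sqrt{\Delta}$. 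Hence $H$ is a $4\sqrt{\Delta}$-spectral expander and $\cG^\cup$ is its double cover, as required.

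The only genuinely delicate step — and the one I would write out in full — is the operator identification of the previous two paragraphs ($M_0=M_1=\mathbf{A}\mathbf{B}$ and $\cG^\cup=$ double cover of $Cay(G,A\cup B)$): these are purely combinatorial bookkeeping, but one must track the two-to-one square parametrization and the edge multiplicities carefully. Everything after that is the elementary spectral argument above. It is worth noting where the bound $4\Delta$ (rather than the naive $2\sqrt{\Delta-1}\cdot\Delta=2\Delta^{3/2}$) comes from: a common eigenvector orthogonal to $\mathbf{1}$ fails to be a degree-eigenvector of \emph{both} Cayley graphs, so \emph{both} $|\alpha_i|$ and $|\beta_i|$ are simultaneously controlled by the Ramanujan bound.
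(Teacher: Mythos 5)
Your proof is correct and is essentially the same argument as the one the paper relies on: the paper gives no proof of this lemma itself, deferring to \cite[Lemma~4]{leverrier2022quantum}, whose proof likewise identifies the adjacency matrices of $\cG_i^\square$ and of the base graph of $\cG^\cup$ with $\mathbf{A}\mathbf{B}$ and $\mathbf{A}+\mathbf{B}$ for the commuting left/right Cayley operators and bounds the nontrivial eigenvalues by $4(\Delta-1)$ and $4\sqrt{\Delta-1}$ on the orthogonal complement of the constant vector. Your bookkeeping of the two-to-one square parametrization and the use of total no-conjugacy to rule out degeneracies are handled correctly.
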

We note this is not exactly the statement given in \cite{leverrier2022quantum}, but the proof is the same.

\subsection{Error Correcting Codes}

A classical $(n,k,d)$-error correcting (erasure) code is a method for encoding a string of $k$ classical bits into $n > k$ classical bits such that one can recover the original string even when up to $d-1$ bits of the encoded string are erased. More formally, we will consider the standard setting of \textit{linear codes}, where the encoded space is a linear subspace $\cC \subset \FF_2^n$. Here $n$ is the \textit{length} of the code, $k \coloneqq \dim(\cC)$ is its \textit{dimension}, and the minimum weight of any element (also called codeword) of $\cC$, $d \coloneqq \min_{c \in \cC}\{ \hnorm{c}\}$, is called its \textit{distance}.\footnote{We note that this is similar to the distance operator $d(\cdot, \cdot)$ used to define co-boundary expansion. Indeed the distance of a code $\cC$ is just $d(\emptyset,\cC)$. We will abuse notation slightly to match standard coding theory notation and write this as $d(\cC)$ throughout.} One can check that in a linear code of distance $d$, it is indeed possible to uniquely correct up to $d-1$ errors. Finally, the ratio $r \coloneqq \frac{k}{n}$ is called the \textit{rate} of the code, and measures the overhead from the original to encoded space. We will typically be interested in families of codes that have constant rate and linear distance.

% when less than $d$ bits are erased. $n$, $k$, and $d$ are called the length, dimension, and distance of the code.
% The classical code we consider are the \textit{linear codes} where $\cC \subset \FF_2^n$ forms a linear subspace.
One of the main reasons to use linear codes is that there are nice linear algebraic ways of describing the objects. In particular, the linear subspace (code) $\cC$ is typically described either by a \textit{parity-check} matrix, or a \textit{generator} matrix. In particular, one can always find a parity-check matrix $M: \FF_2^n \rightarrow \FF_2^{n-k}$ whose kernel is the code in question ($\cC := \ker M \subset \FF_2^n$), and likewise a generator matrix
$M': \FF_2^k \rightarrow \FF_2^n$ whose image gives the code ($\cC := \Ima M' \subset \FF_2^n$). When clear from context, we sometime abuse notation and write $\cC$ to mean the parity check matrix of $\cC$.

\subsection{Tanner Codes}
The \textit{Tanner construction} (or \textit{tanner code}) \cite{tanner1981recursive} is a classical strategy in coding theory to build a linear code out of a `large' regular graph and a `small' \textit{local} code that sits on the neighborhood of each vertex. Crucially, when the underlying graph is an expander, it is often the case that the Tanner code inherits desirable properties from the small code. 

More formally, let $\cG = (V, E)$ be a $\Delta$-regular graph and $E(v)$ denote the set of edges incident to any $v \in V$. Assume an identification of $\FF_2^{E(v)}$ with $\FF_2^\Delta$ for each $v \in V$, which we call the \textit{local view} of $v$. Given a local code $C_0$ with length $\Delta$, the Tanner code $T(\cG, C_0) \subset \FF_2^E$ is given by
\begin{equation*}
    \{c \in \FF_2^E : \forall v \in V, c|_{E(v)} \in C_0\},
\end{equation*}
where $c|_{E(v)} \in \FF_2^{\Delta}$ is the vector formed by the values of $c$ on the local view of $v$.
% Given a large graph and a small code one can obtain a large Tanner code.
% When the graph is an expander, the Tanner code often inherits good properties of the small code.

It will be convenient for us to view the Tanner construction through its parity check matrix, which will make up the co-boundary operators of our chain complex. If our local code $C_0$ has parity check matrix $M_0$ and rate $r_0$, the parity check matrix of the Tanner code $T(\cG, C)$ is given by the composition:
    \begin{equation*}
        \FF_2^E \rightarrow \FF_2^{V \times \Delta} \rightarrow \FF_2^{V \times (1-r_0) \Delta}
    \end{equation*}
    where the first map copies the value on the edge to each local view of the vertices,
    and the second map applies $M_0$ to each local view independently for each vertex. We will sometimes refer to this parity check matrix as the \textit{Tanner map}.
% The Tanner construction not only defines a code, but also gives a parity check matrix which induces a linear map. This map will be crucial for constructing the co-boundary operators of our chain complex. 
% To describe this map, let $\cG = (V, E)$ be a $k$-regular graph and $E(v)$ denote the set of edges incident to any $v \in V$. Assume an identification of $\FF_2^{E(v)}$ with $\FF_2^k$ for each $v \in V$, which we call the \textit{local view} of $v$.

% Given the a code $C_0(H_0)$ (called the \textit{local code}) with length $k$, rate $r_0$, and parity check matrix $H_0: \FF^k \rightarrow \FF^{(1-r_0) k}$, we define the corresponding parity check matrix for the Tanner code of $\cG$ and $C_0$, denoted
%     $T(\cG, C_0): \FF_2^E \rightarrow \FF_2^{V \times (1-r_0) k} $, as the composition 
%     \begin{equation*}
%         \FF_2^E \rightarrow \FF_2^{V \times k} \rightarrow \FF_2^{V \times (1-r_0) k}
%     \end{equation*}
%     where the first map copies the value on the edge to each local view of the vertices,
%     and the second map applies $H_0$ to each local view independently for each vertex.

\subsection{Robust Tensor Codes Against Puncture} \label{sec:robust-tensor-code}
The properties of our Tanner maps are highly dependent on the local code used to instantiate them. Following \cite{leverrier2022quantum}, we use a special type of local code called a \textit{tensor code}. We closely follow the discussion of these objects given in \cite{leverrier2022quantum}.

Recall that the generators of our left-right Cayley complex $A$ and $B$ have size $\Delta$. We will consider codes on $\FF_2^{A \times B}$ with tensor product structures. Namely, given two linear codes $C_A \subset \FF_2^A, C_B \subset \FF_2^B$, we define the \textit{tensor code} $C_A \otimes C_B$ to be the set of $\Delta \times \Delta$ matrices $M$ where each column vector $(M_{ab})_{a \in A}$ belongs to $C_A$ and each row vector $(M_{ab})_{b \in B}$ belongs to $C_B$. We define the \textit{dual tensor code} to be the sum $C_A \otimes \FF_2^B + \FF_2^A \otimes C_B$, where $C_A \otimes \FF_2^B$ are the $\Delta \times \Delta$ matrices whose columns belong to $C_A$, and $\FF_2^A \otimes C_B$ are the $\Delta \times \Delta$ matrices whose rows belong to $C_B$. The following claims about the dimension and distance of these codes are standard and easy to verify:
\begin{enumerate}
    \item $\dim(C_A \otimes C_B) = \dim(C_A) \dim(C_B)$
    \item $d(C_A \otimes C_B) = d(C_A) d(C_B)$
    \item $\dim(C_A \otimes \FF_2^B + \FF_2^A \otimes C_B) = \Delta \dim(C_A) + \Delta \dim(C_B) - \dim(C_A) \dim(C_B)$
    \item $d(C_A \otimes \FF_2^B + \FF_2^A \otimes C_B) = \min(d(C_A), d(C_B))$.
\end{enumerate}

To ensure our Tanner maps have the right properties, we will actually require our local tensor codes to have a stronger property called robustness. One can think of robustness as a generalization of distance of usual linear codes to the context of tensor codes, or as we will soon see, as a sort of robust testability property.
\begin{definition} [Robust {\cite[Definition 5]{leverrier2022quantum}}]
    Let $C_A \subset \FF_2^A, C_B \subset \FF_2^B$ be codes of length $\Delta$ of distance $d_A$ and $d_B$ respectively.
    We say the dual tensor code $C = C_A \otimes \FF_2^B + \FF_2^A \otimes C_B$ is $w$-robust if for every codeword $c \in C$ with Hamming weight $|c| < w$, there exist $A' \subset A, B' \subset B, |A'| \le |c|/d_B, |B'| \le |c|/d_A$, such that $c_{ab} = 0$ for any $a \not \in A'$ and $b \not \in B'$.
\end{definition}

Leverrier and Z\'emor \cite{leverrier2022quantum} prove that robust tensor codes satisfy a useful small-set robust testability property.

\begin{lemma} [\cite{leverrier2022quantum}, Proposition 6] \label{lem:small-set-robust}
  Let $C_A \subset \FF_2^A, C_B \subset \FF_2^B$ be codes of length $\Delta$ of distance $d_A$ and $d_B$ respectively.
  If the dual tensor code $C = C_A \otimes \FF_2^B + \FF_2^A \otimes C_B$ is $w$-robust with $w \le d_A d_B/2$, then any word $x$ close to both the column and row code is also close to the tensor code. More explicitly, if $d(x, C_A \otimes \FF_2^B) + d(x, \FF_2^A \otimes C_B) < w$ then:
  \[
    d(x, C_A \otimes C_B) \le \frac{3}{2} \left( d(x, C_A \otimes \FF_2^B) + d(x, \FF_2^A \otimes C_B) \right).
  \]
\end{lemma}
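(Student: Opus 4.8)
The plan is to leverage $w$-robustness to reduce the statement to two ``unique completion from a small puncture'' arguments. Fix $y \in C_A \otimes \FF_2^B$ achieving $d(x, C_A \otimes \FF_2^B)$ (so every column of $y$ lies in $C_A$) and $z \in \FF_2^A \otimes C_B$ achieving $d(x, \FF_2^A \otimes C_B)$, and set $e_1 = x + y$, $e_2 = x + z$, so that what we must bound is $d(x, C_A \otimes C_B)$ and the target is $\tfrac{3}{2}(|e_1| + |e_2|) = \tfrac{3}{2}\big(d(x, C_A \otimes \FF_2^B) + d(x, \FF_2^A \otimes C_B)\big)$. First I would observe that $c := y + z = e_1 + e_2$ lies in $C = C_A \otimes \FF_2^B + \FF_2^A \otimes C_B$ and has weight $|c| \le |e_1| + |e_2| < w$, so $w$-robustness yields $A' \subseteq A$, $B' \subseteq B$ with $|A'| \le |c|/d_B$, $|B'| \le |c|/d_A$ such that $c$ vanishes off the ``cross'' $(A' \times B) \cup (A \times B')$. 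The hypothesis $w \le d_A d_B / 2$ then forces $|A'| < d_A/2$ and $|B'| < d_B/2$, which is exactly the slack the completion steps need.

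Next I would build an explicit codeword $\hat c \in C_A \otimes C_B$ from the ``core'' data. On the rectangle $(A \setminus A') \times (B \setminus B')$ we have $c = 0$, hence $y = z$ there. Consider the $(A \setminus A') \times B$ matrix $M := z|_{(A \setminus A') \times B}$: all of its rows are $C_B$-codewords, and its columns indexed by $b \notin B'$ agree (on $A \setminus A'$) with $y_{\cdot b} \in C_A$, i.e.\ are $C_A$-codewords punctured on $A'$. The key sub-step is to show that in fact \emph{every} column of $M$ is a punctured $C_A$-codeword: push each column of $M$ into the quotient of $\FF_2^{A \setminus A'}$ by the punctured code $\{u|_{A \setminus A'} : u \in C_A\}$; since the rows of $M$ lie in $C_B$, these images form, coordinatewise in the quotient, a family of $C_B$-codewords whose support is contained in $B'$, and since $|B'| < d_B$ they must all vanish. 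Hence every column of $M$ lifts uniquely (uniqueness because $|A'| < d_A$) to a codeword of $C_A$; defining $\hat c$ column-by-column via these lifts produces $\hat c \in C_A \otimes \FF_2^B$ whose rows outside $A'$ are exactly the rows $z_{a \cdot} \in C_B$. Running the mirror-image argument --- quotienting by $C_B$, and using that the columns of $\hat c$ lie in $C_A$ together with $|A'| < d_A$ --- then shows that \emph{all} rows of $\hat c$ lie in $C_B$, so $\hat c \in C_A \otimes C_B$.

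Finally I would bound $|x + \hat c|$ by slicing $A \times B$ along the cross. On the columns $B \setminus B'$, $\hat c$ agrees with $y$, so $x + \hat c = e_1$ there; on the rows $A \setminus A'$, $\hat c$ agrees with $z$, so $x + \hat c = e_2$ there; the only part not covered by one of these two slices is the rectangle $A' \times B'$, on which I would crudely bound the disagreement by $|A'||B'| \le |c|^2/(d_A d_B) < |c|/2$, using $|c| < d_A d_B / 2$. Summing, $d(x, C_A \otimes C_B) \le |x + \hat c| \le |e_1| + |e_2| + \tfrac{1}{2}|c| \le \tfrac{3}{2}(|e_1| + |e_2|)$, which is the claim. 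I expect the middle paragraph to be the main obstacle --- showing that agreement on the punctured core genuinely propagates to a full tensor codeword; the quotient-space argument above is the clean way to handle it, and it is precisely there that the quantitative hypotheses $|A'| < d_A$ and $|B'| < d_B$ (equivalently $w \le d_A d_B/2$) are used, once in each direction.
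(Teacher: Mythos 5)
Your argument is correct: the quotient-space step (pushing the columns of $M$ into $\FF_2^{A\setminus A'}/C_A|_{A\setminus A'}$ and using that the resulting $C_B$-codewords are supported on $B'$ with $|B'|<d_B$), the unique lifting from the puncture (valid since $|A'|<d_A$), and the final three-region accounting with $|A'||B'|\le |c|^2/(d_Ad_B)<|c|/2$ all check out. The paper does not prove this lemma itself but imports it from Leverrier and Z\'emor (their Proposition 6), and your proof is essentially the argument given there: closest codewords $y,z$, robustness applied to $y+z$, completion to a tensor codeword off the cross, and the same $\tfrac{3}{2}$ bookkeeping.
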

In fact, \cite{leverrier2022quantum} need a slightly stronger condition than just robustness of the code: it needs to remain robust even after the removal of a small set of rows and columns.
% Besides robustness of the code itself, we further want robustness against puncturing a small set of rows and columns. 
Conceptually, this is similar to the idea of smooth codes \cite{dinur2006robust} where the code maintains nice properties even after the removal of a small number of variables or checks. Given a code $C_A \subset \FF_2^A$ and $A' \subset A$, let $C_{A'} \subset \FF_2^{A'}$ denote the \textit{puncture code} which is the restriction of all codewords in $C_A$ to the coordinates in $A'$ (more precisely, $C_{A'} = \{(c_a)_{a \in A'} : (c_a)_{a \in A} \in C_A\}$).

\begin{definition} [Robust against puncture {\cite[Definition 7]{leverrier2022quantum}}]
    Given linear codes $C_A \subset \FF_2^A, C_B \subset \FF_2^B$, we say the dual tensor code $C_A \otimes \FF_2^B + \FF_2^A \otimes C_B$ is $w$-robust with $p$-resistance to puncture if for any $w'\leq p$ and $A' \subset A$ and $B' \subset B$ such that $|A'| = |B'| = \Delta-w'$, the dual tensor code $C_{A'} \otimes \FF_2^{B'} + \FF_2^{A'} \otimes C_{B'}$ is $w$-robust. 
\end{definition}
Extending prior work of \cite{panteleev2021asymptotically}, \cite{leverrier2022quantum} show random tensor codes are robust against puncture.

\begin{theorem} [{\cite[Theorem 8]{leverrier2022quantum}}] \label{thm:exist-robust-code}
    Let $0 < r_A < 1$ and $0 < r_B < 1$. Let $0 < \eps < 1/2$ and $1/2 + \eps < \gamma <1$. Let $C_A$ be a random code obtained from a random uniform $r_A \Delta \times \Delta$ generator matrix, and let $C_B$ be a random code obtained from a random uniform $(1-r_B) \Delta \times \Delta$ parity-check matrix.
    With probability tending to $1$ when $\Delta$ goes to
    infinity, the dual tensor code
    \begin{equation*}
        C_A \otimes \FF_2^B + \FF_2^A \otimes C_B
    \end{equation*}
    is $\Delta^{3/2-\eps}$-robust with $\Delta^\gamma$-resistance to puncturing.
\end{theorem}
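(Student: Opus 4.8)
The plan is a first-moment (union bound) argument showing that the ``bad event'' --- that \emph{some} puncture produces a dual tensor code with a low-weight codeword not concentrated on a small combinatorial cross --- has probability $o(1)$. Unwinding the definitions, $\Delta^{3/2-\eps}$-robustness with $\Delta^\gamma$-resistance to puncture fails exactly when there exist $w'\le\Delta^\gamma$, sets $A'\subseteq A$ and $B'\subseteq B$ with $|A'| = |B'| = \Delta - w'$, and a codeword $c$ of the punctured dual tensor code $C_{A'}\otimes\FF_2^{B'} + \FF_2^{A'}\otimes C_{B'}$ with $\hnorm{c} < w := \Delta^{3/2-\eps}$ whose support lies in \emph{no} cross $\widetilde A\times B'\ \cup\ A'\times\widetilde B$ (with $\widetilde A\subseteq A'$, $\widetilde B\subseteq B'$, $|\widetilde A|\le\hnorm{c}/d(C_{B'})$, $|\widetilde B|\le\hnorm{c}/d(C_{A'})$). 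A preliminary step controls the punctured base codes: random codes in the two generator/parity-check models of the statement have distance $\Omega(\Delta)$ with probability $1 - 2^{-\Omega(\Delta)}$, $w' = o(\Delta)$, and there are only $\sum_{w'\le\Delta^\gamma}\binom{\Delta}{w'}^2 \le 2^{\bigo(\Delta^\gamma\log\Delta)}$ punctures; since $\gamma < 1$, a union bound gives $d(C_{A'}), d(C_{B'}) = \Omega(\Delta)$ for all relevant punctures simultaneously. Hence the cross thresholds are $\Theta(\hnorm{c}/\Delta)$, and it suffices to rule out, uniformly over punctures, a codeword of weight $< w$ whose $1$-entries cannot be covered by $\bigo(\hnorm{c}/\Delta)$ rows together with $\bigo(\hnorm{c}/\Delta)$ columns.

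The second step extracts the rigid structure such a codeword must have, following the decomposition technique of Panteleev--Kalachev and Leverrier--Z\'emor. Write $c = P + Q$ with the columns of $P$ in $C_{A'}$ and the rows of $Q$ in $C_{B'}$, chosen (after a cleaning step removing parts absorbable into $C_{A'}\otimes C_{B'}$) to minimize the number $\beta$ of nonzero columns of $P$ plus the number $\alpha$ of nonzero rows of $Q$. Each such column/row is a nonzero codeword of a code of distance $\Omega(\Delta)$, and $P, Q$ overlap in at most $\alpha\beta$ entries, so $w > \hnorm{c} \ge \Omega(\Delta)\,(\alpha + \beta) - 2\alpha\beta$. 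If the cancellation term $2\alpha\beta$ is negligible this forces $\alpha + \beta = \bigo(\hnorm{c}/\Delta)$, in which case the $\alpha$ rows and $\beta$ columns form a valid cross and $c$ is not bad; so a bad codeword must have cancellation of order $\Omega(\Delta(\alpha+\beta))$, and since $\alpha\beta$ bounds the overlap this forces $\min(\alpha,\beta) = \Omega(\Delta)$. Morally, a bad configuration requires $\Omega(\Delta^2)$ worth of codewords of the \emph{random} base codes to conspire to sum to weight $o(\Delta^2)$, i.e.\ the matrix of $c$ has ``combinatorial rank'' $\Omega(\Delta)$ despite weight $< w$. Pinning this characterization down precisely --- the cleaning step and the constants in the cross bounds --- is part of the work.

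The core estimate, which I expect to be the main obstacle, bounds the probability that random $C_A, C_B$ realize a \emph{fixed} puncture together with a fixed combinatorial ``type'' of bad codeword (the $\Omega(\Delta)$ rows carrying $C_{B'}$-codewords, the $\Omega(\Delta)$ columns carrying $C_{A'}$-codewords, and the support pattern of $c$). The mechanism: $c$ (viewed as a $\Delta\times\Delta$ matrix) lies in $C_A\otimes\FF_2^B + \FF_2^A\otimes C_B$ iff every row of $M_A c$ lies in $C_B$, where $M_A$ is a parity-check matrix of $C_A$; over the independent randomness defining $C_B$ each linearly independent such row costs a factor $2^{-\Omega(\Delta)}$, and the spread forced in the previous step supplies $\Omega(\Delta)$ independent rows, giving $2^{-\Omega(\Delta^2)}$ per type (and symmetrically with $A\leftrightarrow B$). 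The delicate points are that $C_A\otimes\FF_2^B + \FF_2^A\otimes C_B$ is a \emph{sum}, not a union, of the two structured subspaces --- so one must work with a genuinely irreducible $P + Q$ decomposition and verify the associated rank remains $\Omega(\Delta)$ --- and that the bound must be made uniform over all types and punctures at once.

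Finally, assembling the union bound: the number of types is at most (supports of size $< w$) $\times$ (choices of the $\bigo(\Delta)$ distinguished rows and columns) $\le 2^{\bigo(w\log\Delta)}\cdot 2^{\bigo(\Delta)} = 2^{\bigo(\Delta^{3/2-\eps}\log\Delta)}$, and the number of punctures is $2^{\bigo(\Delta^\gamma\log\Delta)}$. Since $\gamma < 1$ and $3/2-\eps < 2$, both are $2^{o(\Delta^2)}$, so the per-type probability $2^{-\Omega(\Delta^2)}$ dominates and the bad event has probability $o(1)$; hence the dual tensor code is $\Delta^{3/2-\eps}$-robust with $\Delta^\gamma$-resistance to puncture with probability tending to $1$.
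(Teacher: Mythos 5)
First, a point of comparison: the paper does not prove this statement at all. \Cref{thm:exist-robust-code} is imported verbatim from Leverrier and Z\'emor \cite[Theorem 8]{leverrier2022quantum} and used as a black box (the only probabilistic argument the paper itself carries out is the fourth condition of \Cref{cor:exist-robust-code-and-dual-robust-code}, proved in the appendix). So the benchmark is the proof in \cite{leverrier2022quantum}, and your sketch does reconstruct its architecture faithfully: the first-moment method; the decomposition $c=P+Q$ with columns of $P$ in $C_A$ and rows of $Q$ in $C_B$; the inequality $\hnorm{c}\ge d(\alpha+\beta)-2\alpha\beta$ forcing either a valid cross or $\min(\alpha,\beta)=\Omega(\Delta)$; and a union bound whose entropy ($2^{\bigo(\Delta^{3/2-\eps}\log\Delta)}$ supports times $2^{\bigo(\Delta^{\gamma}\log\Delta)}$ punctures, using $\gamma<1$) is meant to be beaten by a per-configuration probability of $2^{-\Omega(\Delta^2)}$.

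The genuine gap is exactly where you flag it, and it is not a small one: the claim that a bad configuration ``supplies $\Omega(\Delta)$ independent rows'' and hence costs $2^{-\Omega(\Delta^2)}$. Knowing $\min(\alpha,\beta)=\Omega(\Delta)$ gives $\Omega(\Delta)$ nonzero rows of $Q$, each a nonzero codeword of $C_B$, but these need not be linearly independent: $Q$ could be $\mathbf{1}_{A_1}\otimes v$ for a single $v\in C_B$, in which case the randomness of $C_B$ is charged only $2^{-\Omega(\Delta)}$. Excluding such degenerate configurations requires playing the low weight of $P+Q$ against the independent randomness of $C_A$, and the naive count does not close on that side either: choosing $\beta$ arbitrary columns for $P$ costs $2^{\Delta\beta}$, which swamps the probability gain $2^{-(1-r_A)\Delta\beta}$ unless the weight constraint is used to cut the entropy of the configurations (off the small support of $c$ one has $P=Q$, which is the structural fact the actual proof exploits). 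This interplay between the two sources of randomness and the refined counting of configurations is where essentially all of the work in \cite{leverrier2022quantum} lives. As written, your proposal is an accurate roadmap with its hardest lemma asserted rather than proved; it is not yet a proof.
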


Because the dual of a random code is again a random code, this implies both $C_A \otimes \FF_2^B + \FF_2^A \otimes C_B$ and $C_A^\perp \otimes \FF_2^B + \FF_2^A \otimes C_B^\perp$ are robust against puncture with high probability.
\begin{corollary}[{\cite[Theorem 17]{leverrier2022quantum}}]\label{cor:exist-robust-code-and-dual-robust-code}
    Fix $r \in(0,1/2)$, $\eps \in (0,1/2)$, $\gamma\in (1/2+\eps,1)$ and $\delta>0$ satisfying $-\delta\log \delta - (1-\delta)\log(1-\delta) < r$. When $k$ is large enough, there exist codes $C_A$ and $C_B$ of length $\Delta$ such that
    \begin{enumerate}
        \item $\dim C_A = \floor{r\Delta}$ and $\dim C_B = \Delta-\dim C_A$
        \item The distances of $C_A, C_B, C_A^\perp, C_B^\perp$ are all
        at least $\delta \Delta$
        \item Both dual tensor codes $C_0^\perp = (C_A\otimes C_B)^\perp$ and
        $C_1^\perp=(C_A^\perp \otimes C_B^\perp)^\perp$ are $\Delta^{3/2-\eps}$-robust
        with $\Delta^\gamma$-resistance to puncturing
        \item $C_A,C_B,C_A^\perp$, and $C_B^\perp$ have generator matrices where every row and column have at least two ones.
    \end{enumerate}
\end{corollary}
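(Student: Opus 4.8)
The plan is to establish \Cref{cor:exist-robust-code-and-dual-robust-code} by the probabilistic method, building $C_A$ and $C_B$ from two independent uniformly random matrices and showing that each of conditions (1)--(4) holds with probability $1-o(1)$ as $\Delta \to \infty$; a union bound over these four events then yields a single pair $(C_A,C_B)$ satisfying all of them, which gives the claim once $\Delta$ is large enough. Concretely, I would let $C_A$ be the code \emph{generated} by a uniformly random $\floor{r\Delta}\times\Delta$ matrix $G_A$ and $C_B$ the code whose \emph{parity-check} matrix is a uniformly random $\floor{r\Delta}\times\Delta$ matrix $M_B$, with $G_A$ and $M_B$ independent. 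Since a uniformly random matrix of these dimensions has full rank with probability $1-o(1)$, this gives $\dim C_A = \floor{r\Delta}$ and $\dim C_B = \Delta-\floor{r\Delta}$ with high probability, which is condition (1).

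Condition (3) would follow from two applications of \Cref{thm:exist-robust-code}. In our setup $C_A$ is a random code from a random $r_A\Delta\times\Delta$ generator matrix with $r_A=r$, and $C_B$ is a random code from a random $(1-r_B)\Delta\times\Delta$ parity-check matrix with $r_B=1-r$; both rates lie in $(0,1)$ since $r\in(0,1/2)$, and $\eps,\gamma$ satisfy the hypotheses of \Cref{thm:exist-robust-code}, so $C_1^\perp = C_A\otimes\FF_2^B + \FF_2^A\otimes C_B$ is $\Delta^{3/2-\eps}$-robust with $\Delta^\gamma$-resistance to puncturing with high probability. For $C_0^\perp = C_A^\perp\otimes\FF_2^B + \FF_2^A\otimes C_B^\perp$ I would use the observation (already recorded in the text just before the corollary) that the dual of a random code from a random uniform generator matrix is a random code from a random uniform parity-check matrix, and vice versa: here $M_B$ is simultaneously a random parity-check matrix of $C_B$ and a random generator matrix of $C_B^\perp$, while $G_A$ is a random generator matrix of $C_A$ and a random parity-check matrix of $C_A^\perp$. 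Transposing the dual tensor code identifies $C_0^\perp$ with $C_B^\perp\otimes\FF_2^A + \FF_2^B\otimes C_A^\perp$, and since both $w$-robustness and $p$-resistance to puncture are invariant under interchanging the roles of rows and columns, a second application of \Cref{thm:exist-robust-code} (with the rates $r$ and $1-r$ exchanged) gives that $C_0^\perp$ is also $\Delta^{3/2-\eps}$-robust with $\Delta^\gamma$-resistance to puncturing with high probability.

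Conditions (2) and (4) are then handled by routine first-moment estimates. Writing $\eta \coloneqq -\delta\log_2\delta -(1-\delta)\log_2(1-\delta)$, a union bound over nonzero messages shows a code of dimension $\rho\Delta$ given by a uniformly random generator matrix has a codeword of weight $<\delta\Delta$ with probability at most $2^{(\rho+\eta-1+o(1))\Delta}$, and (via a union bound over sets of fewer than $\delta\Delta$ columns of the parity-check matrix) a code of dimension $\rho\Delta$ given by a uniformly random parity-check matrix has minimum distance $<\delta\Delta$ with probability at most $2^{(\rho+\eta-1+o(1))\Delta}$ as well. Applying this to $C_A$, $C_B^\perp$ (dimension $\approx r\Delta$) and $C_B$, $C_A^\perp$ (dimension $\approx(1-r)\Delta$), the binding requirement is $\eta<r$, which is exactly the hypothesis $-\delta\log\delta-(1-\delta)\log(1-\delta)<r$, so all four distances are at least $\delta\Delta$ with high probability, giving (2). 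For (4), every row of a generator matrix is a codeword and hence has weight at least $\delta\Delta\ge 2$ by (2); the ``no column of weight $\le 1$'' condition holds with probability $1-O(\Delta\,2^{-\Omega(\Delta)})$ for the random matrices $G_A$ (a generator matrix of $C_A$) and $M_B$ (a generator matrix of $C_B^\perp$), and the analogous statement for $C_A^\perp$ and $C_B$ follows from the same ``dual of a random code is random'' observation together with a coupling between a random code and a uniformly random matrix that generates it.

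The genuine content being used here is \Cref{thm:exist-robust-code} itself (the fact that random dual tensor codes are robust against puncturing, from \cite{leverrier2022quantum} extending \cite{panteleev2021asymptotically}), which I would treat as a black box. Within this corollary the only real subtlety is the duality/transpose bookkeeping that lets a single theorem, applied twice, cover both dual tensor codes $C_0^\perp$ and $C_1^\perp$ at once, together with the mild check that the four high-probability events are all compatible under one random choice of $(C_A,C_B)$; everything else is a union bound over standard random-coding estimates.
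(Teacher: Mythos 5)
Your proposal is correct and follows essentially the same route as the paper's proof: the same random ensemble ($C_A$ from a uniform generator matrix, $C_B$ from an independent uniform parity-check matrix), a union bound over the four high-probability events, the ``dual of a random code is random'' observation to cover both dual tensor codes, and the coupling of a random subspace with a random full-rank generator matrix to get condition (4). The only difference is one of detail, not substance: the paper simply cites \cite[Theorem 17]{leverrier2022quantum} for conditions (1)--(3) and only argues condition (4), whereas you re-derive (2) and (3) from \Cref{thm:exist-robust-code} and standard first-moment bounds, which is fine.
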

We note that this is not exactly the statement of \cite[Theorem 17]{leverrier2022quantum}, who prove the first three conditions occur with probability going to $1$ as $\Delta$ becomes large when $C_A$ and $C_B$ are generated as in \Cref{thm:exist-robust-code}. The fourth item is not included in \cite{leverrier2022quantum}, but also occurs under this distribution with high probability by fairly standard arguments. We give the proof in the appendix for completeness.
\section{Constructing Small-Set HDX}\label{sec:construction}
We are finally ready to construct a family of 3-term chain complexes with small-set boundary and co-boundary expansion. 

\begin{theorem}\label{thm:HDX} 
There exists an explicit infinite family of chain complexes $\{X_i\}$ and constants $d \in \mathbb{N}$ and $\rho_1, \rho_2 \in (0,1)$ such that each $X_i$ satisfies:
\begin{enumerate}
    \item $X_i$ has maximum degree $d$ and minimum degree at least $3$
    \item $X_i$ has non-trivial co-homology $H^1$
    \item $X_i$ is a $(\rho_1,\rho_2)$-small-set HDX.
\end{enumerate}
% constants such that each $X_i$ has non-trivial homology $H^1$ and $X_i$ is a $(\rho_1, \rho_2)$-small-set boundary and co-boundary expander.
\end{theorem}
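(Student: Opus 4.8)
The plan is to instantiate Leverrier and Z\'emor's qLDPC construction with the explicit left-right Cayley complexes of \Cref{thm:cayley} and the robust-against-puncture base codes of \Cref{cor:exist-robust-code-and-dual-robust-code}, and then verify the three required properties. Concretely, fix a degree $\Delta$ for which \Cref{thm:cayley} supplies an explicit family of double-covered left-right Cayley complexes with Ramanujan Cayley graphs and the total no-conjugacy condition; fix base codes $C_A, C_B \subset \FF_2^\Delta$ from \Cref{cor:exist-robust-code-and-dual-robust-code} (found by brute force over constant-size codes, preserving explicitness), set $C_0 = C_A \otimes C_B$ and $C_1 = C_A^\perp \otimes C_B^\perp$, and let $\cC_0 = T(\cG_0^\square, C_0^\perp)$, $\cC_1 = T(\cG_1^\square, C_1^\perp)$ be the associated Tanner codes. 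Define $X_i$ to be the chain complex $\FF_2^{m_0} \xrightarrow{\delta_0 := \cC_0^T} \FF_2^n \xrightarrow{\delta_1 := \cC_1} \FF_2^{m_1}$, where $n = |F|$ is the number of squares and the $\delta$'s are the Tanner maps from \Cref{sec:prelims2}. The identity $\cC_0^\perp \subset \cC_1$ (equivalently $\delta_1\delta_0 = 0$) is exactly what \cite{leverrier2022quantum} verify, so this is a valid chain complex. Property 1 (bounded degree $d$, minimum degree $\geq 3$) is immediate from the construction: each square touches $O(\Delta)$ parity checks, each check involves $O(\Delta)$ squares, and item 4 of \Cref{cor:exist-robust-code-and-dual-robust-code} (every row/column of the relevant generator matrices has $\geq 2$ ones) forces the minimum degree up to at least $3$. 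Property 2 (non-trivial co-homology $H^1 = Z^1/B^1 \neq 0$) follows because the code has positive dimension $k = \dim H^1 = \Theta(n) > 0$, which is part of what \cite{leverrier2022quantum} establish for this construction.

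The substance is Property 3: small-set boundary and co-boundary expansion. By the symmetry of the construction (swapping $C_A, C_B$ with their duals and $\cG_0^\square$ with $\cG_1^\square$), it suffices to prove small-set co-boundary expansion, and by \Cref{lemma:SS-LTC} this reduces to an isoperimetric inequality: there exist constants $\rho_1, \rho_2$ such that every minimal $x \in \FF_2^n$ with $|x| \leq \rho_1 n$ satisfies $|\delta_1 x| \geq \rho_2 |x|$. I would prove this by contradiction: assuming $|\delta_1 x| < \rho_2|x|$, produce $y \in B^1$ with $|x+y| < |x|$, contradicting minimality. Following the proof-overview sketch (which adapts \cite{leverrier2022quantum}'s co-systolic distance argument from co-cycles to arbitrary chains), let $S \subset V_1$ be the vertices of $\cG_1^\square$ incident to some square in $x$, and partition $S = S_v \sqcup S_n \sqcup S_e$ into violated vertices (local view not in $C_1^\perp$), normal vertices (local view a codeword of weight $< w := \Delta^{3/2-\eps}$), and exceptional vertices (local view a codeword of weight $\geq w$). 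The robustness of $C_1^\perp$ (\Cref{lem:small-set-robust} and the puncture-robustness of \Cref{cor:exist-robust-code-and-dual-robust-code}) forces the local view at each normal vertex to be supported on $O(\Delta^{1/2-\eps})$ rows and columns. The key counting step is to find a vertex $v \in V_0$ whose local view in $\cG_0^\square$ shares $\Omega(\Delta)$ heavy rows/columns with normal vertices in $S_n$; robustness then gives a codeword $c \in C_A \otimes C_B$ within $O(\Delta^{3/2+\eps})$ of $x$'s local view at $v$ while $x$ itself has weight $\Omega(\Delta^2)$ there (using the linear distance of $C_A, C_B$ from item 2 of \Cref{cor:exist-robust-code-and-dual-robust-code}), so setting $y$ to be $c$ on $v$'s local view and $0$ elsewhere decreases the weight.

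The main obstacle — and the heart of the argument — is the counting step that locates the vertex $v \in V_0$. The plan here mirrors \cite{leverrier2022quantum}: let $T \subset V_0$ be the set of $V_0$-vertices sharing at least one heavy row or column (an edge of $\cG^\cup$ contained in many squares of $x$) with some normal vertex in $S_n$. First, using the expander mixing lemma for the double cover (\Cref{lem:expander} gives $\cG^\cup$ as the double cover of a $4\sqrt{\Delta}$-spectral expander, and $\lambda(\cG_i^\square) \leq 4\Delta$) together with the assumption $|\delta_1 x| < \rho_2|x|$, one shows $|S_v|$, $|S_e|$, and $|T|$ are all small compared to $|S_n|$ — the violated and exceptional vertices are controlled by $|\delta_1 x|$ and by a weight/robustness accounting, and $T$ is controlled by edge-counting against the heavy edges. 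Then a double-counting argument on heavy edges between $S_n$ and $T$ (each normal vertex emits $\Omega(\Delta)$ incidences into its $O(\Delta)$ supported rows/columns, and these cannot all concentrate on a small $T$) shows a typical $v \in T$ has $\Omega(\Delta)$ heavy edges to $S_n$, i.e.\ shares $\Omega(\Delta)$ rows/columns with normal vertices, which is exactly what the previous paragraph needs. Choosing $\eps$ small and $\rho_1, \rho_2$ correspondingly small makes all the error terms in the mixing-lemma estimates and the robustness radii ($\Delta^{3/2+\eps}$ versus $\Delta^2$) work out; tracking these constants carefully is the one genuinely delicate bookkeeping task, but it is the same bookkeeping \cite{leverrier2022quantum} carry out for co-systolic distance, now applied to general chains rather than co-cycles. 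Finally, re-running the entire argument with the roles of the two sides exchanged yields small-set boundary expansion, and together with Properties 1 and 2 this completes the proof.
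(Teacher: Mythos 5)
Your proposal follows the paper's proof essentially step for step: the same instantiation of Leverrier--Z\'emor's construction with the explicit Cayley complexes and brute-forced robust base codes, the same degree and dimension-counting arguments for Properties 1 and 2, and the same reduction of Property 3 to an isoperimetric inequality for small minimal chains, proved by contradiction via the $S_v \cup S_n \cup S_e$ partition (with the violated vertices $S_v$ controlled by $|\delta_1 x|$ --- the key adaptation from co-cycles to arbitrary chains), the heavy-edge/expander-mixing location of a good vertex $v \in T \subset V_0$, and the robustness-based weight-reducing co-boundary supported on $v$'s local view. The approach is correct and matches the paper's.
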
 
Combined with \Cref{thm:HDX-to-XOR} which transforms SS-HDX into hard instances of 3-XOR, this completes the proof of our main theorem.
\begin{proof}[Proof of \Cref{intro:thm-main}]
    \Cref{thm:HDX-to-XOR} gives the desired explicit family of 3-XOR instances as long as it is provided an explicit family of chain complexes with bounded maximum degree, minimum degree at least $3$, non-trivial co-homology, and which are $(\rho_1,\rho_2)$-small-set boundary expanders with $\mu$-co-systolic distance for some set of constants $\mu,\rho_1,\rho_2 \in (0,1)$. Since any $(\rho_1,\rho_2)$-small-set co-boundary expander has $\rho_1$-co-systolic distance (\Cref{lem:ss-boundary-to-systolic-distance}), \Cref{thm:HDX} provides an explicit family of chain complexes matching these conditions with $\mu=\rho_1$.
\end{proof}

As discussed, \Cref{thm:HDX} is proved via Leverrier and Z\'emor's \cite{leverrier2022quantum} recent construction of good qLPDC codes. They show the associated 3-term chain complex has linear systolic and co-systolic distance. Our contribution is to observe that the same construction actually satisfies the stronger small-set boundary and co-boundary expansion conditions. We note that while we only show this property for Leverrier and Z\'emor's \cite{leverrier2022quantum} simplified construction, similar arguments likely hold for Panteleev and Kalachev's \cite{panteleev2021asymptotically} original good qLDPC codes as well.

\paragraph{Construction:} We first describe Leverrier and Z\'emor's construction, which is based upon Tanner maps (parity-check matrices of Tanner codes). To start, we'll first need to describe the underlying graphs and local codes of these maps. Recall the explicit family of left-right Cayley complexes promised by \Cref{thm:cayley} and for any fixed complex $X_i$ in the family let the group $G=G_i$ and generator sets $A=A_i, B=B_i$ be as in the theorem. The graphs underlying our Tanner maps will be the `square graphs' $\cG_0^\square$ and $\cG_1^\square$, which we recall have
\begin{itemize}
    \item Vertices $V_i = G \times \{i\}$,
    \item Edges $E_i^\square = \{\{(g, i), (agb, i)\} : g \in G, a \in A, b \in B\}$
\end{itemize}
for $i \in \{0,1\}$ respectively. It bears repeating that edges in these graphs are in one-to-one correspondence with squares of the double covered Cayley complex via the following identifications:
\[
\{(g, i), (agb, i)\} \leftrightarrow \{(g, i), (ag, 1-i), (gb, 1-i), (agb, i)\}.
\]
% and 
% \[
% \{(g, 1), (agb, 1)\} \leftrightarrow \{(g, 1), (ag, 0), (gb, 0), (agb, 1)\}.
% \]
We will frequently refer to edges in $\cG_i^\square$ as squares due to this connection.

Since the square graphs $\cG_i^\square$ are $\Delta^2$-regular, we can define a Tanner map by combining them with any length $\Delta^2$ local code. This role will be played by the robust dual tensor codes promised by \Cref{cor:exist-robust-code-and-dual-robust-code}. Namely, letting $C_A: \FF_2^\Delta \rightarrow \FF_2^{r\Delta}$ and $C_B: \FF_2^{\Delta} \rightarrow \FF_2^{(1-r)\Delta}$ be as in \Cref{cor:exist-robust-code-and-dual-robust-code} for some choice of $r,\varepsilon$, and $\gamma$, our local codes will be $C_0^\perp$ and $C_1^\perp$ where $C_0 = C_A \otimes C_B$ and $C_1 = C_A^\perp \otimes C_B^\perp$.\footnote{Note we are assuming for simplicity that $r\Delta$ and $(1-r)\Delta$ are integer valued, but these can be replaced with $\floor{r\Delta}$ and $\Delta-\floor{r\Delta}$ without substantially affecting the proof (see \cite{leverrier2022quantum}).}

Combining these graphs and local codes gives the Tanner maps $\cC_0 = T(\cG_0^\square, C_0^\perp) : \FF_2^n \rightarrow \FF_2^{m}$ and $\cC_1 = T(\cG_1^\square, C_1^\perp) : \FF_2^n \rightarrow \FF_2^{m}$, where $n = |F| = \Delta^2 |G|/2$ is the number of squares, and $m = r(1-r)\Delta^2|V_0| = r(1-r)\Delta^2|G|$ comes from the fact that the both dual tensor codes have dimension $(1-r(1-r))\Delta^2$.
% , and $m_1 = r(1-r)\Delta^2|V_1| = r(1-r)\Delta^2|G|$. 
Associating the edges of $\cG_i^\square$ with squares in the discussed manner, one can check that $\cC_1 \cC_0^T = 0$ (see \cite[Section 4.1]{leverrier2022quantum}) and therefore that these maps define a chain complex:
\begin{equation}\label{eq:chain-complex-construction}
X: \FF_2^{m} \xrightarrow{\delta_0 := \cC_0^T} \FF_2^n \xrightarrow{\delta_1 := \cC_1} \FF_2^{m}.
\end{equation}
Moreover, this process gives an explicit family of chain complexes $\{X_i\}$ by choosing $G_i$, $A_i$, and $B_i$ as in the explicit family of left-right Cayley complexes promised by \Cref{thm:cayley}, and computing $C_A, C_B$ with the desired properties by brute force search over all pairs of length $\Delta$ codes $C_A,C_B$ of dimensions $r\Delta$ and $(1-r)\Delta$ respectively.\footnote{Note that since $\Delta$ is a constant with respect to our infinite family, brute force search only requires $O(1)$ time here.}

% This is shown in \cite[Section 4.1]{leverrier2022quantum}, but we sketch the proof here for completeness. To check $\cC_1 \cC_0^T = 0$, we check that for each vertex $v_0 \in V_0$ and $v_1 \in V_1$, the constraints on the squares are orthogonal. (The constraints on the squares are the rows/columns of the parity-check matrices.) To see this, observe that the constraints generated from a vertex in $V_0$ (respectively $V_1$) are supported on the local view of that vertex and corresponds to $C_0 = C_A \otimes C_B$ (respectively $C_1 = C_A^\perp \otimes C_B^\perp$). 

% We break into two cases. When the local views of $v_0$ and $v_1$ do not overlap, their corresponding constraints are trivially orthogonal. On the other hand, when the local views do overlap, because of the underlying graph structure they overlap on a column or a row. When the local views overlap on a row (column), because the vectors in $C_A$ (respectively $C_B$) are orthogonal to the vectors in $C_A^\perp$ (respectively $C_B^\perp$), the constraints generated by $v_0$ and $v_1$ commute. 

This completes the construction. We now move to showing that $X$ has the three desired properties: bounded-degree, non-trivial co-homology, and small-set (co)-boundary expansion.

\paragraph{$X$ has (upper) bounded-degree:} By definition $X$ is bounded-degree if and only if the parity-check matrices of our two Tanner codes have a bounded number of ones in every row and column. By the nature of the Tanner code construction the support of any row or column is at most twice the degree of the underlying graph. Since our graphs are of degree $\Delta^2$ (a constant with respect to the family), the resulting complex is bounded-degree as desired.

\paragraph{$X$ has (lower) bounded-degree:} Recall we are promised that $C_A,C_B,C_A^\perp$, and $C_B^\perp$ have generator matrices where every row and column have at least two ones. This implies that the tensor codes $C_A \otimes C_B$ and $C_A^\perp \otimes C_B^\perp$ can be taken to have generator matrices with at least four ones in each row and column. Since these correspond to the parity check matrices of $C_0^\perp = (C_A \otimes C_B)^\perp$ and $C_1^\perp = (C_A^\perp \otimes C_B^\perp)^\perp$ respectively, it can be easily checked that the parity check matrices of the associated Tanner codes $T(\cG_0^\square, C_0^\perp)$ and $T(\cG_1^\square, C_1^\perp)$ also have at least four ones in every row and column.

\paragraph{$H^1$ is non-trivial:} This follows immediately from dimensionality arguments. In particular, notice that $\dim Z^1 \geq n-m$, whereas $\dim B^1 \leq m$. As a result we have $\dim H^1 \ge n - 2m = (1/2 - 2 r(1-r)) \Delta^2 |G|$ which is $> 0$ whenever $r \ne 1/2$.

\paragraph{$X$ is a small-set (co)-boundary expander:}

It is left to show our complexes are small-set (co)-boundary expanders. In what follows we show the co-boundary expansion case. Since the construction is symmetric, a similar proof gives small-set boundary expansion. For convenience, we first re-formulate the problem as the following technical theorem. Note that this is the analog of \cite[Theorem 1]{leverrier2022quantum} where co-systolic distance is replaced with small-set co-boundary expansion. We follow their notation when possible for consistency.

\begin{theorem}\label{thm:SS-boundary-technical}
    Fix $\eps \in (0,1/2)$, $\gamma\in (1/2+\eps,1)$ and $\delta>0$. 
    For any fixed large enough $\Delta$, if the linear codes $C_A$ and
    $C_B$ have minimum distance at least $\delta \Delta$ and 
    if the dual tensor code $C_A\otimes\FF_2^B +\FF_2^A\otimes C_B=C_1^\perp$ 
    is $w$-robust with $p$-resistance to puncturing for $w = \Delta^{3/2-\eps/2}$ and
    $p=\Delta^{\gamma}$,\footnote{We note that the value of $w$ here is slightly different than in \cite{leverrier2022quantum}. This corrects a small error in the application of robust testability (\Cref{lem:small-set-robust}) in the original work.} then the chain complex in \Cref{eq:chain-complex-construction}:
\[
X: \FF_2^{m} \xrightarrow{\delta_0 := \cC_0^T} \FF_2^n \xrightarrow{\delta_1 := \cC_1} \FF_2^{m},
\]
satisfies the following isoperimetric inequality for small, minimal chains: 
    \[
    \forall x \in \mathbb{F}_2^n ~\text{s.t. $x$ is minimal and}~ \hnorm{x} \leq \rho_1 n:  \hnorm{\delta_1 x} \ge \rho_2\hnorm{x},
    \]
where $\rho_1 = \frac{\delta}{6 \Delta^{3/2 + \eps}}$, $\rho_2 = \frac{56}{\Delta^{3 - 2 \eps}}$.
    % any minimal word $x \in C_1$ with weight $< \frac{\delta n}{6 \Delta^{3/2+\eps}} = \rho_1 |X(1)|$ satisfies
    % \begin{equation*}
    %   \norm{\partial x} \ge \rho_2 \norm{x},
    % \end{equation*}
    % where $\rho_1 = \frac{\delta}{6 \Delta^{3/2 + \eps}}$ and $\rho_2 = \frac{56}{\Delta^{3 - 2 \eps}}$.
\end{theorem}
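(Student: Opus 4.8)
The plan is to prove the contrapositive inside the class of small chains: assume $x\in\FF_2^n$ satisfies $\hnorm{x}\le\rho_1 n$ and $\hnorm{\delta_1 x}<\rho_2\hnorm{x}$, and construct a co-boundary $y\in B^1$ with $\hnorm{x+y}<\hnorm{x}$. Since the theorem's hypotheses already include that $x$ be minimal, this produces a contradiction and establishes the stated isoperimetric inequality (equivalently, small-set co-boundary expansion via \Cref{lemma:SS-LTC}). The argument runs parallel to Leverrier and Z\'emor's proof of co-systolic distance; the one genuinely new ingredient is that $x$ need not be a co-cycle, so I must carry along the set of vertices where the local view of $x$ fails to be a codeword — but this set is precisely the support of $\delta_1 x$, which is small by hypothesis.

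First I would fix the combinatorial picture. Regard $x$ as a set of squares of the double-covered left--right Cayley complex and let $S\subseteq V_1$ be the vertices of $V_1$ touching a square of $x$. Partition $S=S_v\cup S_n\cup S_e$ into \emph{violated} vertices (local view not a codeword of $C_1^\perp$), \emph{normal} vertices (local view a codeword of weight below the robustness threshold $w$), and \emph{exceptional} vertices (codeword of weight $\ge w$). Immediately $|S_v|\le\hnorm{\delta_1 x}<\rho_2\hnorm{x}$, and a one-line double count $|S_e|\,w\le\sum_{v\in S_e}\hnorm{x|_{E_1^\square(v)}}\le 2\hnorm{x}$ bounds $|S_e|$. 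Robustness of $C_1^\perp$ forces each normal local view to vanish outside $O(\Delta^{1/2-\eps})$ rows and columns, and via \Cref{lem:small-set-robust} every row of a normal view is $O(\Delta^{1/2-\eps})$-close to a codeword of $C_B$ and every column $O(\Delta^{1/2-\eps})$-close to a codeword of $C_A$.

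Next I would locate a good $V_0$-vertex, which is the heart of the proof. Following Leverrier--Z\'emor I call an $A$- or $B$-edge of $\cG^\cup$ a \emph{heavy edge} of $x$ when the $\Delta$ squares through it carry many ones of $x$ (this is simultaneously a heavy row or column of both incident local views), and I let $T\subseteq V_0$ collect the $V_0$-vertices lying on a heavy edge to some normal vertex of $S_n$. The key estimate is that $S_v$, $S_e$, and $T$ are all small relative to $S_n$: using the expander mixing lemma for the double cover of $\cG^\cup$ (with $\lambda\le 4\sqrt{\Delta}$ from \Cref{lem:expander}) together with the boundary bound $\hnorm{\delta_1 x}<\rho_2\hnorm{x}$, one gets $|S_n|=\Omega(\hnorm{x}/\Delta^2)$ while $|T|,|S_e|,|S_v|=o(|S_n|/\Delta)$. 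Counting heavy edges between $S_n$ and $T$ and averaging then forces some $v_0\in V_0$ to lie on $\Omega(\Delta)$ heavy edges to normal vertices, i.e.\ its local view shares $\Omega(\Delta)$ rows or columns with normal local views.

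Finally I would cash this in. Each of those $\Omega(\Delta)$ shared rows (columns) of $v_0$'s local view equals a row (column) of a normal view, hence is $O(\Delta^{1/2-\eps})$-close to a codeword of $C_B$ (of $C_A$); summing, $x|_{E_0^\square(v_0)}$ is $O(\Delta^{3/2+\eps})$-close to both the row code $\FF_2^A\otimes C_B$ and the column code $C_A\otimes\FF_2^B$. Since $w\le d_A d_B/2$, \Cref{lem:small-set-robust} places $x|_{E_0^\square(v_0)}$ within $O(\Delta^{3/2+\eps})$ of a tensor codeword $c\in C_A\otimes C_B$, while the $\Omega(\Delta)$ heavy shared rows/columns together with the linear distance of $C_A,C_B$ force $\hnorm{x|_{E_0^\square(v_0)}}=\Omega(\Delta^2)$. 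Taking $y$ to equal $c$ on $E_0^\square(v_0)$ and $0$ on every other square gives $y\in B^1$ (it is $\delta_0=\cC_0^T$ applied to the indicator of $v_0$'s check block pushed through the generator of $C_A\otimes C_B$), and $x+y$ matches $x$ off $E_0^\square(v_0)$ but has weight $O(\Delta^{3/2+\eps})\ll\Omega(\Delta^2)$ there, so $\hnorm{x+y}<\hnorm{x}$. Propagating the explicit constants through the mixing-lemma and robustness estimates — and, as the footnote warns, setting $w=\Delta^{3/2-\eps/2}$ so that the hypothesis $d(x,\text{col code})+d(x,\text{row code})<w$ of \Cref{lem:small-set-robust} is genuinely met — yields $\rho_1=\frac{\delta}{6\Delta^{3/2+\eps}}$ and $\rho_2=\frac{56}{\Delta^{3-2\eps}}$. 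The step I expect to be the main obstacle is the third one: juggling three independent ``bad'' sets ($S_v$ from the boundary bound, $S_e$ from the weight bound, $T$ from the heavy-edge count) inside a single expander-mixing computation whose error term $\frac{\Delta}{|V|}|S||T|$ is controllable only because $\hnorm{x}\le\rho_1 n$ is small, and then squeezing out an averaging bound strong enough to produce a vertex with $\Omega(\Delta)$ — not merely one — heavy edges into $S_n$.
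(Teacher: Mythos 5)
Your overall strategy is the paper's: assume $\hnorm{\delta_1 x}<\rho_2\hnorm{x}$ for a small $x$, partition $S=S_v\cup S_n\cup S_e$, show the bad sets are small, find a $V_0$-vertex sharing many heavy rows/columns with normal vertices, and flip a nearby tensor codeword to contradict minimality. Two steps, however, would fail as written. First, your ``one-line double count'' $|S_e|\,w\le 2\hnorm{x}$ is vacuous for the purpose it needs to serve: since a vertex of $V_1$ meets at most $\Delta^2$ squares, the only a priori lower bound is $|S|\ge 2\hnorm{x}/\Delta^2$, so the double count yields $|S_e|\lesssim \Delta^{1/2+\eps}|S|$ --- it cannot even show $S_n\neq\emptyset$. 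The paper instead bounds $|S_e|$ by applying the expander mixing lemma to $E(S_e,S)$ \emph{in the square graph} $\cG_1^\square$ (where $\lambda\le 4\Delta$, not the $4\sqrt{\Delta}$ bound for $\cG^\cup$ that you cite), and this is exactly where the smallness hypothesis $\hnorm{x}\le\rho_1 n$ enters to kill the $\frac{\Delta^2}{|V_1|}|S_e||S|$ error term; the resulting bound is $|S_e\cup S_v|\le 64|S|/\Delta^{1-2\eps}$, which is $O(|S|/\Delta^{1-2\eps})$ and \emph{not} the $o(|S_n|/\Delta)$ you claim (that would be false for $\eps>0$, and the weaker true bound is what forces $d_1=\Theta(\Delta^{1/2+\eps})$ downstream).

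Second, and more seriously, your final application of robust testability at $v_0$ is not justified on the full local view. Knowing that $\Omega(\Delta)$ rows/columns of $x|_{E_0^\square(v_0)}$ are shared with normal vertices (hence close to $C_B$/$C_A$ codewords) and that the rest indexed outside $S$ are zero says nothing about the rows/columns indexed by neighbors in $S_e\cup S_v$, whose local views are either high-weight codewords (whose individual rows/columns need not be near $C_B$ or $C_A$) or not codewords at all. So $d(x_v,C_A\otimes\FF_2^B)+d(x_v,\FF_2^A\otimes C_B)$ is not controlled and \Cref{lem:small-set-robust} cannot be invoked directly. The paper repairs this with a second averaging claim (at most an $\alpha/4$ fraction of $T$ has more than $d_1=O(\Delta^{1/2+\eps})$ neighbors in $S_e\cup S_v$, via mixing on $\cG^\cup$), then \emph{punctures} those $\le d_1\le\Delta^\gamma$ bad rows and columns, applies $w$-robustness of the punctured code $C_{A'}\otimes\FF_2^{B'}+\FF_2^{A'}\otimes C_{B'}$ to get $c'\in C_{A'}\otimes C_{B'}$, and uniquely extends $c'$ to $c\in C_A\otimes C_B$ using the distance of $C_A,C_B$, paying an extra $2d_1\Delta=O(\Delta^{3/2+\eps})$ in $d(x_v,c)$. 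This is the only place the $\Delta^\gamma$-resistance-to-puncturing hypothesis is used, and your sketch never actually consumes it; without this mechanism the construction of $y$ does not go through.
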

Recall that this isoperimetric condition is equivalent to $(\rho_1,\rho_2)$-small-set co-boundary expansion (\Cref{lemma:SS-LTC}), so this indeed proves the desired property. The proof of \Cref{thm:SS-boundary-technical} closely follows the analogous proof in \cite{leverrier2022quantum} for systolic distance. The main difference is that we must track an additional set of elements consisting of vertices in $\cG^{\square}_0$ corresponding to violated constraints. Since \cite{leverrier2022quantum} only need to consider $x \in \mathbb{F}^n_2$ that are true codewords, this is not a relevant consideration in their result. We note that throughout we set our coefficients to match those in \cite{leverrier2022quantum} for ease of comparison.
% and modify \cite[Claim 9]{leverrier2022quantum} to \Cref{claim:little-ev} below. For ease of comparison, we  choose our parameters to match the coefficients in \cite[Claim 9]{leverrier2022quantum}, so that the remaining arguments in the proof stay the same as in their result.

\begin{proof}[Proof of \Cref{thm:SS-boundary-technical}]
    We assume $x \ne 0$, as the theorem holds trivially otherwise. We proceed by contradiction. Assuming $\hnorm{\delta_1 x} < \rho_2 \hnorm{x}$, we will show there exists $y \in B_1$ such that $\hnorm{x+y} < \hnorm{x}$, contradicting minimality of $x$.

    We first lay out some relevant notation. Thinking of $x$ as a subset of $E_1^\square$ (the edge set of $\cG^\square_1$), we will consider the edge-induced subgraph $\cG^\square_{1,x} \subset \cG^\square_1$ and denote its vertex set by  $S \subset V_1$. Recall that each vertex $(g,1) \in V_1$ has a corresponding local view made up of $|A|\cdot|B|$ incident squares, which we'll denote by: 
    \[
    \ell(g,1) \coloneqq \left\{\{(g,1), (ag, 0), (gb, 0), (agb, 1)\}: a \in A, b \in B\right\}.
    \]
    Thinking of $x$ now as a set of squares, let $x|_{\ell(g,1)} \in \mathbb{F}_2^{A \times B}$ denote the restriction of $x$ to the local view of $(g,1)$, and recall that $x$ is a co-cycle exactly when these local views correspond to codewords in $C_1^\perp = C_A \otimes \FF_2^B + \FF_2^A \otimes C_B$. 
    
    Since $x$ is arbitrary in our setting (unlike \cite{leverrier2022quantum} who only consider co-cycles) we will partition the vertices of our induced subgraph into three parts: $S = S_v \cup S_n \cup S_e$. First, let $S_v \subset S$ denote the set of \textit{violated vertices} $(g,1) \in V_1$ whose local views $x|_{\ell(g,1)}$ do \textit{not} form codewords in $C_1^{\perp}$. Following \cite{leverrier2022quantum}, we split the remaining vertices in $S \setminus S_v$ into two parts based upon their degree in the induced subgraph $\cG^\square_{1,x}$: the \textit{normal vertices} $S_n$ with degree less than $w_2 := \Delta^{3/2-\eps}$, and the \textit{exceptional vertices} $S_e$ with degree at least $w_2$. The intuition behind this strategy is that because $C_1^\perp$ is $(w>w_2)$-robust, the codewords associated to vertices in $S_n$ have particularly nice structure: they are zero outside of a small set of at most $w_2/(\delta \Delta)$ rows and columns. This implies that any column (respectively row) is close to a codeword in $C_A$ (respectively $C_B$)
%    . We will eventually show that most of $S$ is contained in $S_n$, which combined with the previous observation will allow 
    which will eventually help us apply small-set robust testability (\Cref{lem:small-set-robust}) to prove $x$ is close to a co-boundary (and is therefore non-minimal). 
    
    To find such a co-boundary, we'll first need to look to the other side of the complex. Broadly speaking, the idea (which is the same as in \cite{leverrier2022quantum}) is to find a vertex $v \subset V_0$ whose local view shares many (heavy) rows and columns with local views of vertices in $S_n$. One can then apply robustness to see that the value of $x$ on this local view is close to a codeword in $C_A \otimes C_B$ which can easily be translated to the desired co-boundary. 
    
    More formally, let $E_x \subset \cG^\cup$ denote the set of edges incident to the squares in $x$,\footnote{In particular for any square  $\{(g,0), (ag, 1), (gb, 1), (agb, 0)\} \in x$, add its four edges $\{(g,0), (ag, 1)\}$, $\{(g,0), (bg, 1)\}$, $\{(agb,0), (ag, 1)\}$, and $\{(agb,0), (gb, 1)\}$ to $E_x$.}
    and call an edge \textit{heavy} if it is incident to at least $\delta \Delta - \Delta^{1/2-\eps}/\delta$ squares in $x$. We will consider the set of vertices $T \subset V_0$ which are adjacent to $S_n \subset V_1$ through a heavy edge in the graph $\cG^\cup$. Given $v \in T$, note that every heavy edge with an element in $S_n$ corresponds to a row or column that is shared in their local view (and is therefore close to a codeword of $C_B$ or $C_A$ respectively). The goal is therefore to show that there exists a vertex in $T$ that is adjacent to many elements in $S_n$ through heavy edges, while simultaneously adjacent to few `bad' vertices in $S_e$ and $S_v$. This will allow us to apply robustness against puncture to find a co-boundary that reduces the weight of $x$. We formalize these statements below in the following two claims.

    \begin{claim} [Modification of {\cite[Claim 13]{leverrier2022quantum}}] \label{claim:many-heavy-little-ev}
        There exist $h_1 \geq \Omega(\Delta)$, $d_1 \leq O(\Delta^{1/2+\eps})$, and $v\in T$ such that $v$ is incident to at least $h_1$ heavy edges and adjacent to at most $d_1$ vertices of $S_e \cup S_v$.
    \end{claim}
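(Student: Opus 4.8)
The plan is to establish three quantitative facts — $|S_e|$, $|S_v|$, and $|T|$ are all small relative to $|S_n|$ — and then run a counting/averaging argument over heavy edges between $S_n$ and $T$. First I would bound the ``bad'' sets. By definition every vertex in $S_e$ has degree at least $w_2 = \Delta^{3/2-\eps}$ in the induced subgraph $\cG^\square_{1,x}$, so a simple handshake bound on $\hnorm{x}$ forces $|S_e| \le 2\hnorm{x}/w_2 = O(\hnorm{x}\Delta^{-3/2+\eps})$. For $S_v$, each violated vertex contributes at least one nonzero coordinate to $\delta_1 x$ (its local view fails the local parity checks), so $|S_v| \le \hnorm{\delta_1 x} < \rho_2 \hnorm{x}$ by the contradiction hypothesis — and with $\rho_2 = 56/\Delta^{3-2\eps}$ this is also $O(\hnorm{x}\Delta^{-(3-2\eps)})$, genuinely tiny. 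So both $|S_e|$ and $|S_v|$ are $o(|S_n|)$ provided $|S_n| = \Theta(\hnorm{x}/\Delta)$, which holds because $S_n$ carries all but a lower-order fraction of the $\Theta(\hnorm{x})$ square-endpoints while each normal vertex touches at most $w_2 \ll \Delta^2$ squares — in fact at least $\hnorm{x}/\Delta^2$ and at most $w_2$ of them, so $|S_n| \ge \hnorm{x}/w_2$.

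Next I would bound $|T|$. A vertex $v \in V_0$ lands in $T$ only if it has a heavy edge (incident to $\ge \delta\Delta - \Delta^{1/2-\eps}/\delta$ squares of $x$) to some normal vertex; in particular $v$ itself is incident to at least $\delta\Delta - \Delta^{1/2-\eps}/\delta = \Omega(\Delta)$ squares of $x$. Counting square-endpoints in $V_0$ then gives $|T| \cdot \Omega(\Delta) \le 2\hnorm{x}$, i.e.\ $|T| = O(\hnorm{x}/\Delta)$, the same order as $|S_n|$. Now count heavy edges of $\cG^\cup$ running between $S_n$ and $T$: I claim there are $\Omega(\hnorm{x}/\sqrt\Delta)$ of them. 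Indeed, each normal vertex $(g,1)\in S_n$ has a local view that is a codeword of $C_1^\perp$ of weight at least $\delta\Delta$ (any nonzero codeword has weight $\ge \delta\Delta$ since $C_A,C_B$ have distance $\ge \delta\Delta$ and $d(C_1^\perp)=\min(d_A,d_B)$), concentrated on $\le w_2/(\delta\Delta) = \Delta^{1/2-\eps}/\delta$ rows and columns by robustness; so the squares at $(g,1)$ are spread over at most $O(\Delta^{1/2-\eps})$ rows/columns but number at least $\delta\Delta$, forcing at least one of those rows or columns to contain $\ge \delta\Delta/O(\Delta^{1/2-\eps}) = \Omega(\Delta^{1/2})$ squares — a heavy edge. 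Summing the heavy-edge degree over all of $S_n$ (using that each heavy edge is counted at most $O(1)$ times, since an edge of $\cG^\cup$ touches one vertex of $V_1$), I get $\ge |S_n|\cdot\Omega(\Delta^{1/2}) / O(\Delta^{1/2}) \cdot$ (correction) — more carefully, the total number of (normal vertex, heavy edge) incidences is $\Omega(|S_n|)$, hence $\Omega(\hnorm{x}/w_2)$, and since each heavy edge meets exactly one $V_1$-vertex, this counts distinct heavy edges, giving $\Omega(\hnorm{x}/w_2)$ heavy edges landing in $T$.

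Finally, averaging: there are $\Omega(\hnorm{x}/w_2)$ heavy edges distributed over $|T| = O(\hnorm{x}/\Delta)$ vertices of $T$, so the average number of heavy edges per vertex of $T$ is $\Omega(\Delta/w_2) = \Omega(\Delta^{-1/2+\eps})$ — this is the wrong direction, so the real argument must instead show that a \emph{typical good} vertex of $T$ (one avoiding the few bad vertices) collects $\Omega(\Delta)$ heavy edges, which is where the expander mixing lemma for $\cG^\cup$ (Lemma on double-cover mixing) enters: the heavy edges between $S_n$ and $T$ cannot be too evenly spread because $\cG^\cup$ is a $4\sqrt\Delta$-expander, so the number of $S_e\cup S_v$-neighbors of a uniformly chosen heavy edge's $V_0$-endpoint is $\le \frac{\Delta}{|G|}|S_e\cup S_v| + 4\sqrt\Delta\sqrt{\cdots} = o(1)\cdot(\text{heavy-edge count per vertex})$, and a Markov/pigeonhole step over the heavy-edge multiset on $T$ extracts a single $v$ with $h_1 = \Omega(\Delta)$ heavy edges to $S_n$ and at most $d_1 = O(\Delta^{1/2+\eps})$ neighbors in $S_e\cup S_v$. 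The main obstacle is exactly this last extraction: one must simultaneously control the lower tail (many heavy edges at $v$) and the upper tail (few bad neighbors at $v$) with matching constants, and getting $h_1 = \Omega(\Delta)$ rather than merely $\omega(1)$ requires carefully using that heavy edges from one normal vertex cluster into $O(\Delta^{1/2-\eps})$ rows/columns — this forces their $V_0$-endpoints to repeat, concentrating heavy edges on relatively few vertices of $T$. I expect the row/column-clustering bookkeeping together with the double-cover mixing lemma to be the technically delicate part; the set-size bounds ($|S_e|, |S_v|, |T| \ll |S_n|$) are routine handshakes once $\rho_1,\rho_2$ are set as in the theorem statement.
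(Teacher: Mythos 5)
Your skeleton matches the paper's (bound $|S_v|$, $|S_e|$, $|T|$, then average heavy edges over $T$ and finish with a Markov/union-bound step), and your bound on $|S_v|$ via $|S_v|\leq\hnorm{\delta_1 x}<\rho_2\hnorm{x}$ and your observation that every normal vertex carries at least one heavy edge (robustness plus distance of $C_A,C_B$) are exactly right. But two of your three set-size bounds are obtained by handshake counts that are quantitatively too weak, and this is precisely where the argument breaks. For $|S_e|$: the handshake gives $|S_e|\le 2\hnorm{x}/w_2$, but the only unconditional lower bound on $|S|$ is $|S|\ge 2\hnorm{x}/\Delta^2$, so this only yields $|S_e|/|S|\le\Delta^2/w_2=\Delta^{1/2+\eps}$, which is vacuous; your claimed lower bound $|S_n|\ge\hnorm{x}/w_2$ presupposes that most square-endpoints land on normal vertices, which is what you are trying to prove. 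The paper instead applies the expander mixing lemma to $E(S_e,S)$ in $\cG_1^\square$, using the small-set hypothesis $\hnorm{x}\le\rho_1 n$ to make the degree term $\frac{\Delta^2}{|V_1|}|S_e||S|$ subdominant to the lower bound $w_2|S_e|\le|E(S_e,S)|$; this is what produces $|S_e|\le O(|S|/\Delta^{1-2\eps})$ and hence $|S_n|\ge(1-o(1))|S|$.

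The second gap is your bound $|T|=O(\hnorm{x}/\Delta)$, which is a factor of $\Delta$ too weak. Since the heavy-edge count is lower-bounded by $|S_n|\approx|S|$ and $|S|$ itself is only $\Theta(\hnorm{x}/\Delta)$ at best, your averaging gives heavy-degree $\Omega(1)$ per vertex of $T$ rather than $\Omega(\Delta)$ --- you notice this ("the wrong direction") but then guess that the fix lies in row/column clustering of heavy edges. It does not: the fix is another application of the expander mixing lemma, this time to $E(S,T)$ in $\cG^\cup$, again leaning on $\hnorm{x}\le\rho_1 n$ so that the main term is small; combined with $|E(S,T)|\ge(\delta\Delta-\Delta^{1/2-\eps}/\delta)|T|$ this yields $|T|\le\frac{64}{\delta^2\Delta}|S|$, so that $|S_n|/|T|\ge\Omega(\Delta)$ with the $|S|$'s cancelling. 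Once these two mixing-lemma bounds are in place, the extraction you worry about is routine: the maximum heavy-degree in $T$ is trivially $2\Delta$, so average $2\alpha\Delta$ forces an $\alpha/2$ fraction with degree $\ge\alpha\Delta$; a third mixing-lemma application to $E(S_e\cup S_v,T)$ plus Markov shows at most an $\alpha/4$ fraction has more than $O(\Delta^{1/2+\eps})$ bad neighbours, and non-emptiness of $T$ finishes the claim.
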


    \begin{claim} [Summary of paragraph following {\cite[Claim 13]{leverrier2022quantum}}] \label{claim:exist-flip}
    For all sufficiently large\footnote{Here we mean in terms of $r,\varepsilon$, and $\gamma$, so $\Delta$ remains constant with respect to the infinite family.} $\Delta$, if there exists a vertex $v \in V_0$ incident to $h_1 \geq \Omega(\Delta)$ heavy edges and at most $d_1 \leq O(\Delta^{1/2+\eps})$ vertices of $S_e \cup S_v$, then we can find a vector $y \in B_1$ such that $\hnorm{x+y} < \hnorm{x}$.
    \end{claim}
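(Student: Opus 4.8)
The plan is to produce the co-boundary $y$ explicitly, as a single tensor codeword supported on the local view of the vertex $v$ supplied by the hypothesis. Recall that $B^1=\Ima(\delta_0)=\Ima(\cC_0^T)$ is spanned by the vectors obtained by placing a codeword of $(C_0^\perp)^\perp=C_0=C_A\otimes C_B$ on the $\Delta^2$ squares incident to a single vertex $w\in V_0$ and $0$ on all other squares (for the boundary direction of \Cref{thm:SS-boundary-technical} one uses the same description with $C_0$ replaced by $C_1$). Let $M\in\FF_2^{A\times B}$ be the restriction of $x$ to the local view of $v$. It therefore suffices to find $c\in C_A\otimes C_B$ with $\hnorm{M+c}<\hnorm{M}$: taking $y\in B^1$ equal to $c$ on $v$'s local view and $0$ elsewhere, and using that $x$ and $x+y$ agree on every square not incident to $v$,
\[
\hnorm{x+y}=\hnorm{x}-\hnorm{M}+\hnorm{M+c}<\hnorm{x}.
\]
Taking $c$ nearest to $M$ in $C_A\otimes C_B$, the claim reduces to two estimates: (i) $\hnorm{M}=\Omega(\Delta^2)$, and (ii) $d(M,C_A\otimes C_B)=O(\Delta^{3/2+\eps})$. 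Since $\eps<1/2$, the right side of (ii) is $o(\Delta^2)$, so for $\Delta$ large enough (i) and (ii) force $\hnorm{M+c}<\hnorm{M}$.

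Estimate (i) is immediate from the heavy edges. Under the identification of squares with edges of $\cG^\cup$, each edge at $v$ corresponds to a row or a column of $M$, and a heavy edge corresponds to a line carrying at least $\delta\Delta-\Delta^{1/2-\eps}/\delta$ ones of $x$. The $h_1=\Omega(\Delta)$ heavy edges at $v$ split into $A$-edges and $B$-edges; whichever kind has at least $h_1/2$ members yields $\geq h_1/2$ parallel, hence pairwise disjoint, lines of $M$, so $\hnorm{M}\geq\tfrac{h_1}{2}(\delta\Delta-\Delta^{1/2-\eps}/\delta)=\Omega(\Delta^2)$.

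Estimate (ii) is the technical core and follows the paragraph after \cite[Claim 13]{leverrier2022quantum}, with one change. Each of the $2\Delta$ lines of $M$ coincides with a line of the restriction of $x$ to the local view of the far endpoint of the corresponding edge in $V_1$. If that endpoint lies outside $S$, the line is $0$, hence in $C_A$ (resp.\ $C_B$). If it is a normal vertex $u\in S_n$, then the restriction of $x$ to $u$'s local view is a codeword of $C_1^\perp$ of weight $<w_2=\Delta^{3/2-\eps}<w$, so by $w$-robustness of $C_1^\perp$ and $d_A,d_B\geq\delta\Delta$ every column of it is within $O(\Delta^{1/2-\eps})$ of $C_A$ and every row within $O(\Delta^{1/2-\eps})$ of $C_B$; hence the shared line of $M$ is $O(\Delta^{1/2-\eps})$-close to $C_A$ (if a column of $M$) or $C_B$ (if a row). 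The only uncontrolled lines of $M$ are those whose far endpoint lies in $S_e\cup S_v$, and by hypothesis there are $O(d_1)=O(\Delta^{1/2+\eps})$ of them — this is the only place the violated vertices $S_v$ enter, and they are handled exactly as the exceptional vertices are in \cite{leverrier2022quantum}. Deleting these lines — say $A''\subseteq A$, $B''\subseteq B$ with $|A''|+|B''|=O(\Delta^{1/2+\eps})<p=\Delta^\gamma$ — and writing $M'=M|_{(A\setminus A'')\times(B\setminus B'')}$, on the surviving grid we have
\[
d(M',C_{A\setminus A''}\otimes\FF_2^{B\setminus B''})+d(M',\FF_2^{A\setminus A''}\otimes C_{B\setminus B''})\leq\Delta\cdot O(\Delta^{1/2-\eps})=O(\Delta^{3/2-\eps})<w,
\]
so by the $p$-resistance-to-puncturing hypothesis of \Cref{thm:SS-boundary-technical} and \Cref{lem:small-set-robust}, $M'$ lies within $O(\Delta^{3/2-\eps})$ of $C_{A\setminus A''}\otimes C_{B\setminus B''}$. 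Lifting the nearest such codeword to one of $C_A\otimes C_B$ (a puncture of $C_A\otimes C_B$ is the tensor of the punctures) only adds weight on the deleted rows and columns, at most $(|A''|+|B''|)\Delta=O(\Delta^{3/2+\eps})$, which gives (ii).

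The main obstacle is internal to (ii): the step that every line of a normal vertex's local view is $O(\Delta^{1/2-\eps})$-close to the appropriate base code (so all but $O(\Delta^{1/2+\eps})$ lines of $M$ are controlled), which is where the robustness of $C_1^\perp$ and, after deleting the bad lines, its resistance to puncturing are genuinely used, and then checking the numerology $w_2<w$, $\Delta^{1/2-\eps}/\delta<\tfrac12\min(d_A,d_B)$, and $|A''|+|B''|<p$ for large $\Delta$ under the choices $w=\Delta^{3/2-\eps/2}$, $p=\Delta^\gamma$, $w_2=\Delta^{3/2-\eps}$ fixed in \Cref{thm:SS-boundary-technical}. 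Apart from folding $S_v$ into the ``$S_e\cup S_v$'' bound — already done by \Cref{claim:many-heavy-little-ev} — every step is as in \cite{leverrier2022quantum}.
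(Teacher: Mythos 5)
Your proposal is correct and follows essentially the same route as the paper: place a tensor codeword of $C_A\otimes C_B$ on the local view of $v$, lower-bound $\hnorm{x_v}=\Omega(\Delta^2)$ via the heavy edges, and upper-bound $d(x_v,C_A\otimes C_B)=O(\Delta^{3/2+\eps})$ by puncturing away the $O(d_1)$ lines whose far endpoints lie in $S_e\cup S_v$ and applying robustness-with-puncture together with \Cref{lem:small-set-robust}. Your estimate (i), which splits the heavy edges by type to obtain disjoint lines, is in fact slightly more careful than the paper's one-line assertion that $\hnorm{x_v}=\Theta(\Delta^2)$.
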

Together, \Cref{claim:many-heavy-little-ev} and \Cref{claim:exist-flip} complete the proof of \Cref{thm:SS-boundary-technical}, as they promise the existence of some $y \in B_1$ such that $\hnorm{x+y} < \hnorm{x}$, violating minimality of $x$. Thus it is left to prove the claims. While \Cref{claim:exist-flip} follows largely from arguments in \cite{leverrier2022quantum}, it is helpful to present first to motivate the more technical proof of \Cref{claim:many-heavy-little-ev}.
% It is left to prove \Cref{claim:exist-flip}. This follows essentially as in \cite{leverrier2022quantum} where $S_e$ is replaced with $S_e \cup S_v$, but we give the proof for completeness.
    \begin{proof} [Proof of \Cref{claim:exist-flip}]
        Recall we are given an element $v \in V_0$ which is incident to at least $h_1 \geq \Omega(\Delta)$ heavy edges and adjacent to at most $d_1 \leq O(\Delta^{1/2+\varepsilon})$ vertices in $S_e \cup S_v$. We consider the local view of $x$ around $v$ (considered as an element of $\cG^\square_0$), denoted $x_v \in \FF_2^{A \times B}$ here for notational simplicity.
        Because at most $d_1$ vertices adjacent to $v$ in $\cG^{\cup}$ are exceptional or violated (as considered in $\cG^{\square}_{1,x}$), one can find $A' \subset A, B' \subset B$ with $|A'| = |B'| \geq \Delta - d_1$, such that $A'$ and $B'$ are indexed by either normal vertices, or vertices not in $S$. Furthermore, since $d_1 \leq \Delta^\gamma$ for large enough $\Delta$, we also have by robustness to puncture that the restricted dual tensor code $(C_1^{\perp})' \coloneqq C_{A'} \otimes \mathbb{F}^{B'}_2 + \mathbb{F}^{A'}_2 \otimes C_{B'}$ is $w$-robust.

        Let $x_v'$ be the restriction of $x_v$ in $A' \times B'$. Recall each column (row) of the local view of a normal vertex is at most $w_2/(\delta \Delta)$ away from a codeword by $w$-robustness. Then since each column (row) of $x_v'$ is indexed by either a normal vertex or a vertex whose local view is all zero (i.e. not in $S$), every column (respectively row) of $x_v'$ is at most $w_2/(\delta \Delta)=\Delta^{1/2-\eps}/\delta$ away from a codeword in $C_{A'}$ (respectively $C_{B'}$). Since there are at most $\Delta$ rows and columns, this means that $x_v'$ is at most $\Delta^{3/2-\varepsilon}/\delta$ away from either $C_{A'} \otimes \FF_2^{B'}$ or $\FF_2^{A'} \otimes C_{B'}$, and moreover that:
        \[
        d(x_v', C_{A'} \otimes \FF_2^{B'}) + d(x_v', \FF_2^{A'}\otimes C_{B'}) \leq 2\Delta^{3/2-\varepsilon}/\delta \leq w
        \]
        for sufficiently large $\Delta$. Because $(C_1^{\perp})'$ is $w$-robust, we can apply small-set robust testability (\Cref{lem:small-set-robust}) to infer that $x_v'$ is close to some codeword $c' \in C_{A'} \otimes C_{B'}$:
        \begin{equation*}
            d(x_v', c') \le \frac{3}{2} \left(d(x_v', C_{A'} \otimes \FF_2^{B'}) + d(x_v', \FF_2^{A'}\otimes C_{B'}) \right) \le 3 \frac{\Delta^{3/2-\eps}}{\delta}.
        \end{equation*}
        % We have $d(x_v', C_{A'} \otimes \FF_2^{B'} \le \Delta (\Delta^{1/2-\eps}/\delta)$,
        % because each column is $\Delta^{1/2-\eps}/\delta$ away from a codeword and there are at most $\Delta$ columns, more precisely, $\Delta-d_1$ columns.
        Finally, since the total number of punctured rows and columns is less than the code distance for large enough $\Delta$, we can extend $c'$ uniquely to a codeword $c \in C_A \otimes C_B$. Taking into account the rows and columns added in this process, the distance from $x_v$ to $c$ then becomes at most $d(x_v, c) \leq d(x_v',c') + 2d_1\Delta \leq O(\Delta^{3/2 + \eps}) < o(\Delta^2)$ since $\varepsilon < 1/2$.

        On the other hand, because $v$ is incident to $\Omega(\Delta)$ heavy edges, the weight $|x_v| = \Theta(\Delta^2)$. Thus for large enough $\Delta$, it must be the case that flipping $c$ strictly reduces the weight of $x$. More precisely, set $y$ to be $c$ on the local view $x_v$ and $0$ elsewhere, then we have $\hnorm{x+y} < \hnorm{x}$. Since $c \in C_A \otimes C_B = C_0$, $y \in B^1$ is indeed a co-boundary which completes the proof.
    \end{proof}
    The only thing left is to show that our main technical claim actually holds, the existence of a vertex with many heavy edges that is adjacent to few violated or exceptional vertices. The proof technique is similar to that of \cite[Claim 13]{leverrier2022quantum}, and mostly boils down to proving that $S_v$ and $S_e$ are small compared to $S_n$.
    \begin{proof} [Proof of \Cref{claim:many-heavy-little-ev}]
    We split the proof into the following three claims. First, we claim $T$ is non-empty.
    \begin{claim}\label{claim:nonempty}
            $|T| > 0$.
    \end{claim}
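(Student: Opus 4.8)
The plan is to prove \Cref{claim:nonempty} by contradiction: suppose $T=\emptyset$, i.e.\ no normal vertex is incident in $\cG^\cup$ to a heavy edge of $E_x$. I would organize the argument in two steps: first show $S_n\neq\emptyset$, then show that any normal vertex already produces an element of $T$.

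\emph{Step 1: $S_n\neq\emptyset$.} Suppose not, so $S=S_v\cup S_e$. Each nonzero block of $\delta_1 x$ (indexed by $V_1$) lies over a violated vertex, so $|S_v|\le\hnorm{\delta_1 x}<\rho_2\hnorm{x}$; and since each exceptional vertex has degree at least $w_2$ in $\cG^\square_{1,x}$, the handshake identity $\sum_{u\in S}\deg_x(u)=2\hnorm{x}$ gives $|S_e|\le 2\hnorm{x}/w_2$. Now every square of $x$ either touches $S_v$ — of which there are at most $\sum_{u\in S_v}\deg_x(u)\le\Delta^2|S_v|<\Delta^2\rho_2\hnorm{x}$, which is $o(\hnorm{x})$ since $\rho_2=56/\Delta^{3-2\eps}$ and $\eps<1/2$ — or lies entirely inside $S_e$. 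Hence at least $(1-o(1))\hnorm{x}$ squares lie inside $S_e$, so applying the expander mixing lemma to $\cG^\square_1$ (which is $\Delta^2$-regular with $\lambda\le 4\Delta$ by \Cref{lem:expander}), and using $|S_e|\le 2\hnorm{x}/w_2\ll|G|$ together with $\hnorm{x}\le\rho_1 n$, $w_2=\Delta^{3/2-\eps}$, $\rho_1=\delta/(6\Delta^{3/2+\eps})$, one gets
\[
(1-o(1))\hnorm{x}\ \le\ \frac{\Delta^2}{|G|}|S_e|^2+4\Delta|S_e|\ =\ o(\hnorm{x}),
\]
a contradiction for $\Delta$ large. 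So a normal vertex exists.

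\emph{Step 2: a normal vertex lies below an element of $T$.} Fix $u\in S_n$ and let $c\coloneqq x|_{\ell(u)}$ be its local view, a nonzero codeword of $C_1^\perp$ with $\delta\Delta\le\hnorm{c}<w_2$, where the lower bound is the distance of $C_1^\perp=C_A\otimes\FF_2^B+\FF_2^A\otimes C_B$ (\Cref{cor:exist-robust-code-and-dual-robust-code}). By $w$-robustness (using $w>w_2$), $c$ is supported on a set of rows $A'$ and columns $B'$ with $|A'|,|B'|\le\hnorm{c}/(\delta\Delta)\le\Delta^{1/2-\eps}/\delta$, and moreover every column of $c$ lies within $\Delta^{1/2-\eps}/\delta$ of a codeword of $C_A$ and every row within $\Delta^{1/2-\eps}/\delta$ of a codeword of $C_B$ — the same consequence of robustness used in the proof of \Cref{claim:exist-flip}. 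If some column (or row) of $c$ is close to a \emph{nonzero} codeword, that codeword has weight at least $\delta\Delta$, so the column has weight at least $\delta\Delta-\Delta^{1/2-\eps}/\delta$; the corresponding edge of $E_x$ joins $u$ to a vertex $v\in V_0$ and is heavy by definition, whence $v\in T$. Otherwise every row and column of $c$ has weight at most $\Delta^{1/2-\eps}/\delta$, and since $c$ is supported on $(A'\times B)\cup(A\times B')$ this forces $\hnorm{c}\le(|A'|+|B'|)\,\Delta^{1/2-\eps}/\delta\le 2\Delta^{1-2\eps}/\delta^2<\delta\Delta$ for $\Delta$ large, contradicting $\hnorm{c}\ge\delta\Delta$. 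Either way $T\neq\emptyset$.

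The main obstacle is purely quantitative bookkeeping: I must check that the $o(1)$ error terms — in the expander-mixing step of Step 1 and in the final weight count of Step 2 — are genuinely below the required thresholds for the fixed large constant $\Delta$ of the construction, uniformly over the admissible ranges $\eps\in(0,1/2)$ and $\delta>0$. This is exactly what the ``$\Delta$ large enough (in terms of $\eps,\gamma,\delta$)'' hypothesis of \Cref{thm:SS-boundary-technical} provides, so no new idea is needed beyond the robustness input already invoked in \Cref{claim:exist-flip}.
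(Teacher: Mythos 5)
Your proof is correct, and its core (Step 2: every normal vertex is incident to at least one heavy edge, via $w$-robustness of $C_1^\perp$ and the distance $\delta\Delta$ of $C_A,C_B$) is exactly the paper's argument — you are in fact slightly more careful, since you explicitly rule out the degenerate case where every row and column is close to the zero codeword, which the paper leaves implicit. The only divergence is Step 1: you re-derive $S_n\neq\emptyset$ from scratch via expander mixing on $\cG_1^\square$, whereas the paper simply invokes \Cref{claim:little-ev} (already established at that point in the argument), which gives $|S_n|\geq(1-64/\Delta^{1-2\eps})|S|>0$ directly from $|S|>0$; your detour is valid but unnecessary.
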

    With this in mind, let $\alpha,\beta=\Theta(1)$ be constants to be set later in the proof. Following \cite{leverrier2022quantum}, we claim that a reasonable fraction of $T$ is incident to many heavy edges:
        \begin{claim} [{\cite[Claim 12]{leverrier2022quantum}}] \label{claim:many-heavy}
            At least an $\alpha/2$ fraction of vertices in $T$ are incident to at least $h_1 = \alpha \Delta$ heavy edges,
        \end{claim}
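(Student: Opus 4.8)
The plan is to lower bound the total number of heavy edges between $T$ and $S_n$ (call this quantity $H$), then argue that if too few vertices of $T$ carried many heavy edges, $H$ would be too small. First I would get a handle on the edges of $\cG^\cup$ incident to squares of $x$: each square contributes $4$ edges to $E_x$, so $|E_x|\le 4|x|$, and since every heavy edge is incident to at least $\delta\Delta - \Delta^{1/2-\eps}/\delta = \Omega(\Delta)$ squares while non-heavy edges can be incident to fewer, a counting argument (each square touches $4$ edges, and a heavy edge "uses up" $\Omega(\Delta)$ square-incidences) shows there cannot be too many heavy edges total — roughly $O(|x|/\Delta)$ of them. More importantly I want a \emph{lower} bound on the number of heavy edges landing in $S_n$. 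The key input is the assumption $\hnorm{\delta_1 x} < \rho_2\hnorm{x}$, which bounds $|S_v|$ (each violated vertex contributes at least one nonzero coordinate to $\delta_1 x$, and in fact the local view structure gives $|S_v| \le \hnorm{\delta_1 x}$ up to constants), together with a bound on $|S_e|$: since each exceptional vertex has degree $\ge w_2 = \Delta^{3/2-\eps}$ in $\cG^\square_{1,x}$ and $\sum_{v\in S}\deg(v) = 2|x|$, we get $|S_e| \le 2|x|/w_2 = O(|x|/\Delta^{3/2-\eps})$. Thus both $S_v$ and $S_e$ are much smaller than $|x|/\Delta$, which (since a typical normal vertex has degree $\Theta(\Delta)$... actually degree up to $w_2$) forces $|S_n| = \Omega(|x|/\Delta)$ after accounting for the square-count: the squares incident to $S_v\cup S_e$ number at most $O(|x|/\Delta^{1/2-\eps})\cdot\Delta^2 = o(|x|)$ wait — here I need to be careful, but morally the mass of $x$ sitting on normal vertices is a constant fraction of $|x|$.

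Next I would translate "mass on normal vertices" into "many heavy edges to $S_n$." For a normal vertex $(g,1)\in S_n$, its local view $x|_{\ell(g,1)}$ is a codeword of $C_1^\perp$ of weight $<w_2 < w$, so by $w$-robustness it is supported on at most $w_2/(\delta\Delta) = \Delta^{1/2-\eps}/\delta$ rows and at most that many columns. Each such heavy row/column, if it has weight $\ge \delta\Delta - \Delta^{1/2-\eps}/\delta$, corresponds exactly to a heavy edge of $\cG^\cup$ from $(g,1)$ to some vertex of $V_0$. By the distance of $C_A$ and $C_B$ (at least $\delta\Delta$), any nonzero row/column of the codeword has weight $\ge \delta\Delta$ minus the at most $\Delta^{1/2-\eps}/\delta$ "exceptional" entries, hence is heavy; so a normal vertex carrying total weight $W$ in its local view has $\Omega(W/\Delta)$ heavy incident edges to $V_0$. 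Summing over $S_n$ and using $\sum_{v\in S_n}(\text{weight on }v) = \Theta(|x|)$, the total number of (heavy edge, normal endpoint) incidences is $\Omega(|x|/\Delta)$, and each heavy edge has at most $2$ endpoints in $V_1$ so $H = \Omega(|x|/\Delta)$.

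Finally, combine this with the upper bound $|T| \le (\text{number of heavy edges}) \le O(|x|/\Delta)$ — actually I want $|T| \le O(|x|/\Delta)$ which follows since each heavy edge has one endpoint in $V_0$ and there are $O(|x|/\Delta)$ heavy edges. If fewer than an $\alpha/2$ fraction of $T$ were incident to $\ge \alpha\Delta$ heavy edges, then the vertices of $T$ with $< \alpha\Delta$ heavy edges contribute $< |T|\cdot\alpha\Delta = O(\alpha |x|)$ heavy-edge-endpoints, while the remaining $< (\alpha/2)|T|$ vertices contribute at most $(\alpha/2)|T|\cdot\Delta$ (each vertex in $\cG^\square_0\subset V_0$ is incident to at most $\Delta^2$ squares, but the number of \emph{distinct} $\cG^\cup$-edges at a $V_0$ vertex is at most... $2\Delta$ — each vertex $(g,0)$ has $\Delta$ $A$-edges and $\Delta$ $B$-edges in $\cG^\cup$), so that contribution is also $O(\alpha|x|/\Delta)\cdot\Delta = O(\alpha|x|)$; choosing $\alpha$ a small enough constant makes the total $< H = \Omega(|x|/\Delta)$, a contradiction. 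The main obstacle I expect is getting the constants in the chain $|S_v|,|S_e| \ll |S_n|$ to actually close — in particular verifying that $\rho_2 = 56/\Delta^{3-2\eps}$ is chosen small enough that $\hnorm{\delta_1 x}$ is negligible compared to the heavy-edge mass, and tracking the $\Delta^{1/2-\eps}/\delta$ correction terms carefully enough that "heavy row of a normal codeword" is genuinely heavy in the edge sense. This bookkeeping is exactly what \cite[Claim 12]{leverrier2022quantum} does for codewords; the only new wrinkle is carrying the $S_v$ term through, which is controlled by the $\hnorm{\delta_1 x} < \rho_2\hnorm{x}$ hypothesis.
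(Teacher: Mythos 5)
There is a genuine quantitative gap, and it is a factor of $\Delta$, not a matter of constants. Your final step needs the average number of heavy edges per vertex of $T$ to be $\Omega(\Delta)$, but your two bounds --- $H=\Omega(|x|/\Delta)$ heavy edges into $S_n$ and $|T|\le(\text{number of heavy edges})=O(|x|/\Delta)$ --- only give an average of $\Omega(1)$. Concretely, in your contradiction the ``bad'' scenario contributes up to $2\alpha\Delta|T|=O(\alpha|x|)$ heavy-edge incidences, which for any constant $\alpha$ is \emph{larger} than your lower bound $H=\Omega(|x|/\Delta)$, so no contradiction is reached. The missing idea is the paper's much stronger upper bound $|T|\le\frac{64}{\delta^2\Delta}|S|=O(|x|/\Delta^2)$: every $v\in T$ is incident to a heavy edge, that heavy edge lies in at least $\delta\Delta-\Delta^{1/2-\eps}/\delta$ squares of $x$, and each such square contributes a distinct $\cG^\cup$-edge from $v$ into $S$, so $|E(S,T)|\ge\Omega(\Delta)\,|T|$; combining this with the expander mixing lemma upper bound on $|E(S,T)|$ (using that $|S|$ is small) gains the extra factor of $\Delta$. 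With that bound, the paper only needs the crude count $H\ge|S_n|\ge(1-o(1))|S|$ (one heavy edge per normal vertex) to get average heavy-degree $\ge\frac{\delta^2\Delta}{64}(1-o(1))=2\alpha\Delta$, and the standard averaging against the max degree $2\Delta$ finishes.

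A secondary problem is your lower bound $H=\Omega(|x|/\Delta)$ itself. It rests on the total $\cG^\square_{1,x}$-degree mass sitting on $S_n$ being $\Theta(|x|)$, but the available estimates (\Cref{claim:little-ev}) only control the \emph{number} of exceptional and violated vertices, not their weight: an exceptional vertex can have degree up to $\Delta^2$ while the threshold for being exceptional is only $\Delta^{3/2-\eps}$, so $S_e$ could a priori carry essentially all of $|x|$, leaving $W_{S_n}=o(|x|)$. The paper's count-based bound $H\ge|S_n|$ avoids this entirely. Your per-normal-vertex claim (weight $W$ yields $\Omega(W/\Delta)$ heavy rows/columns via robustness) is fine, but it is feeding into an argument whose bookkeeping cannot close at the required scale.
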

        \noindent and further that at most some smaller fraction is adjacent to greater than $d_1$ violated and exceptional vertices:
        \begin{claim} [{\cite[Paragraph between Claim 4.10 and Claim 4.11]{leverrier2022quantum}}] \label{claim:little-ev2}
            At most an $\alpha/4$ fraction of vertices in $T$ are incident to more than $d_1 = \frac{4\beta}{\alpha} \Delta^{1/2 + \eps}$ vertices of $S_e \cup S_v$.
        \end{claim}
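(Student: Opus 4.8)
The plan is to prove \Cref{claim:little-ev2} by a Markov-type averaging over the edges of $\cG^\cup$ joining $T$ to $S_e\cup S_v$, for which two estimates are needed: an upper bound on $|S_e|+|S_v|$ and a lower bound on $|T|$. I would first bound the bad vertices. Since $\delta_1=\cC_1=T(\cG_1^\square,C_1^\perp)$, the restriction of $\delta_1 x$ to the block of check coordinates attached to a vertex $(g,1)\in V_1$ is exactly the $C_1^\perp$-syndrome of the local view $x|_{\ell(g,1)}$; this block is nonzero iff $(g,1)$ is violated and the blocks are disjoint, so $|S_v|\le\hnorm{\delta_1 x}<\rho_2\hnorm{x}$ by the standing assumption. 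Each exceptional vertex has degree at least $w_2=\Delta^{3/2-\eps}$ in $\cG^\square_{1,x}$, while $\sum_{(g,1)\in S}\deg_{\cG^\square_{1,x}}(g,1)=2\hnorm{x}$, giving $|S_e|\le 2\hnorm{x}/w_2$. Plugging in $\rho_2=56\Delta^{2\eps-3}$, $w_2=\Delta^{3/2-\eps}$, and $\hnorm{x}\le\rho_1 n=\tfrac{\delta}{6\Delta^{3/2+\eps}}\cdot\tfrac{\Delta^2|G|}{2}$, the two bounds combine to $|S_e|+|S_v|=O(\hnorm{x}/\Delta^{3/2-\eps})=O(|G|/\Delta)$, with the violated-vertex term strictly lower order. (The $S_v$ bound is the one genuine addition to \cite{leverrier2022quantum}, which handles only true codewords.)

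Second, I would reuse the lower bound on $|T|$ produced in the proofs of \Cref{claim:nonempty} and \Cref{claim:many-heavy}: by $w$-robustness every normal vertex $w$ has a local view that is a nonzero codeword of $C_1^\perp$ of weight in $[d(C_1^\perp),w_2)\subseteq[\delta\Delta,w)$, hence supported on a small ``cross'' of $O(\deg_x(w)/\Delta)$ rows and columns, each nearly full and therefore a heavy edge of $\cG^\cup$ with its $V_0$-endpoint in $T$; combined with an expander-mixing estimate on $\cG^\square_1$ showing that only a vanishing fraction of $x$'s squares have both $V_1$-endpoints in $S_e\cup S_v$, this forces $T$ to be polynomially large in $\hnorm{x}/\Delta$. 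Feeding these estimates into the expander mixing lemma for the double cover $\cG^\cup$ (a $2\Delta$-regular cover of a $4\sqrt\Delta$-spectral expander, \Cref{lem:expander}), one gets $|E_{\cG^\cup}(T,S_e\cup S_v)|\le\tfrac{2\Delta}{|G|}|T|(|S_e|+|S_v|)+4\sqrt\Delta\sqrt{|T|(|S_e|+|S_v|)}$, and the bounds on $|S_e|+|S_v|$ and $|T|$ make this at most $\beta\Delta^{1/2+\eps}|T|=\tfrac{\alpha d_1}{4}|T|$ once $\Delta$ is large enough. Markov's inequality then bounds the number of vertices of $T$ with more than $d_1$ neighbors in $S_e\cup S_v$ by $|E_{\cG^\cup}(T,S_e\cup S_v)|/d_1\le\tfrac{\alpha}{4}|T|$, which is the claim; the constants $\alpha,\beta=\Theta(1)$ are fixed last, consistently with $h_1=\alpha\Delta$ from \Cref{claim:many-heavy}, after which $\Delta$ is taken large.

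The main obstacle is the balancing of exponents of $\Delta$: $d_1=\tfrac{4\beta}{\alpha}\Delta^{1/2+\eps}$ must be simultaneously small enough that this averaging beats $|S_e|+|S_v|$ given only the available lower bound on $|T|$, and large enough that $d_1\le\Delta^\gamma$ so that \Cref{claim:exist-flip} can later invoke $\Delta^\gamma$-resistance to puncturing. Getting the lower bound on $|T|$ strong enough for the first requirement is where the violated vertices must be shown negligible via $\hnorm{\delta_1 x}<\rho_2\hnorm{x}$, so that normal vertices carry essentially all of $x$'s weight and therefore spread over many $V_0$-vertices through heavy edges; this bookkeeping is the technical heart, whereas the $S_e$ bound and the final Markov step are routine.
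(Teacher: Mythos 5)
Your overall skeleton matches the paper's: apply the expander mixing lemma on $\cG^\cup$ to the pair $(T, S_e\cup S_v)$, bound the result by $\beta\Delta^{1/2+\eps}|T|$, and finish with a Markov/averaging step. The gap is in how you bound $|S_e|$. Your handshake bound $|S_e|\le 2\hnorm{x}/w_2$ is polynomially too weak for the exponent balance you correctly identify as the crux. Concretely: the cross term $4\sqrt{\Delta}\sqrt{|T|\,|S_e\cup S_v|}$ is at most $\beta\Delta^{1/2+\eps}|T|$ only if $|S_e\cup S_v|=O(\Delta^{2\eps}|T|)$. The available lower bound on $|T|$ is $|T|\ge |S_n|/(2\Delta)$ (each normal vertex meets $T$ through a heavy edge and $\cG^\cup$ has degree $2\Delta$ on $V_0$; note \Cref{claim:many-heavy} gives an \emph{upper} bound on $|T|$, not a lower one), so what is really needed is $|S_e\cup S_v|=O(\Delta^{2\eps-1}|S_n|)$, a \emph{vanishing} ratio. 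Even granting your weight-based expander-mixing step, you only get $|S_n|\gtrsim \hnorm{x}/w_2$, and paired with $|S_e|\le 2\hnorm{x}/w_2$ this yields $|S_e|=O(|S_n|)$, hence $|S_e\cup S_v|=O(\Delta\,|T|)$ and a cross term of order $\Delta|T|$. That forces $d_1=\Theta(\Delta)$, which exceeds the $\Delta^{\gamma}$-resistance to puncturing ($\gamma<1$) needed downstream in \Cref{claim:exist-flip}; you are off by a factor of $\Delta^{1-2\eps}$.

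The paper closes this gap with a second, count-based application of the expander mixing lemma, on $\cG_1^\square$ to the pair $(S_e,S)$ (\Cref{claim:little-ev}): the definition of exceptional vertices gives the lower bound $|E(S_e,S)|\ge \Delta^{3/2-\eps}|S_e|$, while the global smallness $\hnorm{x}\le\rho_1 n$ makes the EML main term at most $\tfrac13\Delta^{3/2-\eps}|S_e|$, so the error term must dominate and one gets the \emph{relative} bound $|S_e|\le 36|S|/\Delta^{1-2\eps}$. This is what gives $|S_n|\ge(1-o(1))|S|$, hence $|S|\le 4\Delta|T|$ and $|S_e\cup S_v|\le 256\Delta^{2\eps}|T|$, which is exactly the strength needed for the cross term to come out as $64\Delta^{1/2+\eps}|T|$. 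Your treatment of $S_v$ (via $|S_v|\le\hnorm{\delta_1 x}<\rho_2\hnorm{x}$ and the disjointness of the local syndrome blocks) is correct and is indeed the new ingredient relative to \cite{leverrier2022quantum}, but the $S_e$ bound cannot be replaced by a degree count.
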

         Combining these claims implies at least an $\alpha/4$ fraction of vertices satisfy the requirements of \Cref{claim:many-heavy-little-ev}. Since $T$ is non-empty, this must apply to at least one $v \in T$ which gives the desired result.

        The key to proving all three claims lies in showing that the number of vertices in $S_e \cup S_v$ is small compared to $S_n$. We will show that $S_v$ can be upper bounded by taking $\rho_2$ sufficiently small, and $S_e$ can be upper bounded by the expander mixing lemma as in \cite{leverrier2022quantum}. 
        \begin{lemma} [Modification of {\cite[Claim 6]{leverrier2022quantum}}] \label{claim:little-ev}
            The number of exceptional and violated vertices is at most
            \begin{equation}
                |S_e \cup S_v| \le \frac{64}{\Delta^{1-2\eps}} |S|.
            \end{equation}
            On the other hand, the number of normal vertices is at least
            \begin{equation}
                |S_n| \ge (1-\frac{64}{\Delta^{1-2\eps}}) |S|.
            \end{equation}
        \end{lemma}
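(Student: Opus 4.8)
The plan is to prove both displayed inequalities by bounding $|S_v|$ and $|S_e|$ separately against $|S|$ and adding them; the bound on $|S_n|$ is then immediate from the disjoint partition $S = S_v \sqcup S_n \sqcup S_e$. Two elementary facts will be used throughout: since $\cG^\square_1$ is $\Delta^2$-regular and $\cG^\square_{1,x}$ is its edge-induced subgraph on the vertex set $S$, we have $\hnorm{x} \le \frac{\Delta^2}{2}|S|$; and, since every square of $x$ is a $\cG^\square_1$-edge with both endpoints in $S$, any edge count in $\cG^\square_{1,x}$ can be compared against the corresponding count in $\cG^\square_1$ via the expander mixing lemma, using $\lambda(\cG^\square_1) \le 4\Delta$ from \Cref{lem:expander}.

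\emph{Bounding the violated vertices.} This is the one place where $\delta_1 = \cC_1$ being a Tanner map matters, and where the value of $\rho_2$ is spent. By definition of the Tanner construction, the restriction of $\delta_1 x$ to the block of checks owned by a vertex $(g,1)\in V_1$ is obtained by applying the parity-check matrix of $C_1^\perp$ to the local view $x|_{\ell(g,1)}$, so it vanishes exactly when $(g,1)$ is not violated. As distinct vertices own disjoint blocks of checks, each violated vertex contributes at least one $1$ to $\delta_1 x$, hence $|S_v| \le \hnorm{\delta_1 x}$. Feeding in the contradiction hypothesis $\hnorm{\delta_1 x} < \rho_2 \hnorm{x}$ of \Cref{thm:SS-boundary-technical} together with $\hnorm{x} \le \frac{\Delta^2}{2}|S|$ gives $|S_v| < \rho_2 \hnorm{x} \le \frac{\rho_2 \Delta^2}{2}|S| = \frac{28}{\Delta^{1-2\eps}}|S|$, using $\rho_2 = \frac{56}{\Delta^{3-2\eps}}$ — exactly the sense in which ``$S_v$ is upper bounded by taking $\rho_2$ sufficiently small.''

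\emph{Bounding the exceptional vertices.} Here I would run the argument of \cite[Claim 6]{leverrier2022quantum} with essentially no change, since violated vertices never enter it. Each $v \in S_e$ has degree at least $w_2 = \Delta^{3/2-\eps}$ in $\cG^\square_{1,x}$, so the number of squares of $x$ incident to $S_e$ is at least $\frac12 |S_e| w_2$; on the other hand each such square is a $\cG^\square_1$-edge between $S_e$ and $S$, so this number is at most $|E_{\cG^\square_1}(S_e, S)|$. The smallness hypothesis $\hnorm{x} \le \rho_1 n$ places $S$ in the regime (relative to $|V_1| = |G|$, for $\Delta$ large) where, in $|E_{\cG^\square_1}(S_e,S)| \le \frac{\Delta^2}{|G|}|S_e||S| + 4\Delta\sqrt{|S_e||S|}$, the second ``main'' term dominates the first ``error'' term; rearranging $\frac12 |S_e| w_2 \lesssim \Delta\sqrt{|S_e||S|}$ then yields $|S_e| \lesssim \frac{\Delta^2}{w_2^2}|S| = O\!\left(\frac{1}{\Delta^{1-2\eps}}\right)|S|$. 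Matching the explicit constants of \cite{leverrier2022quantum} and folding in the bound on $|S_v|$ from the previous paragraph gives $|S_e \cup S_v| \le \frac{64}{\Delta^{1-2\eps}}|S|$, whence $|S_n| \ge \left(1 - \frac{64}{\Delta^{1-2\eps}}\right)|S|$.

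The main obstacle is this exceptional-vertex step, specifically verifying that the small-set hypothesis $\hnorm{x} \le \rho_1 n$ really does force $S$ into the range where the main term of the expander mixing lemma beats the error term — this is precisely the calibration that pins down $\rho_1 = \frac{\delta}{6\Delta^{3/2+\eps}}$ and where ``for any fixed large enough $\Delta$'' is used, exactly as in \cite{leverrier2022quantum}. The genuinely new ingredient — that $x$ need not be a cocycle, so $S$ carries the extra class $S_v$ — is benign: violated vertices are controlled by the cheap counting argument above and are thereafter simply lumped in with exceptional vertices as ``bad'' vertices for the downstream claims (\Cref{claim:many-heavy}, \Cref{claim:little-ev2}, \Cref{claim:many-heavy-little-ev}).
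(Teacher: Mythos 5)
Your proposal follows the paper's proof almost exactly: the bound $|S_v| < \rho_2\hnorm{x} \le \frac{\rho_2\Delta^2}{2}|S| = \frac{28}{\Delta^{1-2\eps}}|S|$ is precisely the paper's argument, and the expander-mixing bound on $|S_e|$ (lower bound $|E(S_e,S)| \ge \Delta^{3/2-\eps}|S_e|$ from the degree threshold $w_2$, upper bound from \Cref{lem:expander}, error term absorbed) is the same as well, with the same constants $36 + 28 = 64$.

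The one place you are imprecise is the claim that violated vertices ``never enter'' the exceptional-vertex step. The calibration you defer --- showing $\hnorm{x}\le\rho_1 n$ forces $|S|$ small enough relative to $|V_1|=|G|$ that the error term $\frac{\Delta^2}{|G|}|S_e||S|$ is dominated --- requires a \emph{lower} bound on $|S|$-degrees, not the upper bound $\hnorm{x}\le\frac{\Delta^2}{2}|S|$ you list as an elementary fact (that inequality goes the wrong way). The paper gets it from two ingredients: every non-violated vertex of $S$ has degree at least $\delta\Delta$ in $\cG^\square_{1,x}$, because its local view is a \emph{nonzero} codeword of the dual tensor code, whose distance is $\min(d_A,d_B)\ge\delta\Delta$; and the violated vertices, which have no such degree guarantee, must be controlled separately using your $|S_v|$ bound. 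Together these give $|S| \le (\rho_2 + \frac{2}{\delta\Delta})\hnorm{x} \le \frac{4}{\delta\Delta}\hnorm{x} \le \frac{1}{3\Delta^{1/2+\eps}}|G|$, which is exactly the regime you need. So $S_v$ does re-enter the $S_e$ argument --- this is the genuinely new wrinkle relative to \cite[Claim 6]{leverrier2022quantum}, where $x$ is a cocycle and $S_v=\emptyset$ --- but it is handled by the bound you already proved, so the gap is easily closed.
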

        \begin{proof}
        The latter fact follows immediately from the former and recalling that $S_e$, $S_v$, and $S_n$ partition $S$. We now show $|S_v|$ is small.
            Note that by assumption we have that
            \[
            |S_v| \leq \hnorm{\delta_1 x} < \rho_2 \hnorm{x},
            \]
            since  $\delta_1$ is the parity-check matrix of $\cC_1$ and every violated vertex corresponds to at least one violated constraint in $\cC_1$.
            Because $V_1$ has degree $\Delta^2$ in $\cG_1^\square$ (i.e.\ each vertex sits in $\Delta^2$ squares), we also have $\hnorm{x} \le \Delta^2 |S|/2$.
            Altogether this gives 
            \[|S_v| < \rho_2 \Delta^2 |S|/2 = 28|S|/\Delta^{1-2\eps}
            \]
            for our choice of $\rho_2$.

            Now we show $|S_e|$ is small.
            The degree of each non-violated vertex is at least $\delta \Delta$ because the local view corresponds to a non-zero codeword in $C_A \otimes \FF_2^B + \FF_2^A \otimes C_B$. 
            This implies $|S_n| + |S_e| \le \frac{2|x|}{\delta \Delta}$.
            Combining this with our bound on $|S_v|$ gives
            \begin{equation} \label{eq:bound-S}
                |S| = |S_n|+|S_e|+|S_v| \leq (\rho_2 + \frac{2}{\delta \Delta}) |x| \le \frac{4}{\delta \Delta} |x|
            \end{equation}
            where the second inequality holds for large enough $\Delta$ (recalling that $\rho_2 = O(\Delta^{-3+2\eps})$). Applying the expander mixing lemma to $E(S_e, S)$, we then obtain
            \begin{align*}
                |E(S_e,S) | &\le \frac{\Delta^2}{|V_1|}|S_e||S| + 4 \Delta \sqrt{|S_e| |S|} \\
                    &\le \frac{4 \Delta}{\delta |V_1|}|x| |S_e| + 4 \Delta \sqrt{|S_e| |S|} \\
                    &= \frac{1}{3}\Delta^{3/2-\eps}|S_e| + 4 \Delta \sqrt{|S_e| |S|}
            \end{align*}
            where we have used the assumption that $|x| \leq \frac{\delta n}{6 \Delta^{3/2+\eps}}$ and the fact that $|V_1|=|G| = 2n/\Delta^2$. On the other hand, by definition of exceptional vertices we have that $|E(S_e, S)| \ge \Delta^{3/2-\eps}|S_e|$. Combining the inequalities we obtain $|S_e| \le 36 |S|/\Delta^{1-2\eps}$, and plugging in our bound on $|S_v|$ then gives $|S_e \cup S_v| \le 64 |S|/\Delta^{1-2\eps}$ as desired.
        \end{proof}

        Finally, we prove \Cref{claim:nonempty}, \Cref{claim:many-heavy}, and \Cref{claim:little-ev2}, completing the result. The latter two follow essentially as in \cite{leverrier2022quantum} (replacing $S_e$ with $S_e \cup S_v$), but we give the proofs here for completeness.
        \begin{proof}[Proof of \Cref{claim:nonempty}]
        We wish to prove $T$ is non-empty. First, recall that since $x \ne 0$ by assumption, $|S| > 0$. By \Cref{claim:little-ev}, we then have $|S_n| > 0$ as well. We now argue that every vertex in $S_n$ is incident to at least one heavy edge. Since $S_n$ is non-empty, this implies $T$ is non-empty as desired.
        
        To see each vertex in $S_n$ has a heavy edge, recall the local view of each normal vertex is a codeword in $C_A \otimes \FF_2^B + \FF_2^A \otimes C_B$ with weight less than $w=\Delta^{3/2-\eps}$. Because the dual tensor code is $w$-robust, each column (respectively row) is within $\Delta^{1/2-\eps}/\delta$ of a codeword in $C_A$ (respectively $C_B$). Since these codes all have distance at least $\delta \Delta$, there must be a row or column with at least $\delta \Delta - \Delta^{1/2-\eps}/\delta$ ones which exactly corresponds to a heavy edge. We note this fact also implies the total number of heavy edges is at least $|S_n|$, which will be useful later on.
        \end{proof}
        \begin{proof} [Proof of \Cref{claim:many-heavy}]
        Now that we have confirmed the existence of $T$, we want to show it is incident to many heavy edges. To do so, we'll argue that $T$ is small compared to the number of heavy edges. 
        
        To start, we show that $|T| \le \frac{64}{\delta^2 \Delta} |S|$. The proof is the same as \cite[Claim 11]{leverrier2022quantum}, but we give it here for completeness. First, note that by the expander mixing lemma on $\cG^\cup$ (which is the double cover of a $4 \sqrt{\Delta}$-spectral expander) we have:
            \begin{align*}
            |E(S, T)| &\le \frac{2\Delta}{|G|}|S||T| + 4 \sqrt{\Delta} \sqrt{|S||T|}\\
            &\le \frac{2\Delta^{1/2-\varepsilon}}{3}|T| + 4 \sqrt{\Delta} \sqrt{|S||T|}
            \end{align*}
            where as in \Cref{claim:little-ev} we have again used the fact that
            \[
                        \hnorm{S} \leq \frac{4}{\delta\Delta}\hnorm{x} \leq \frac{2}{3\Delta^{5/2+\varepsilon}}n = \frac{1}{3\Delta^{1/2+\varepsilon}}|G|.
            \]
            On the other hand, since each vertex $v \in T$ is incident to at least one heavy edge $e$ by definition, $v$ (and $e$) are contained in at least $\delta \Delta -\Delta^{1/2-\eps}/\delta$ squares in $x$. Since each of these contains an additional (unique) edge incident to $v$, we also have the following lower bound
            \[
            |E(S, T)| \ge (\delta \Delta -\Delta^{1/2-\eps}/\delta) |T|.
            \]
            Combining these inequalities one can check that $|T| \le \frac{64}{\delta^2 \Delta} |S|$ for large enough $\Delta$ as desired.

            With this in hand, recall from the proof of \Cref{claim:nonempty} that the total number of heavy edges in $E_x$ is at least $|S_n| \geq (1-\frac{64}{\Delta^{1-2\eps}}) |S|$ (where the inequality is given by \Cref{claim:little-ev}). Together, this implies the average number of heavy edges incident to a vertex in $T$ is at least:
            \begin{equation}
                \frac{|S_n|}{|T|} \ge \frac{\delta^2 \Delta}{64} \left(1-\frac{64}{\Delta^{1-2\eps}} \right) =: 2 \alpha \Delta.
            \end{equation}
            Finally given that the average degree is at least $2\alpha \Delta$, we want to show there is some fraction of vertices with degree $\ge \alpha \Delta$. This is immediate from recalling that the maximum degree of $\cG^{\cup}$ (and thus $T$) is $2 \Delta$, which implies at least an $\alpha/2$ fraction of vertices in $T$ are incident to at least $\alpha \Delta$ heavy edges as desired.
        \end{proof}
        \begin{proof} [Proof of \Cref{claim:little-ev2}]
            Finally, we want to show there are few edges between $T$ and $S_e \cup S_v$. This follows from the fact that both sets are small, and the underlying graph $\cG^\cup$ is the double cover of a $4\sqrt{\Delta}$-expander on $|G|=\frac{2n}{\Delta^2}$ vertices. In particular, combining the expander mixing lemma with our bounds from \Cref{claim:little-ev} gives:
            \begin{align*}
            E(S_e \cup S_v, T) &\le \frac{2\Delta}{|G|}|S_e \cup S_v||T| + 4 \sqrt{\Delta} \sqrt{|T||S_e \cup S_v|}\\
            &\leq \frac{128\Delta^{2\varepsilon}}{|G|}|S||T| + 32\Delta^{\varepsilon} \sqrt{|T||S|}.
            \end{align*}
            Recall that $\hnorm{S} \leq  \frac{1}{3\Delta^{1/2+\varepsilon}}|G|$. Further, since each normal vertex is adjacent to $T$ and the degree of $T$ is at most $2\Delta$, we have $(1-\frac{64}{\Delta^{1-2\eps}}) |S| \leq |S_n| \le 2 \Delta |T|$, and thus for large enough $\Delta$ that $|S|\le 4\Delta|T|$. Altogether we therefore have:
            \begin{align*}
            E(S_e \cup S_v, T) &\le  \frac{128}{3\Delta^{1/2-\varepsilon}}|T| + 64\Delta^{1/2+\varepsilon} |T| \leq \beta \Delta^{1/2 + \eps} |T|
            \end{align*}
            where $\beta = 64 + \frac{128}{3\Delta}$.
            % Applying \Cref{claim:little-ev}, we have $|S_e \cup S_v| \le \frac{64}{\Delta^{1-2\eps}} |S|$ and $(1-\frac{64}{\Delta^{1-2\eps}}) |S| \leq |S_n| \le 2 \Delta |T|$ (since each normal vertex is adjacent to $T$ and the degree of $T$ is at most $2\Delta$). A simple calculation then shows that
            % \[
            % E(S_e \cup S_v, T) \le \beta \Delta^{1/2 + \eps} |T|
            % \]
            % for $\beta = 64 + 32/\Delta$. 
            As a result, at most an $\alpha/4$ fraction of vertices in $T$ are incident to more than $d_1 = \frac{4\beta}{\alpha} \Delta^{1/2 + \eps}$ vertices of $S_e \cup S_v$ as desired, which completes the proof of \Cref{claim:many-heavy-little-ev} and \Cref{thm:SS-boundary-technical} in turn.
        \end{proof}
    \end{proof}
\end{proof}

Putting everything together, we now prove the existence of an explicit family of SS-HDX.
\begin{proof}[Proof of \Cref{thm:HDX}]
    Fix any $r \in (0,1/2)$, $\varepsilon \in (0,1/2)$, $\gamma \in (1/2+\varepsilon,1)$, and $\delta \in (0,1)$ satisfying $-\delta\log \delta - (1-\delta)\log(1-\delta) < r$, and let $\Delta=\Delta(r,\varepsilon/2,\gamma,\delta) \in \mathbb{N}$ be sufficiently large that the guarantees of \Cref{cor:exist-robust-code-and-dual-robust-code} and \Cref{thm:SS-boundary-technical} are met. Brute forcing over pairs of length $\Delta$ codes $C_A,C_B$ of dimensions $r\Delta$ and $(1-r)\Delta$ respectively, \Cref{cor:exist-robust-code-and-dual-robust-code} promises we can find in $O_{\Delta}(1)$ time codes $C_A,C_B$ such that:
    \begin{enumerate}
        \item $\dim C_A = \floor{r\Delta}$ and $\dim C_B = \Delta-\dim C_A$,
        \item The distances of $C_A, C_B, C_A^\perp, C_B^\perp$ are all
        at least $\delta \Delta$,
        \item Both dual tensor codes $C_0^\perp = (C_A\otimes C_B)^\perp$ and
        $C_1^\perp=(C_A^\perp \otimes C_B^\perp)^\perp$ are $\Delta^{3/2-\eps/2}$-robust
        with $\Delta^\gamma$-resistance to puncturing.
        \item $C_A,C_B,C_A^\perp$, and $C_B^\perp$ have generator matrices where every row and column have at least two ones.
    \end{enumerate}
Following the construction and the discussion earlier this section, the Tanner maps resulting from these codes and the explicit left-right Cayley complexes of \cite{dinur2021} give an explicit family of chain complexes with degree between $3$ and $2 \Delta^2$ and non-trivial co-homology. Furthermore each individual complex in the family satisfies the requirements of \Cref{thm:SS-boundary-technical} in both directions, so by symmetry the complexes are $(\rho_1,\rho_2)$-small-set HDX for $\rho_1 = \frac{\delta}{6 \Delta^{3/2 + \eps}}$ and $\rho_2 = \frac{56}{\Delta^{3 - 2 \eps}}$. This concludes the proof of \Cref{thm:HDX}.
\end{proof}

\begin{remark}\label{remark:local}
We note that the proof of \Cref{thm:SS-boundary-technical} actually gives a stronger guarantee than small-set (co)-boundary expansion. In particular, because the boundary $y$ that reduces the weight of $x$ is supported on a local view of a single vertex, the result actually gives an isoperimetric inequality for the broader class of small, locally minimal functions:
    \begin{equation*}
        \forall x \in \mathbb{F}_2^n ~\text{s.t. $x$ is locally minimal and}~ \hnorm{x} \leq \rho_1 n:  \hnorm{\delta_1 x} \ge \rho_2\hnorm{x},
    \end{equation*}
where $x$ is locally minimal if $\hnorm{x} \leq \hnorm{x+\delta_0(e_v)}$ for all basis vectors $e_v \in \mathbb{F}_2^{m}$.  As discussed in \Cref{sec:related-work}, this stronger isoperimetric inequality has seen prior use in the topological HDX literature \cite{kaufman2014ramanujan,evra2016bounded,kaufman2018cosystolic,kaufman2021unique} as well as in recent work on c3-LTCs \cite{lin2022c} and qLDPC codes \cite{lin2022good}.
\end{remark}

\section*{Acknowledgements}
The authors thank Noah Fleming, Sam Hopkins, and Russell Impagliazzo for helpful discussion on reductions within the Sum-of-Squares hierarchy, Amy Kanne for helpful discussions on qLDPC codes and \cite{leverrier2022quantum}, and Tali Kaufman for many fruitful discussions on high dimensional expansion. The authors also thank Sam Hopkins, Shachar Lovett, and Anthony Ostuni for helpful comments on an earlier version of the manuscript. 

\bibliographystyle{amsalpha}  
\bibliography{references} 
\newpage
\appendix
\section{Existence of Good Base Codes}
In this section we prove \Cref{cor:exist-robust-code-and-dual-robust-code}, the existence of base codes $C_A$ and $C_B$ with the properties needed for our SS-HDX construction in \Cref{thm:HDX}. We restate the result here for convenience.
\begin{corollary}
    Fix $r \in(0,1/2)$, $\eps \in (0,1/2)$, $\gamma\in (1/2+\eps,1)$ and $\delta>0$ satisfying $-\delta\log \delta - (1-\delta)\log(1-\delta) < r$. When $k$ is large enough, there exist codes $C_A$ and $C_B$ of length $\Delta$ such that
    \begin{enumerate}
        \item $\dim C_A = \floor{r\Delta}$ and $\dim C_B = \Delta-\dim C_A$
        \item The distances of $C_A, C_B, C_A^\perp, C_B^\perp$ are all
        at least $\delta \Delta$
        \item Both dual tensor codes $C_0^\perp = (C_A\otimes C_B)^\perp$ and
        $C_1^\perp=(C_A^\perp \otimes C_B^\perp)^\perp$ are $\Delta^{3/2-\eps}$-robust
        with $\Delta^\gamma$-resistance to puncturing
        \item $C_A,C_B,C_A^\perp$, and $C_B^\perp$ have generator matrices where every row and column have at least two ones.
    \end{enumerate}
\end{corollary}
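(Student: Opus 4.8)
The plan is to show that, under the random sampling of \Cref{thm:exist-robust-code}, all four conditions hold simultaneously with probability $1-o_\Delta(1)$, so in particular they are realizable once $\Delta$ is large. Concretely, draw independent uniform matrices $G_A\in\FF_2^{\floor{r\Delta}\times\Delta}$ and $M_B\in\FF_2^{\floor{r\Delta}\times\Delta}$ and set $C_A=\operatorname{rowspace}(G_A)$ and $C_B=\ker(M_B)$. With probability $1-o(1)$ both matrices have full rank, and conditioned on this $C_A$ and $C_B^\perp=\operatorname{rowspace}(M_B)$ are uniformly distributed over dimension-$\floor{r\Delta}$ subspaces of $\FF_2^\Delta$, while $C_B$ and $C_A^\perp=\ker(G_A)$ are uniform over dimension-$(\Delta-\floor{r\Delta})$ subspaces; this is Condition 1. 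Conditions 2 and 3 are then \cite[Theorem 17]{leverrier2022quantum} up to the restatement noted in the main text: Condition 2 is the Gilbert--Varshamov bound applied to each of the four random codes --- the binding case, since $r<1/2$, is a code of rate tending to $1-r$, for which $\delta\Delta$-distance holds because the hypothesis gives $-\delta\log\delta-(1-\delta)\log(1-\delta)<r$ --- and Condition 3 for $C_1^\perp=C_A\otimes\FF_2^B+\FF_2^A\otimes C_B$ is exactly \Cref{thm:exist-robust-code}, while for $C_0^\perp=C_A^\perp\otimes\FF_2^B+\FF_2^A\otimes C_B^\perp$ one applies \Cref{thm:exist-robust-code} to the pair $(C_B^\perp,C_A^\perp)$ --- which has the same law, the first code coming from a random generator matrix $M_B$ and the second from a random parity-check matrix $G_A$ --- together with the fact that $w$-robustness with $p$-resistance to puncturing is preserved under transposing the two tensor factors.

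The only genuinely new item is Condition 4, and this is where I would spend the effort. For the rows it is immediate: by Condition 2 every nonzero codeword of each of the four codes has Hamming weight at least $\delta\Delta\ge 2$ for $\Delta$ large, so the rows of \emph{any} generator matrix already have weight at least two. Columns are the real point, and systematic generator matrices fail here since their unit columns have weight exactly one, so a non-systematic choice is needed. For $C_A$ and $C_B^\perp$ I would simply use the sampled matrices $G_A$ and $M_B$ themselves: each column of a uniform $\floor{r\Delta}\times\Delta$ matrix is a uniform vector of $\FF_2^{\floor{r\Delta}}$, so the probability that one of its $\Delta$ columns has weight at most one is at most $\Delta(\floor{r\Delta}+1)2^{-\floor{r\Delta}}=o(1)$ since $\floor{r\Delta}=\Theta(\Delta)$. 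For $C_B$ and $C_A^\perp$, which are not presented by directly-random generator matrices, I would invoke the generic fact that a uniformly random dimension-$k$ subspace of $\FF_2^\Delta$ with $k=\Theta(\Delta)$ admits, with probability $1-o(1)$, a generator matrix all of whose columns have weight at least two. This follows by coupling: a uniform $k\times\Delta$ matrix $M$ has full rank with probability $1-o(1)$, conditioned on which $\operatorname{rowspace}(M)$ is uniform over dimension-$k$ subspaces, while unconditionally $\Pr[\text{some column of }M\text{ has weight}\le 1]\le\Delta(k+1)2^{-k}=o(1)$; hence a $1-o(1)$ fraction of dimension-$k$ subspaces admit such a basis. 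Applying this with $k=\Delta-\floor{r\Delta}$ to $C_B$ and to $C_A^\perp$ gives Condition 4.

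To finish I would union-bound the $O(1)$ failure events collected above --- full rank of $G_A$ and $M_B$, the four Gilbert--Varshamov distance bounds, the two robustness statements from \Cref{thm:exist-robust-code}, and the column-weight bounds for $G_A$, $M_B$, $C_B$, and $C_A^\perp$ --- to conclude that Conditions 1--4 hold simultaneously with probability $1-o_\Delta(1)$; in particular they are satisfiable for all sufficiently large $\Delta$. I expect the main (mild) obstacle to be not any single estimate but the bookkeeping in the previous two paragraphs: one must choose the sampling so that every one of the four codes is \emph{simultaneously} covered by \Cref{thm:exist-robust-code}'s robustness guarantee and is a uniformly random subspace of its dimension, so that the ``random subspace has a good basis'' argument applies to it. The choice ``$C_A$ from a random generator matrix, $C_B$ from a random parity-check matrix'' achieves this precisely because it is symmetric under simultaneously dualizing all four codes and swapping the roles of $A$ and $B$.
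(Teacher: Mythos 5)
Your proposal is correct and follows essentially the same route as the paper's appendix proof: sample $C_A$ from a random generator matrix and $C_B$ from a random parity-check matrix, cite \cite[Theorem 17]{leverrier2022quantum} for the first three conditions, and establish the fourth by noting that a uniform $\Theta(\Delta)\times\Delta$ matrix has no light rows or columns w.h.p.\ and that, conditioned on full rank, it presents a uniformly random subspace --- which transfers the column-weight property to $C_B$ and $C_A^\perp$ exactly as you do. The only (immaterial) difference is that you dispatch the row-weight condition via the distance bound while the paper folds rows and columns into the same Chernoff/union-bound estimate.
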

\begin{proof}
We assume for notational simplicity that $r\Delta$ and $(1-r)\Delta$ are integral (the proof is essentially the same without this assumption). We will argue that all four properties are satisfied with probability going to one (as $\Delta$ becomes large) under some distribution for the generation of $C_A$ and $C_B$. By a union bound, a pair satisfying all properties must then exist for large enough $\Delta$.

Consider the distribution over codes $C_A$ and $C_B$ given by generating $C_A$ by a uniformly random $r\Delta \times \Delta$ generator matrix, and $C_B^\perp$ from an independent uniformly random $r\Delta \times \Delta$ generator matrix. Leverrier and Z\'emor \cite{leverrier2022quantum} prove that the first three conditions occur with probability going to one under this distribution (see \cite[Theorem 17]{leverrier2022quantum}), so we need only show the last condition holds.

This follows easily from a few basic observations. Let $r_0 \in (0,1)$ be any constant. First, observe that conditioned on being full rank, a uniformly random $r_0 \Delta \times \Delta$ generator matrix corresponds to a uniformly random subspace of dimension $r_0\Delta$, and furthermore that such a matrix is full rank with probability going to $1$ as $\Delta$ grows large. Second, note that by a Chernoff and union bound, the probability this random generator matrix has any row or column with less than two ones also quickly goes to zero. This implies that for any fixed $r_0$, as $\Delta$ grows large the probability that a random subspace of dimension $r_0\Delta$ has a generator matrix satisfying condition $4$ goes to $1$.

Since $C_A$ and $C_B^\perp$ are generated by uniformly random $r\Delta \times \Delta$ generator matrices, they clearly satisfy condition 4 with high probability. The trick is then simply to notice that (conditioned on full rank), $C_A^\perp$ and $C_B$ are uniformly random subspaces of dimension $(1-r)\Delta$, and therefore also satisfy condition $4$ with probability going to one by the above observation.
\end{proof}
\end{document}